\documentclass{article}
\usepackage[a4paper, top=1.25in, bottom=1.25in, left=1.25in, right=1.25in]{geometry}
\usepackage[utf8]{inputenc}
\usepackage[usenames,dvipsnames]{xcolor}

\usepackage{array}
\usepackage[font=footnotesize,labelfont=bf]{caption}
\usepackage{textcomp}
\usepackage{stfloats}
\usepackage{url}
\usepackage{verbatim}
\usepackage{graphicx}
\usepackage[noadjust]{cite}
\usepackage[normalem]{ulem}

\usepackage{amsfonts,amssymb,amsthm}
\usepackage{enumitem}
\usepackage{mathtools}
\usepackage{braket}
\usepackage{algorithm,algpseudocode}
\usepackage{siunitx}
\sisetup{group-separator = {,}}
\usepackage{bm}
\usepackage{mleftright}
\usepackage{booktabs}

\usepackage{color}
\usepackage{hyperref}
\hypersetup{colorlinks=true,
    linkcolor=BrickRed,
    citecolor=OliveGreen,
    urlcolor=MidnightBlue}
    
\usepackage{cleveref}

\crefname{figure}{}{}
\crefrangeformat{thm}{#3#1#4--#5#2#6}
\crefrangeformat{exmp}{#3#1#4--#5#2#6}

\usepackage{pgfplots}
\pgfplotsset{compat=1.9}
\usepgfplotslibrary{fillbetween}

\DeclareMathOperator*{\minimize}{min}
\DeclareMathOperator*{\maximize}{max}
\DeclareMathOperator*{\subjto}{subj. to}

\DeclareMathOperator*{\argmin}{arg\,min}

\DeclareMathOperator{\proj}{proj}
\DeclareMathOperator{\tr}{tr}
\DeclareMathOperator{\closure}{cl}
\DeclareMathOperator{\interior}{int}
\DeclareMathOperator{\relinterior}{relint}

\DeclareMathOperator{\domain}{dom}
\DeclareMathOperator{\diag}{diag}

\DeclareMathOperator{\supp}{supp}

\DeclarePairedDelimiterX{\divx}[2]{(}{)}{#1\mspace{1.5mu}\delimsize\|\mspace{1.5mu}#2}
\DeclarePairedDelimiterX{\divy}[2]{(}{)}{#1\mspace{1mu}\delimsize|\mspace{1mu}#2}
\DeclarePairedDelimiterX{\inp}[2]{\langle}{\rangle}{#1, #2}
\DeclarePairedDelimiterX{\norm}[1]{\lVert}{\rVert}{#1}
\DeclarePairedDelimiterX{\abs}[1]{\lvert}{\rvert}{#1}
\DeclarePairedDelimiterX{\bk}[2]{\langle}{\rangle}{#1 \delimsize\vert #2}

\makeatletter
\def\ALG@special@indent{%
    \ifdim\ALG@thistlm=0pt\relax
        \hskip-\leftmargin
    \else
        \hskip\ALG@thistlm
    \fi
}
\newcommand{\Input}[1]{\item[]\noindent\ALG@special@indent \textbf{Input:}\ #1}
\newcommand{\Output}[1]{\item[]\noindent\ALG@special@indent \textbf{Output:}\ #1}
\newcommand{\Indent}[1]{\item[]\noindent\ALG@special@indent \hspace{2.675em} #1}
\makeatother

\newcommand*\wc{{\mkern 2mu\cdot\mkern 2mu}}
\newcommand{\newinf}{\mathop{\mathrm{inf}\vphantom{\mathrm{sup}}}}

\NewDocumentCommand{\grad}{e{_^}}{%
  \mathop{}\!
  \nabla
  \IfValueT{#1}{_{\!#1}}
  \IfValueT{#2}{^{#2}}
}

\newtheorem{thm}{Theorem}[section]
\newtheorem{lem}[thm]{Lemma}
\newtheorem{prop}[thm]{Proposition}
\newtheorem{cor}[thm]{Corollary}
\newtheorem{defn}[thm]{Definition}
\newtheorem{rem}[thm]{Remark}
\newtheorem{exmp}[thm]{Example}

\newcommand{\footremember}[2]{%
    \footnote{#2}
    \newcounter{#1}
    \setcounter{#1}{\value{footnote}}%
}
\newcommand{\footrecall}[1]{%
    \footnotemark[\value{#1}]%
} 

\begin{document}

\title{A Bregman Proximal Perspective on Classical and Quantum Blahut-Arimoto Algorithms}

\author{%
    Kerry He\footremember{monash}{Department of Electrical and Computer System Engineering, Monash University, Clayton VIC 3800, Australia. \url{{kerry.he1, james.saunderson}@monash.edu}} \and James Saunderson\footrecall{monash} \and Hamza Fawzi\footremember{cambridge}{Department of Applied Mathematics and Theoretical Physics, University of Cambridge, Cambridge CB3 0WA, United Kingdom. \url{h.fawzi@damtp.cam.ac.uk}}
}
\date{}

\maketitle

\begin{abstract}
    The Blahut-Arimoto algorithm is a well-known method to compute classical channel capacities and rate-distortion functions. Recent works have extended this algorithm to compute various quantum analogs of these quantities. In this paper, we show how these Blahut-Arimoto algorithms are special instances of mirror descent, which is a type of Bregman proximal method, and a well-studied generalization of gradient descent for constrained convex optimization. Using recently developed convex analysis tools, we show how analysis based on relative smoothness and strong convexity recovers known sublinear and linear convergence rates for Blahut-Arimoto algorithms. This Bregman proximal viewpoint allows us to derive related algorithms with similar convergence guarantees to solve problems in information theory for which Blahut-Arimoto-type algorithms are not directly applicable. We apply this framework to compute energy-constrained classical and quantum channel capacities, classical and quantum rate-distortion functions, and approximations of the relative entropy of entanglement, all with provable convergence guarantees. 
\end{abstract}

\section{Introduction}

Channel capacities, which quantify the maximum rate at which information can be transmitted through a noisy channel, are typically defined using variational expressions that do not admit closed-form solutions. Many important capacities also have the property of being defined as the solutions of convex optimization problems, and therefore convex optimization techniques can be used to efficiently compute these quantities. One of the most well-known algorithms for estimating classical channel capacities is the Blahut-Arimoto algorithm~\cite{arimoto1972algorithm,blahut1972computation}. More recently, several works have extended this method to compute various quantum channel capacities~\cite{nagaoka1998algorithms,osawa2001numerical,li2019blahut,li2019computing, ramakrishnan2020computing}. The classical Blahut-Arimoto algorithm can be interpreted in various ways in terms of existing algorithmic schemes for convex optimization, most generally as a mirror descent algorithm, which is a type of Bregman proximal method (see Section~\ref{sec:related-work}). The goal of this work is to provide a systematic interpretation of the quantum Blahut-Arimoto algorithm and its convergence analysis using tools from convex optimization. Such an interpretation is not only useful from a theoretical point of view, but it also allows us to draw on the literature from convex optimization to effectively deal with more complicated optimization setups encountered in quantum information theory.

First, we relate the convergence analysis of Blahut-Arimoto algorithm to the general notions of relative smoothness and relative strong convexity that have been proposed in~\cite{bauschke2017descent,lu2018relatively,teboulle2018simplified}. While traditional convergence analyses for first-order methods are based on the assumption that the objective function has a Lipschitz-continuous gradient, with stronger bounds on the convergence rates being available if the function is additionally strongly convex, these assumptions are not satisfied for many applications in information theory. The notions of relative smoothness and relative strong convexity have recently been proposed as sufficient conditions to guarantee sublinear and linear convergence rates of Bregman proximal methods. We establish in this paper a direct link between these notions and the convergence analysis of Blahut-Arimoto algorithms.

Second, a limitation with current Blahut-Arimoto algorithms is that they are unable to effectively deal with general linear constraints, which arise in the computation of energy-constrained channel capacities or rate-distortion functions. Existing methods typically dualize the constraints by minimizing the Lagrangian and treating the Lagrange multipliers as constants~\cite{blahut1972computation}, however it is not clear how to choose the Lagrange multipliers to solve for a particular set of constraints. Other works require repeatedly solving a sequence of inner or outer subproblems~\cite{li2019blahut,hayashi2022bregman}, which can be computationally inefficient. Instead, a natural extension of mirror descent to account for general linear constraints is to use a primal-dual hybrid gradient (PDHG) method~\cite{pock2009algorithm,chambolle2011first,chambolle2016ergodic,jiang2022bregman}. This algorithm is a variation of mirror descent which solves saddle-point problems by using alternating mirror descent and ascent steps, and can be applied to linearly constrained convex optimization problems by simultaneously solving for the optimal primal and dual variables of a Lagrangian function.

\subsection{Related Work} \label{sec:related-work}


As a well-established method, many interpretations of the classical Blahut-Arimoto algorithm have been developed. The most well-known interpretation of the algorithm is as an alternating optimization algorithm~\cite{cover1999elements,o1998alternating}. Notably, in~\cite{o1998alternating}, it was shown how the alternating optimization framework unifies several important algorithms in information theory, including the expectation-maximization algorithm (see also~\cite{hayashi2023minimization}). In~\cite{vontobel2008generalization}, the Blahut-Arimoto algorithm is given an interpretation as a method which introduces a ``surrogate'' function which approximates the original objective function, and which is easier to optimize. In particular, the authors show how a suitable surrogate function for the classical channel capacity problem can be found by decomposing the objective function into the difference of two concave functions, then linearizing one part of this decomposition. In~\cite{matz2004information,naja2009geometrical}, it was shown how the Blahut-Arimoto algorithm for classical channel capacities can be interpreted as mirror descent, which is a well-studied type of first-order Bregman proximal method~\cite{nemirovskij1983problem,beck2003mirror,tseng2008accelerated,beck2017first}, and often regarded as a generalization of projected gradient descent. It turns out that the alternating optimization and surrogate function interpretations of the Blahut-Arimoto algorithm are also closely related to Bregman proximal methods. In particular, expectation-maximization was given a mirror descent interpretation in~\cite{kunstner2021homeomorphic}, and the difference-of-convex algorithm was interpreted through the lens of Bregman proximal methods in~\cite{faust2023bregman}. The quantum Blahut-Arimoto algorithm, being more recently developed method, currently lacks the same range of interpretations as its classical counterpart.

Apart from Blahut-Arimoto-style approaches, other algorithms proposed to solve for quantum channel capacities include~\cite{shor2003capacities,hayashi2005qubit,sutter2015efficient}. Alternatively, it was shown in~\cite{chandrasekaran2017relative,fawzi2019semidefinite,fawzi2018efficient} how certain quantum channel capacities could be formulated as instances of the more general class of quantum relative entropy programs, however there are currently no practical algorithms able to solve large-scale problem instances of this class. Several works have applied mirror descent to solve various problems in quantum information theory, including quantum state tomography~\cite{li2019convergence} and minimization of quantum R\'enyi divergences~\cite{you2022minimizing}. However, neither of these works establish convergence rates. Other works which have applied bespoke first-order methods to solve particular problems in quantum information theory include~\cite{zinchenko2010numerical,drusvyatskiy2015projection,knee2018quantum,youssry2019efficient,winick2018reliable,hayashi2022bregman}.

\subsection{Main Results}

In this paper, we extend the mirror descent interpretation of classical Blahut-Arimoto algorithms by~\cite{matz2004information,naja2009geometrical} in several ways. First, we show how the Blahut-Arimoto algorithm in~\cite{ramakrishnan2020computing}, which generalizes classical~\cite{blahut1972computation,arimoto1972algorithm} and quantum~\cite{nagaoka1998algorithms,osawa2001numerical,li2019blahut,li2019computing} Blahut-Arimoto channel capacity algorithms, produces iterates which are identical to those generated by an entropic version of mirror descent when both methods are applied to solving the same optimization problem. Second, we show how certain properties of information measures, which are used to prove convergence of these Blahut-Arimoto algorithms, can instead be interpreted as relative smoothness and strong convexity properties. Using this, we can recover known convergence rates of Blahut-Arimoto algorithms using standard convex analysis techniques. These results are formally presented in Theorem~\ref{thm:ba-md}.

Next, we show how a primal-dual extension of mirror descent, PDHG, allows us to solve problems with general linear constraints. Additionally, we prove ergodic sublinear convergence of PDHG by using relative smoothness analysis. By using PDHG with different kernel functions, we derive new methods to solve problems which Blahut-Arimoto algorithms have previously not been used to solve, or were unable to efficiently solve. Specifically, we show how entropic PDHG can be used to compute energy-constrained classical and quantum channel capacities, as well as classical and quantum rate-distortion functions. We also show that although entropic PDHG is not a suitable method to compute the relative entropy of a quantum resource, PDHG using a negative log-determinant kernel function to compute this quantity results in an algorithm with provable convergence guarantees.

\section{Preliminaries} \label{sec:prelim}
\subsection{Notation}\label{secsub:notation}
Let $\bar{\mathbb{R}}\coloneqq\mathbb{R}\cup\{+\infty\}$ be the set of extended real numbers, let $\mathbb{N}$ be the set of positive integers (excluding zero), let $\mathbb{V}$ represent a finite-dimensional inner product space, let $\mathbb{H}^n$ represent the set of $n\times n$ Hermitian matrices with trace inner product $\inp{X}{Y} = \tr[X^\dag Y]$ where $X^\dag$ is the adjoint of $X$, and let $\preceq$ denote the semidefinite order for Hermitian matrices. Also, let $\mathbb{R}^n_+$ denote the non-negative orthant, $\mathbb{H}^n_+$ the set of positive semidefinite Hermitian matrices, and $\mathbb{R}^n_{++}$ and $\mathbb{H}^n_{++}$ denote the respective interiors. We denote the $n$-dimensional probability simplex as 
\begin{equation*}
    \Delta_n \coloneqq \biggl\{p \in \mathbb{R}^n_+ : \sum_{i=1}^n p_i = 1 \biggr\},
\end{equation*}
and set of $n\times m$ column stochastic matrices as 
\begin{equation*}
    \mathcal{Q}_{n,m} \coloneqq \biggl\{ Q\in\mathbb{R}^{n\times m}_+ : \sum_{i=1}^n Q_{ij} = 1, \forall j=1,\ldots,m \biggr\}.
\end{equation*}
We use $\interior$, $\relinterior$, $\closure$, and $\domain$ to denote the interior of a set, the relative interior (see e.g.~\cite[Section 2.1.3]{boyd2004convex}) of a set, the closure of a set, and the domain of a function, respectively. 

On an $n$-dimensional Hilbert space $\mathcal{H} \cong \mathbb{C}^n$, we denote the set of bounded self-adjoint linear operators as $\mathcal{B}(\mathcal{H}) \cong \mathbb{H}^n$, the set of bounded nonnegative and positive self-adjoint linear operators as $\mathcal{B}(\mathcal{H})_+ \cong \mathbb{H}^n_+$ and $\mathcal{B}(\mathcal{H})_{++} \cong \mathbb{H}^n_{++}$, respectively, and the set of density matrices as 
\begin{equation*}
    \mathcal{D}(\mathcal{H}) \coloneqq \{\rho \in \mathcal{B}(\mathcal{H})_+ : \tr[\rho] = 1\}.
\end{equation*}
We will also define $\mathcal{D}(\mathcal{H})_{++}\coloneqq\relinterior\mathcal{D}(\mathcal{H})=\{\rho \in \mathcal{B}(\mathcal{H})_{++} : \tr[\rho] = 1\}$. Where required, we will use subscripts to clarify that a quantum state exists in a particular system. For instance, we write $\rho_A$ to indicate the density matrix is in system $\mathcal{H}_A$. For a bipartite system $\rho_{AB}\in\mathcal{D}(\mathcal{H}_A\otimes\mathcal{H}_B)$, the partial trace over $\mathcal{H}_A$ (see, e.g.,~\cite[Definition 4.3.4]{wilde2017quantum}) is denoted $\tr_A(\rho_{AB})$.

The set of completely positive trace preserving (CPTP) quantum channels (see, e.g.~\cite[Definitions 4.4.2 and 4.4.3]{wilde2017quantum}) with input system $\mathcal{H}_A$ and output system $\mathcal{H}_B$ is denoted as $\Phi(\mathcal{H}_A, \mathcal{H}_B)$. For a quantum channel $\mathcal{N}\in\Phi(\mathcal{H}_A, \mathcal{H}_B)$, the Stinespring representation is $\mathcal{N}(\rho)=\tr_E(U\rho U^\dag)$ for some isometry $U: \mathcal{H}_A \rightarrow \mathcal{H}_B \otimes \mathcal{H}_E$, where $\mathcal{H}_E$ denotes an auxiliary environment system. The corresponding complementary channel $\mathcal{N}_{\mathrm{comp}}\in\Phi(\mathcal{H}_A, \mathcal{H}_E)$ is defined as $\mathcal{N}_{\mathrm{comp}}(\rho) = \tr_B(U\rho U^\dag)$. For any linear operator $\mathcal{A}:\mathbb{V}\rightarrow\mathbb{V}'$ between inner product spaces, the corresponding adjoint operator $\mathcal{A}^\dag:\mathbb{V}'\rightarrow\mathbb{V}$ is defined to satisfy $\inp{y}{\mathcal{A}(x)} = \inp{\mathcal{A}^\dag(y)}{x}$ for all $x\in\mathbb{V}$ and $y\in\mathbb{V}'$. The norm induced by the inner product of a vector space $\mathbb{V}$ is denoted as $\norm{\wc}_2 \coloneqq \sqrt{\inp{\wc}{\wc}}$ (i.e., Euclidean norm for $\mathbb{R}^n$, Frobenius norm for $\mathbb{H}^n$). The identity matrix is denoted as $\mathbb{I}$, and the all ones vector is denoted $\bm{1}$. The largest and smallest eigenvalues of $A\in\mathbb{H}^n$ are denoted as $\lambda_{\mathrm{max}}(A)$ and $\lambda_{\mathrm{min}}(A)$, respectively. We denote the $j$-th column of a matrix $A\in\mathbb{R}^{n\times m}$ as $A_j$.

\subsection{Classical and Quantum Entropies}\label{subsec:entropy}

The Shannon entropy for a random variable $X$ with probability distribution $x\in\Delta_n$ is 
\begin{equation}
    H(x)\coloneqq-\sum_{i=1}^n x_i \log(x_i),
\end{equation}
and the Kullback-Leibler (KL) divergence between random variables $X$ and $Y$ with probability distributions $x,y\in\Delta_n$ is 
\begin{equation}
    H\divx{x}{y}\coloneqq\sum_{i=1}^n x_i \log(x_i / y_i).
\end{equation}
For a joint distribution $P\in\Delta_{n\times m}$ on random variables $X$ and $Y$ and marginal distribution $p\in\Delta_m$ on $Y$ where $p_j=\sum_{i=1}^nP_{ij}$ for all $j$, the classical conditional entropy is 
\begin{equation}
    H\divy{X}{Y}_P \coloneqq H(P) - H(p).
\end{equation}
Similarly, the von Neumann entropy for density matrix $\rho\in\mathcal{D}(\mathcal{H})$ is
\begin{equation}
    S(\rho)\coloneqq-\tr[\rho \log(\rho)],
\end{equation}
and the quantum relative entropy between quantum states $\rho,\sigma\in\mathcal{D}(\mathcal{H})$ is
\begin{equation}
    S\divx{\rho}{\sigma}\coloneqq\tr[\rho (\log(\rho) - \log(\sigma))].
\end{equation}
For a bipartite density matrix $\rho\in\mathcal{D}(\mathcal{H}_A\otimes\mathcal{H}_B)$, the quantum conditional entropy is
\begin{equation}
    S\divy{A}{B}_{\rho}\coloneqq S(\rho) - S(\tr_A(\rho)).
\end{equation}
These entropies and divergences are defined using the convention that $0\log(0)=0$ and $-x\log(0)=+\infty$ for all $x>0$. For pairs of quantum states, this implies that $S\divx{\rho}{\sigma}=+\infty$ when $\supp \rho \nsubseteq \supp \sigma$~\cite{watrous2018theory}, where $\supp$ denotes the support of a matrix.

It is well known that Shannon and von Neumann entropy are concave functions. The KL divergence and quantum relative entropy are jointly convex on their two respective arguments~\cite[Corollary 11.9.2]{wilde2017quantum}. Additionally, classical and quantum conditional entropies are concave with respect to $P$ and $\rho$, respectively~\cite[Corollary 7.27]{holevo2019quantum}.

\section{Mirror Descent and Relative Smoothness} \label{sec:mirror-descent}

For a convex function $f:\mathbb{V}\rightarrow\bar{\mathbb{R}}$ and convex set $\mathcal{C}\subseteq\closure\domain f$, we consider the constrained convex optimization problem
\begin{equation}
    \minimize_{x \in \mathcal{C}} \, f(x). \label{eqn:constr-min}
\end{equation}
Throughout this paper, unless otherwise stated, we assume that the domain of $f$ has non-empty interior, and that $f$ is differentiable on the interior of its domain. Many problems in quantum information theory can be cast in the form~\eqref{eqn:constr-min}, where the decision variable is a density matrix. As such, the dimensions of these problems typically scale exponentially with the number of qubits, making it desirable to solve these problems using first-order methods. However, gradient descent methods inherently impose a Euclidean structure on a problem by measuring distances between points using the Euclidean norm, whereas probability distributions and density matrices are more naturally compared using quantities such as relative entropies. Mirror descent resolves these issues by replacing the Euclidean norm that appears in gradient descent with a Bregman divergence, which adapts it to geometries more suitable for given objectives $f$ and sets $\mathcal{C}$. The Bregman divergence is defined using a special class of kernel functions, which we introduce below.
\begin{defn}[{Legendre function~\cite[Definition 2.1]{teboulle2018simplified}}]
    A function $\varphi: \mathbb{V}\rightarrow\bar{\mathbb{R}}$ is Legendre if it is proper, lower semicontinuous, strictly convex, and essentially smooth.
\end{defn}
Note that essential smoothness of Legendre functions implies that their domains have non-empty interior, and that they are differentiable on the interior of their domains. See, e.g.,~\cite{rockafellar1970convex} for precise definitions of these terms. Using these Legendre functions, we define the Bregman divergence as follows.
\begin{defn}[{Bregman divergence~\cite{bregman1967relaxation}}]
    Consider a Legendre function $\varphi: \mathbb{V}\rightarrow\bar{\mathbb{R}}$. The Bregman divergence $D_\varphi\divx{\wc}{\wc} : \domain\varphi\times\interior\domain\varphi\rightarrow\mathbb{R}$ associated with the kernel function $\varphi$ is 
    \begin{equation}\label{eqn:breg-div}
        D_\varphi\divx{x}{y} \coloneqq \varphi(x) - \mleft(\varphi(y) + \inp{\nabla\varphi(y)}{x - y} \mright).  
    \end{equation}
\end{defn}
The Bregman divergence is not a metric as it is not necessarily symmetric in its two arguments, nor does it satisfy the triangle inequality. However, it satisfies certain desirable properties. From convexity of $\varphi$, it follows that Bregman divergences are non-negative. From strict convexity of $\varphi$, it follows that $D_\varphi\divx{x}{y}=0$ if and only if $x=y$. We introduce a few important Bregman divergences below.
\begin{exmp}\label{exmp:energy}
    The Bregman divergence associated with the energy function $\varphi(x)=\norm{x}_2^2/2$ with domain $\domain\varphi=\mathbb{R}^n$ is the squared Euclidean norm $D_\varphi\divx{x}{y} = \norm{x - y}_2^2/2$.
\end{exmp}
\begin{exmp}\label{exmp:cre}
    The Bregman divergence associated with negative Shannon entropy $\varphi(x)=-H(x)$ with domain $\domain\varphi=\mathbb{R}^n_+$ is the normalized KL divergence $D_\varphi\divx{x}{y}= H\divx{x}{y} - \sum_{i=1}^n \mleft(x_i - y_i\mright)$.
\end{exmp}
\begin{exmp}\label{exmp:qre}
    The Bregman divergence associated with negative von Neumann entropy $\varphi(\rho)=-S(\rho)$ with domain $\domain\varphi=\mathbb{H}^n_+$ is the normalized quantum relative entropy $D_\varphi\divx{\rho}{\sigma}= S\divx{\rho}{\sigma} - \tr[\rho - \sigma]$.
\end{exmp}
\begin{exmp}\label{exmp:cld}
    The Bregman divergence associated with the negative Burg entropy $\varphi(x)=-\sum_{i=1}^n \log(x_i)$ with domain $\domain\varphi=\mathbb{R}^n_{++}$ is the Itakura-Saito distance $D_\varphi\divx{x}{y} = \sum_{i=1}^n (x_i/y_i - \log(x_i/y_i) - 1)$.
\end{exmp}
\begin{exmp}\label{exmp:qld}
    The Bregman divergence associated with the negative log determinant function $\varphi(\rho)=-\log(\det(\rho))$ with domain $\domain\varphi=\mathbb{H}^n_{++}$ is the log determinant divergence $D_\varphi\divx{\rho}{\sigma} = \tr[\rho\sigma^{-1}] - \log(\det(\rho\sigma^{-1})) - n$.
\end{exmp}

\begin{algorithm}
    \caption{Mirror descent}
    \begin{algorithmic}
        \Input{Objective function $f$, Legendre kernel function $\varphi$, primal domain $\mathcal{C}\subseteq\domain\varphi$, step size $t_k>0$}
        \State \textbf{Initialize:} Initial point $x^0\in\relinterior\mathcal{C}$
        \For{$k=0,1,\ldots$}{}
            \State \hfill\llap{%
                    \makebox[\linewidth]{\hfill 
                    $\displaystyle x^{k+1} = \argmin_{x \in \mathcal{C}} \mleft\{ \inp{\nabla f(x^k)}{x} + \frac{1}{t_k} D_\varphi\divx{x}{x^k} \mright\}$
                    \hfill\refstepcounter{equation}\llap{(\theequation)}\label{eqn:mirror-descent}}
                } 
        \EndFor
        \Output{Approximate solution $x^{k+1}$}
    \end{algorithmic}
    \label{alg:mirror-descent}
\end{algorithm}

Using the Bregman divergence, the mirror descent algorithm for solving~\eqref{eqn:constr-min} is outlined in Algorithm~\ref{alg:mirror-descent}. The step sizes $t_k$ can be chosen to be different at every iteration. We present a good choice of step size which guarantees convergence to the optimum later in Proposition~\ref{prop:conv-rate}. The mirror descent algorithm is practical when the mirror descent update~\eqref{eqn:mirror-descent} can be efficiently computed. This is the case for certain choices of $\varphi$ and $\mathcal{C}$. When $\varphi(x)=\norm{x}_2^2/2$, this recovers projected gradient descent
\begin{align}
    x^{k+1} = \proj_\mathcal{C}(x^k - t_k \nabla f(x^k))
\end{align}
where $\proj_\mathcal{C}(y) =\argmin_{x \in \mathcal{C}} \norm{x - y}_2^2$, which can be efficiently computed for certain sets $\mathcal{C}$. When $\varphi(x)=-H(x)$ and $\mathcal{C}=\Delta_n$, then 
\begin{equation}
    x^{k+1}_i = \frac{x^k_i \exp(-t_k \partial_i f(x))}{\sum_{j=1}^n x^k_j \exp(-t_k \partial_j f(x))} \quad \textrm{ for } i=1,\ldots,n,  
    \label{eqn:exp-grad-desc}
\end{equation}
where $\partial_i f$ is the partial derivative of $f$ with respect to the $i$-th coordinate. Similarly, when $\varphi(\rho)=-S(\rho)$ and $\mathcal{C}=\mathcal{D}(\mathcal{H})$, then
\begin{equation}
    \rho^{k+1} = \frac{\exp(\log(\rho^k) - t_k\nabla f(\rho^k))}{\tr[\exp(\log(\rho^k) - t_k\nabla f(\rho^k))]}.
    \label{eqn:mtx-exp-grad}
\end{equation}
These previous two iterates~\eqref{eqn:exp-grad-desc} and~\eqref{eqn:mtx-exp-grad} are sometimes known as exponentiated or entropic mirror descent in literature. 

Essential smoothness of $\varphi$ allows it to act as a kind of barrier to its own domain, and implies that~\eqref{eqn:mirror-descent} is minimized on the interior of the domain of $\varphi$~\cite[Lemma 2.2]{maddison2021dual}. Therefore, certain constraints such as non-negativity can be elegantly encoded into the mirror descent algorithm by using suitable kernel functions such as negative entropies. This also allows mirror descent to avoid the ``stalling'' issues identified in~\cite{winick2018reliable,knee2018quantum,you2022minimizing} where gradients are not well-defined on the boundary of the domain.

\subsection{Relative Smoothness and Strong Convexity}

Convergence analysis for first-order algorithms has typically been based on the assumption that the objective function $f$ is smooth on $\mathcal{C}$ with respect to a norm $\norm{\wc}$
\begin{equation}
    \norm{\nabla f(x) - \nabla f(y)}_* \leq L \norm{x - y}, \quad \forall x,y\in\relinterior\mathcal{C},
\end{equation}
for some $L>0$, where $\norm{\wc}_*$ is the dual norm, and is strongly convex on $\mathcal{C}$
\begin{equation}
    f(x) \geq f(y) + \inp{\nabla f(y)}{x - y} + \frac{\mu}{2} \norm{x - y}^2, \quad \forall x,y\in\relinterior\mathcal{C},
\end{equation}
for some $\mu>0$. When the objective is smooth, first-order methods are typically able to converge monotonically at a sublinear rate, i.e., $O(1/k)$, where $k$ is the iteration index. When the objective is additionally strongly convex, convergence is improved to a linear rate, i.e., $O(\alpha^k)$ for some $\alpha\in(0, 1)$, where $k$ is the iteration index. However, a drawback with convergence analysis using these assumptions is many objective functions in information theory do not satisfy these properties.

Recently, these convexity properties have been extended to relative smoothness and strong convexity~\cite{bauschke2017descent, lu2018relatively, teboulle2018simplified}, which measure convexity relative to any Legendre function. 
\begin{defn}[Relative smoothness]
    A function $f$ is $L$-smooth relative to a Legendre function $\varphi$ on a convex set $\mathcal{C}$ if there exists an $L>0$ such that $L\varphi - f$ is convex on $\relinterior \mathcal{C}$.
\end{defn}
\begin{defn}[Relative strong convexity]
    A function $f$ is $\mu$-strongly convex relative to a Legendre function $\varphi$ on a convex set $\mathcal{C}$ if there exists a $\mu>0$ such that $f - \mu\varphi$ is convex on $\relinterior \mathcal{C}$.
\end{defn}
When $\varphi(x)=\norm{x}^2_2/2$, we recover the standard definitions for smoothness and strong convexity with respect to the Euclidean norm. 

There are many equivalent ways to express relative smoothness and strong convexity, as summarized in~\cite{bauschke2017descent, lu2018relatively}. Specifically, the following equivalent expressions will be useful, which all arise from equivalent conditions for convexity.
\begin{prop}\label{prop:relative-alt}
    Let $f:\mathbb{V}\rightarrow\bar{\mathbb{R}}$ be a convex function such that $f$ is differentiable on $\interior\domain f$. Let $\varphi:\mathbb{V}\rightarrow\bar{\mathbb{R}}$ be a Legendre function. Let $\mathcal{C}\subseteq\closure\domain f \cap \closure\domain\varphi$ be a convex set. The following conditions are equivalent:
    \begin{enumerate}[label=(a-\roman*),leftmargin=30pt]
        \item $L\varphi - f$ is convex on $\mathcal{C}$,
        \item $\inp{\nabla f(x) - \nabla f(y)}{x - y} \leq L (D_\varphi\divx{x}{y} + D_\varphi\divx{y}{x})$ for all $x, y \in \relinterior \mathcal{C}$,
        \item $f(x) \leq f(y) + \inp{\nabla f(y)}{x - y} + LD_\varphi\divx{x}{y}$ for all $x, y \in \relinterior \mathcal{C}$.
    \end{enumerate}
    Similarly, the following conditions are also equivalent:
    \begin{enumerate}[label=(b-\roman*),leftmargin=30pt]
        \item $f - \mu\varphi$ is convex on $\mathcal{C}$,
        \item $\inp{\nabla f(x) - \nabla f(y)}{x - y} \geq \mu (D_\varphi\divx{x}{y} + D_\varphi\divx{y}{x})$ for all $x, y \in \relinterior \mathcal{C}$,
        \item $f(x) \geq f(y) + \inp{\nabla f(y)}{x - y} + \mu D_\varphi\divx{x}{y}$ for all $x, y \in \relinterior \mathcal{C}$.
    \end{enumerate}
\end{prop}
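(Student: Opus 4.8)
The plan is to observe that all six conditions are specializations of three classical equivalent characterizations of convexity of a differentiable function, applied to two auxiliary functions. For the relative smoothness half I would set $g \coloneqq L\varphi - f$, and for the relative strong convexity half $h \coloneqq f - \mu\varphi$. Both are differentiable on $\relinterior\mathcal{C}$, since $f$ and $\varphi$ are, so the standard first-order theory of convexity applies. Concretely, I would first recall (stating it once for a generic differentiable $g$ on $\relinterior\mathcal{C}$) that the following are equivalent: $g$ is convex; $\nabla g$ is monotone, i.e. $\inp{\nabla g(x) - \nabla g(y)}{x-y}\geq 0$ for all $x,y\in\relinterior\mathcal{C}$; and $g$ satisfies the gradient inequality $g(x)\geq g(y) + \inp{\nabla g(y)}{x-y}$ for all $x,y\in\relinterior\mathcal{C}$.

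The bridge between these abstract conditions and the Bregman-divergence expressions in the statement is the identity
\[
    D_\varphi\divx{x}{y} + D_\varphi\divx{y}{x} = \inp{\nabla\varphi(x) - \nabla\varphi(y)}{x-y},
\]
which I would establish first: expanding both divergences via the definition~\eqref{eqn:breg-div}, the $\varphi(x)$ and $\varphi(y)$ terms cancel, leaving exactly the right-hand side. With this in hand, the substitution $g = L\varphi - f$ (so $\nabla g = L\nabla\varphi - \nabla f$) turns the three generic conditions into (a-i)--(a-iii): the convexity statement is literally (a-i); the monotonicity condition, after invoking the identity above, rearranges into (a-ii); and the gradient inequality, after rewriting $L\bigl(\varphi(x) - \varphi(y) - \inp{\nabla\varphi(y)}{x-y}\bigr) = L\,D_\varphi\divx{x}{y}$, rearranges into (a-iii).

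The relative strong convexity half is identical in structure with $h = f - \mu\varphi$ and the inequalities reversed: convexity of $h$ is (b-i), gradient monotonicity of $h$ combined with the same identity is (b-ii), and the gradient inequality for $h$ together with the definition of $D_\varphi\divx{x}{y}$ is (b-iii). All of these reductions are routine sign-bookkeeping once the generic convexity characterizations and the divergence identity are fixed.

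I expect the only point requiring genuine care to be the domain. The classical equivalences between convexity, gradient monotonicity, and the gradient inequality are usually stated on a full-dimensional open convex set, whereas every condition here is quantified over $\relinterior\mathcal{C}$, which is open only relative to the affine hull of $\mathcal{C}$. I would handle this by restricting all functions to that affine hull, on which $\relinterior\mathcal{C}$ is genuinely open and $f$, $\varphi$ are differentiable, and then applying the standard theory there; this is precisely why the quantifiers in the statement range over $\relinterior\mathcal{C}$ and is the main (if minor) obstacle to a fully rigorous argument.
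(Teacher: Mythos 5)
Your proposal is correct and matches the paper's approach: the paper states this result as following directly from "equivalent conditions for convexity" of $L\varphi - f$ and $f - \mu\varphi$ (citing the relative-smoothness literature), which is precisely the reduction you carry out, with the identity $D_\varphi\divx{x}{y} + D_\varphi\divx{y}{x} = \inp{\nabla\varphi(x) - \nabla\varphi(y)}{x-y}$ and the affine-hull restriction supplying the routine details the paper leaves implicit.
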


These properties are useful as they can be used to establish convergence rates for Bregman proximal methods. In particular, we can obtain the following global convergence rates for mirror descent.
\begin{prop}[{\cite[Theorem 4.1 and Proposition 4.1]{teboulle2018simplified}}]\label{prop:conv-rate}
    Consider Algorithm~\ref{alg:mirror-descent} to solve the convex optimization problem~\eqref{eqn:constr-min}. Let $f^*$ represent the optimal value of this problem, and $x^*$ be any corresponding optimal point. If $f$ is $L$-smooth relative to $\varphi$ and $t_k=1/L$ for all $k$, then the sequence $\{f(x^k)\}$ decreases monotonically, and satisfies
    \begin{equation}
        f(x^k) - f^* \leq \frac{L}{k}D_\varphi\divx{x^*}{x^0}, \quad \forall k\in\mathbb{N}.
        \label{eqn:sublin-conv}
    \end{equation}
    If $f$ is additionally $\mu$-strongly convex relative to $\varphi$, then the sequence satisfies
    \begin{equation}
        f(x^k) - f^* \leq \mleft(1 - \frac{\mu}{L}\mright)^k LD_\varphi\divx{x^*}{x^0}, \quad \forall k\in\mathbb{N}.
        \label{eqn:lin-conv}
    \end{equation}
\end{prop}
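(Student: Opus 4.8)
The plan is to combine three ingredients: the descent-lemma form of relative smoothness (condition (a-iii) of Proposition~\ref{prop:relative-alt}), the first-order optimality condition for the subproblem~\eqref{eqn:mirror-descent}, and the Bregman three-point identity
\begin{equation*}
    \inp{\nabla\varphi(y) - \nabla\varphi(z)}{x - y} = D_\varphi\divx{x}{z} - D_\varphi\divx{x}{y} - D_\varphi\divx{y}{z},
\end{equation*}
which follows by directly expanding~\eqref{eqn:breg-div}. Because $\varphi$ is essentially smooth, each minimizer $x^{k+1}$ of~\eqref{eqn:mirror-descent} lies in $\relinterior\mathcal{C}$, so the inequalities of Proposition~\ref{prop:relative-alt} are available at every iterate.

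First I would establish a single master inequality valid for every $x\in\mathcal{C}$. With $t_k = 1/L$, the optimality condition for the strictly convex subproblem~\eqref{eqn:mirror-descent} reads $\inp{\nabla f(x^k) + L(\nabla\varphi(x^{k+1}) - \nabla\varphi(x^k))}{x - x^{k+1}} \geq 0$. Applying the three-point identity with $y = x^{k+1}$ and $z = x^k$ converts this into a bound on $\inp{\nabla f(x^k)}{x^{k+1} - x}$ in terms of $D_\varphi\divx{x}{x^k}$, $D_\varphi\divx{x}{x^{k+1}}$, and $D_\varphi\divx{x^{k+1}}{x^k}$. Adding the relative-smoothness descent inequality $f(x^{k+1}) \leq f(x^k) + \inp{\nabla f(x^k)}{x^{k+1} - x^k} + L D_\varphi\divx{x^{k+1}}{x^k}$ and splitting $x^{k+1} - x^k = (x^{k+1} - x) + (x - x^k)$, the two $L D_\varphi\divx{x^{k+1}}{x^k}$ terms cancel exactly---this is precisely where the choice $t_k = 1/L$ is crucial---leaving
\begin{equation*}
    f(x^{k+1}) \leq f(x^k) + \inp{\nabla f(x^k)}{x - x^k} + L\mleft(D_\varphi\divx{x}{x^k} - D_\varphi\divx{x}{x^{k+1}}\mright).
\end{equation*}

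From this master inequality the three claims follow by choosing $x$ appropriately. Monotonicity comes from setting $x = x^k$, which kills the linear term and leaves $f(x^{k+1}) \leq f(x^k) - L D_\varphi\divx{x^k}{x^{k+1}} \leq f(x^k)$. For the sublinear rate, convexity of $f$ gives $f(x^k) + \inp{\nabla f(x^k)}{x^* - x^k} \leq f^*$, so setting $x = x^*$ yields $f(x^{k+1}) - f^* \leq L(D_\varphi\divx{x^*}{x^k} - D_\varphi\divx{x^*}{x^{k+1}})$; telescoping over $k$ iterations and using monotonicity to lower-bound each summand by $f(x^k) - f^*$ gives~\eqref{eqn:sublin-conv}. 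For the linear rate I would instead substitute the relative strong convexity bound (b-iii), namely $f(x^k) + \inp{\nabla f(x^k)}{x^* - x^k} \leq f^* - \mu D_\varphi\divx{x^*}{x^k}$, into the master inequality to obtain $f(x^{k+1}) - f^* \leq (L - \mu)D_\varphi\divx{x^*}{x^k} - L D_\varphi\divx{x^*}{x^{k+1}}$. Dropping the nonnegative left-hand side gives the contraction $D_\varphi\divx{x^*}{x^{k+1}} \leq (1 - \mu/L)D_\varphi\divx{x^*}{x^k}$, which iterates to a geometric bound on $D_\varphi\divx{x^*}{x^k}$; feeding this back into $f(x^{k+1}) - f^* \leq (L-\mu)D_\varphi\divx{x^*}{x^k}$ produces~\eqref{eqn:lin-conv}.

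The algebraic manipulations are routine; the points requiring genuine care are justifying that the iterates remain in $\relinterior\mathcal{C}$ so that the differentiability and relative-convexity hypotheses of Proposition~\ref{prop:relative-alt} truly apply at each step, and verifying that the optimality condition for~\eqref{eqn:mirror-descent} takes the stated variational-inequality form, which relies on strict convexity of the subproblem guaranteeing a unique interior minimizer. Everything else is bookkeeping around the master inequality.
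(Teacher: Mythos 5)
Your proposal is correct: the master inequality you derive (optimality condition plus the three-point identity plus the relative-smoothness descent lemma, with the $LD_\varphi\divx{x^{k+1}}{x^k}$ terms cancelling at $t_k = 1/L$), followed by the choices $x = x^k$ for monotonicity, $x = x^*$ with convexity for the telescoped sublinear rate, and $x = x^*$ with relative strong convexity for the Bregman contraction and linear rate, is exactly the standard argument. The paper itself gives no proof of this proposition --- it imports the result by citation from Teboulle's simplified analysis --- and your argument is essentially the same proof as in that cited source, so there is nothing to compare beyond noting that you have correctly reconstructed it.
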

These results show that just like gradient descent, mirror descent achieves sublinear convergence when the objective is relatively smooth, and linear convergence when the objective is additionally relatively strongly convex.

Sometimes when analyzing Bregman proximal methods, it can also be useful to assume the kernel function $\varphi$ is strongly convex with respect to a norm (we will use this to analyze the PDHG algorithm in Section~\ref{sec:pdhg}). We can show that all kernel functions introduced in Examples~\cref{exmp:cre,exmp:qre,exmp:cld,exmp:qld} are strongly convex over either the probability simplex or set of density matrices.
\begin{prop}\label{prop:entr-strong}
    Negative Shannon and von Neumann entropy are $1$-strongly convex with respect to the $1$- and trace-norm over $\Delta_m$ and $\mathcal{D}(\mathcal{H})$, respectively. 
\end{prop}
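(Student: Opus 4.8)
The plan is to verify strong convexity through its second-order (Hessian) characterization and then integrate. Since $\Delta_m$ and $\mathcal{D}(\mathcal{H})$ are convex and the two kernels are twice differentiable on their relative interiors (full-support distributions and full-rank states, where the logarithms are smooth), it suffices to establish at every such point $x$ and every admissible direction $h$ the pointwise Hessian bound
\begin{equation*}
    \inp{h}{\nabla^2\varphi(x)[h]} \geq \norm{h}^2,
\end{equation*}
with $\norm{\wc}$ the $1$-norm in the classical case and the trace norm in the quantum case. Integrating this against the integral form of the Taylor remainder along the segment joining any two points of the (convex) set gives $D_\varphi\divx{x}{y}\geq\tfrac{1}{2}\norm{x-y}^2$, i.e.\ $1$-strong convexity, with the bound extending to the boundary by lower semicontinuity of $D_\varphi$. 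Note that on $\Delta_m$ and $\mathcal{D}(\mathcal{H})$ the normalization terms appearing in the Bregman divergences of $-H$ and $-S$ vanish, so these divergences coincide with the KL divergence and the quantum relative entropy; the feature that pins the modulus to exactly $1$ is the constraint $\sum_i x_i=1$ (resp.\ $\tr[\rho]=1$).

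For the classical case, $\varphi(x)=-H(x)$ has $\nabla^2\varphi(x)=\diag(1/x_i)$, so $\inp{h}{\nabla^2\varphi(x)[h]}=\sum_i h_i^2/x_i$. Applying Cauchy--Schwarz to the vectors $(\abs{h_i}/\sqrt{x_i})_i$ and $(\sqrt{x_i})_i$ and using $\sum_i x_i=1$ gives
\begin{equation*}
    \Bigl(\sum_i \abs{h_i}\Bigr)^2 \leq \Bigl(\sum_i \frac{h_i^2}{x_i}\Bigr)\Bigl(\sum_i x_i\Bigr) = \sum_i \frac{h_i^2}{x_i},
\end{equation*}
which is precisely $\inp{h}{\nabla^2\varphi(x)[h]}\geq\norm{h}_1^2$.

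For the quantum case, $\varphi(\rho)=-S(\rho)$ has $\nabla\varphi(\rho)=\log\rho+\mathbb{I}$, so $\nabla^2\varphi(\rho)[H]=D\log(\rho)[H]$. Diagonalizing $\rho$ with eigenvalues $\lambda_i$ and writing $H_{ij}$ for the entries of $H$ in this eigenbasis, the Daleckii--Krein formula (equivalently $D\log(\rho)[H]=\int_0^\infty(\rho+s\mathbb{I})^{-1}H(\rho+s\mathbb{I})^{-1}\,ds$) yields $\inp{H}{\nabla^2\varphi(\rho)[H]}=\sum_{i,j}\tfrac{\log\lambda_i-\log\lambda_j}{\lambda_i-\lambda_j}\abs{H_{ij}}^2$. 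The analytic input is the logarithmic-mean inequality $\tfrac{\log a-\log b}{a-b}\geq\tfrac{2}{a+b}$, which lower-bounds this by $\inp{H}{\Gamma[H]}$, where $\Gamma$ is the positive superoperator whose inverse is the symmetrized product $\Gamma^{-1}[X]=\tfrac12(\rho X+X\rho)$. It remains to show $\inp{H}{\Gamma[H]}\geq\norm{H}_1^2$ for all Hermitian $H$. I would argue this by Fenchel duality: since $\tfrac12\inp{\wc}{\Gamma[\wc]}$ and $\tfrac12\norm{\wc}_1^2$ are closed convex, the inequality holds for all $H$ if and only if the opposite inequality holds between their conjugates, namely $\tfrac12\inp{G}{\Gamma^{-1}[G]}\leq\tfrac12\norm{G}_\infty^2$ for all $G$, where $\norm{\wc}_\infty$ is the operator norm dual to the trace norm. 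By cyclicity of the trace, $\inp{G}{\Gamma^{-1}[G]}=\tr[G^2\rho]$, and since $G^2\preceq\norm{G}_\infty^2\mathbb{I}$ with $\tr[\rho]=1$ we get $\tr[G^2\rho]\leq\norm{G}_\infty^2$, closing the argument.

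The main obstacle is exactly this final quantum step. Unlike the diagonal classical form, the quadratic form $\sum_{ij}(\cdots)\abs{H_{ij}}^2$ is diagonal in the eigenbasis of $\rho$, whereas $\norm{H}_1^2$ depends on the eigenvalues of the full matrix $H$, so a term-by-term comparison is unavailable; the conjugate reformulation is what converts the problem into the trivial bound $\tr[G^2\rho]\leq\norm{G}_\infty^2$. As an independent check, the associated global statement $S\divx{\rho}{\sigma}\geq\tfrac12\norm{\rho-\sigma}_1^2$ (quantum Pinsker) also follows from data processing of the quantum relative entropy under the projective measurement distinguishing $\rho$ and $\sigma$, combined with the classical bound; I would note this as an alternative route.
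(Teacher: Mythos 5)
Your proof is correct, but it takes a genuinely different route from the paper's. The paper disposes of this proposition in one line: on $\Delta_m$ and $\mathcal{D}(\mathcal{H})$ the Bregman divergences of $-H$ and $-S$ coincide with $H\divx{x}{y}$ and $S\divx{\rho}{\sigma}$, so the claim is exactly the classical and quantum Pinsker inequalities, cited from Wilde --- i.e., precisely the route you mention only as an ``independent check'' in your final sentence. You instead prove the statement from first principles through the second-order characterization: Cauchy--Schwarz against $\sum_i x_i = 1$ in the classical case; and in the quantum case the Daleckii--Krein form of the Hessian, the logarithmic--arithmetic mean inequality to pass to the superoperator $\Gamma$, and a Fenchel-conjugate flip that converts the desired trace-norm lower bound $\inp{H}{\Gamma[H]} \geq \norm{H}_1^2$ into the elementary estimate $\tr[G^2\rho] \leq \norm{G}_\infty^2$. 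All steps check out, including the integration along segments within the relative interior and the extension to the boundary by continuity of the entropies on the closed sets. What the paper's citation buys is brevity and reliance on a standard reference; what your argument buys is self-containedness (it amounts to a proof of the integrated form of Pinsker rather than an appeal to it), and the intermediate quantum Hessian bound $\nabla^2(-S)(\rho) \succeq \Gamma$ with $\Gamma^{-1}[X] = \tfrac{1}{2}(\rho X + X\rho)$, together with the duality reduction, is of independent interest: it cleanly isolates why a term-by-term comparison fails and why the trace normalization pins the modulus to exactly $1$ in both settings.
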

\begin{proof}
    These are straightforward consequences of the classical and quantum Pinsker inequalities~\cite[Theorems 10.8.1 and 11.9.1]{wilde2017quantum}.
\end{proof}
\begin{prop}\label{prop:log-strong}
    Negative Burg entropy and the negative log determinant function are $1$-strongly convex with respect to the Euclidean and Frobenius norm over $\Delta_m$ and $\mathcal{D}(\mathcal{H})$, respectively. 
\end{prop}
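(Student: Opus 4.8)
The plan is to prove both claims through the second-order characterization of strong convexity. Since each kernel is twice continuously differentiable on the relatively open interior of its domain, and both $\Delta_m$ and $\mathcal{D}(\mathcal{H})$ are convex, it suffices to show that at every feasible point the Hessian quadratic form dominates the squared norm, i.e.\ $\inp{h}{\nabla^2\varphi(x)[h]} \geq \norm{h}_2^2$ for all directions $h$. Integrating this bound along the segment joining any two feasible points (which remains feasible by convexity) then yields the desired inequality with relative parameter $\mu = 1$ via Taylor's theorem with integral remainder, matching the convention used for Proposition~\ref{prop:entr-strong}. The entire content of the argument is that the normalization constraint --- $\sum_i x_i = 1$ for the simplex and $\tr[\rho] = 1$ for density matrices --- forces the relevant eigenvalues to be bounded above by one; without it neither function is strongly convex, since the curvature degenerates as the arguments grow large.

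For the negative Burg entropy $\varphi(x) = -\sum_{i=1}^m \log(x_i)$, I would simply compute $\nabla^2\varphi(x) = \diag(x_1^{-2}, \ldots, x_m^{-2})$. For $x \in \relinterior\Delta_m$ each coordinate satisfies $0 < x_i \leq 1$, so $x_i^{-2} \geq 1$, and hence $\inp{h}{\nabla^2\varphi(x)[h]} = \sum_{i=1}^m h_i^2 / x_i^2 \geq \sum_{i=1}^m h_i^2 = \norm{h}_2^2$, which is exactly the required bound with respect to the Euclidean norm.

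For the negative log determinant $\varphi(\rho) = -\log(\det(\rho))$, the Hessian acts as the bilinear form $\inp{H}{\nabla^2\varphi(\rho)[H]} = \tr[\rho^{-1} H \rho^{-1} H]$; establishing and then bounding this expression is the one genuine computation, and the step I expect to be the main obstacle. Writing the spectral decomposition $\rho = U\diag(\lambda_1,\ldots,\lambda_n)U^\dag$ with $U$ unitary and setting $\tilde{H} = U^\dag H U$, a direct expansion gives $\tr[\rho^{-1} H \rho^{-1} H] = \sum_{i,j} (\lambda_i\lambda_j)^{-1} \abs{\tilde{H}_{ij}}^2$, where the Hermiticity of $H$ turns the off-diagonal products into squared magnitudes. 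Since $\rho \in \relinterior\mathcal{D}(\mathcal{H})$ has $\tr[\rho] = 1$ with $\rho \succ 0$, every eigenvalue obeys $0 < \lambda_i \leq 1$, so $(\lambda_i\lambda_j)^{-1} \geq 1$, and therefore $\inp{H}{\nabla^2\varphi(\rho)[H]} \geq \sum_{i,j} \abs{\tilde{H}_{ij}}^2 = \norm{\tilde{H}}_2^2 = \norm{H}_2^2$, invoking unitary invariance of the Frobenius norm. This yields $1$-strong convexity with respect to the Frobenius norm and completes the argument. Equivalently, one could verify directly that the Itakura--Saito and log determinant divergences of Section~\ref{sec:mirror-descent} dominate half the squared norm on the respective feasible sets, paralleling the Pinsker-based proof of Proposition~\ref{prop:entr-strong}, but the Hessian route above is cleaner.
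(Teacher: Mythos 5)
Your proposal is correct and follows essentially the same route as the paper: both arguments use the second-order characterization of strong convexity and hinge on the fact that the unit-trace constraint forces $0 \prec X \preceq \mathbb{I}$, hence the inverse eigenvalues are at least one. The only cosmetic difference is that you bound the quadratic form $\tr[X^{-1}VX^{-1}V]$ by diagonalizing and expanding in eigenvalues (and treat Burg entropy explicitly), whereas the paper reaches the same bound via the operator inequality $X^{-1}\succeq\mathbb{I}$ and non-negativity of traces of products of positive semidefinite matrices, subsuming the Burg case as the diagonal specialization.
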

\begin{proof}
    See Appendix~\ref{sec:log-strong}.
\end{proof}

\section{Quantum Blahut-Arimoto Algorithms}

Blahut-Arimoto algorithms are a type of alternating optimization method which are tailored towards computing channel capacities and related quantities. We define the class of alternating optimization algorithms below.

\begin{defn}[Alternating optimization]
    Consider a bivariate function $g:\mathbb{V}\times\mathbb{V}'\rightarrow \bar{\mathbb{R}}$, and convex sets $\mathcal{C}\subseteq\mathbb{V}$ and $\mathcal{C}'\subseteq\mathbb{V}'$, which define the following bivariate optimization problem
    \begin{equation}\label{eqn:bivariate}
        \min_{x\in\mathcal{C}} \min_{y\in\mathcal{C}'}\,g(x,y).
    \end{equation}
    The alternating optimization algorithm proceeds by using the alternating iterates
    \begin{subequations}\label{eqn:ba}
        \begin{align}
            y^{k+1} &= \argmin_{y\in\mathcal{C}'}\,g(x^k,y), \label{eqn:ba-a}\\
            x^{k+1} &= \argmin_{x\in\mathcal{C}}\,g(x,y^{k+1}).  \label{eqn:ba-b}
        \end{align}
    \end{subequations}    
\end{defn}

To apply the alternating optimization algorithm on constrained convex optimization problems of the form~\eqref{eqn:constr-min}, Blahut-Arimoto algorithms first find a way to express the objective function as a variational expression, i.e., by defining a bivariate extension $g:\mathbb{V}\times\mathbb{V}'\rightarrow \bar{\mathbb{R}}$ of $f$ such that
\begin{equation}\label{eqn:variational}
    f(x) = \min_{y\in\mathcal{C}'}\,g(x,y),
\end{equation}
where $\mathcal{C}'\subseteq\mathbb{V}'$ is a convex set. Notably, the original Blahut-Arimoto algorithm to compute classical channel capacities and rate-distortion functions was derived by noticing that mutual information can be defined using a variational expression of the form~\eqref{eqn:variational}. Given this, the original problem~\eqref{eqn:constr-min} is equivalent to the bivariate optimization problem~\eqref{eqn:bivariate}, on which we can apply the alternating optimization algorithm. This method is practical when minimizing over $x$ and $y$ independently, as described in~\eqref{eqn:ba}, is more efficient than directly solving the original problem in~\eqref{eqn:constr-min}.

We now introduce a particular Blahut-Arimoto framework by~\cite{ramakrishnan2020computing}, which was designed to compute quantum channel capacities.

\begin{defn}[Blahut-Arimoto algorithm~\cite{ramakrishnan2020computing}]\label{def:qba}
    The Blahut-Arimoto algorithm is a specific instance of the alternating optimization algorithm where $\mathbb{V}=\mathbb{V}'=\mathcal{B}(\mathcal{H})$, $\mathcal{C}=\mathcal{C}'=\mathcal{D}(\mathcal{H})$, and \begin{equation}\label{eqn:rama-bivariate}
        g(x,y) = \inp{x}{\mathcal{F}(y)} + \gamma S\divx{x}{y},
    \end{equation}
    for some constant $\gamma>0$ and continuous function $\mathcal{F}:\mathcal{D}(\mathcal{H})_{++}\rightarrow\mathcal{B}(\mathcal{H})$ which satisfies
    \begin{equation}\label{eqn:rama-mono-a}
        0 \leq \inp{x}{\mathcal{F}(x) - \mathcal{F}(y)} \leq \gamma S\divx{x}{y}, \quad \forall x,y\in\mathcal{D}(\mathcal{H})_{++}.
    \end{equation}
\end{defn}

We note that this framework generalizes all other quantum Blahut-Arimoto algorithms~\cite{nagaoka1998algorithms,osawa2001numerical,li2019blahut,li2019computing}. In the next example, we will also show how this framework generalizes the original classical Blahut-Arimoto algorithms~\cite{arimoto1972algorithm,blahut1972computation} by restricting to diagonal systems (see also~\cite{matz2004information,naja2009geometrical}).

\begin{exmp}
    For a classical channel $Q\in\mathcal{Q}_{n, m}$, recall that the classical channel capacity is given by
    \begin{equation*}
        \maximize_{p\,\in \Delta_m} \quad I_\mathrm{c}(p),
    \end{equation*}
    where $I_\mathrm{c}:\mathbb{R}^m_+\rightarrow\mathbb{R}$ is the classical mutual information, and defined as
    \begin{equation}\label{eqn:cmi}
        I_\mathrm{c}(p)\coloneqq \sum_{j=1}^m p_j H\divx{Q_j}{Qp}.
    \end{equation}
    The classical Blahut-Arimoto algorithm~\cite{blahut1972computation,arimoto1972algorithm} to compute this quantity is a specific instance of the alternating optimization algorithm where $\mathbb{V}=\mathbb{V}'=\mathbb{R}^m$, $\mathcal{C}=\mathcal{C}'=\Delta_m$, and
    \begin{equation*}
        g(x,y) = \inp{x}{\mathcal{F}(y)} + H\divx{x}{y},
    \end{equation*}
    where
    \begin{equation*}
        \mathcal{F}(x) = -(H\divx{Q_1}{Qx}, \ldots, H\divx{Q_m}{Qx}).
    \end{equation*}
    Notably, for this choice of $\mathcal{F}$, the expression $x\mapsto -\inp{x}{\mathcal{F}(x)}$ recovers the classical mutual information $I_\mathrm{c}$. Moreover, the gradient of classical mutual information is $\nabla I_\mathrm{c}(x) = -\mathcal{F}(x) + \bm{1}$. This relationship between $\mathcal{F}$ and the gradient of the objective function is not coincidental. We explore this and its implications in Section~\ref{subsec:equiv}.
\end{exmp}

We now recall the convergence analysis associated with the Blahut-Arimoto framework by~\cite{ramakrishnan2020computing}.

\begin{prop}[{\cite[Lemma 3.2, Theorem 3.3, and Proposition 3.6]{ramakrishnan2020computing}}]\label{prop:ba-conv}
    Consider a Blahut-Arimoto algorithm parameterized by constant $\gamma>0$ and continuous function $\mathcal{F}:\mathcal{D}(\mathcal{H})_{++}\rightarrow\mathcal{B}(\mathcal{H})$. This algorithm solves constrained convex optimization problems~\eqref{eqn:constr-min} with objective function
    \begin{equation}\label{eqn:low-dim-f}
        f:\mathcal{D}(\mathcal{H})_{++}\rightarrow\mathbb{R}, \quad f(x)\coloneqq\inp{x}{\mathcal{F}(x)},
    \end{equation}
    and constraint set $\mathcal{C}=\mathcal{D}(\mathcal{H})$, and has iterates~\eqref{eqn:ba} of the form%
    \begin{subequations}\label{eqn:rama-updates}
        \begin{align}
            y^{k+1} &= x^k \label{eqn:rama-updates-a} \\
            x^{k+1} &= \frac{\exp(\log(y^{k+1}) - \mathcal{F}(y^{k+1})/\gamma)}{\tr[\exp(\log(y^{k+1}) - \mathcal{F}(y^{k+1})/\gamma)]}.  \label{eqn:rama-updates-b}
        \end{align}
    \end{subequations}
    These iterates produce a sequence of function values $\{ g(x^{k}, y^{k}) \}$ which converge sublinearly to the global optimum. If $\mathcal{F}$ additionally satisfies
    \begin{equation}\label{eqn:rama-mono-b}
        \inp{x}{\mathcal{F}(x) - \mathcal{F}(y)} \geq a S\divx{x}{y}, \quad \forall x,y\in\mathcal{D}(\mathcal{H})_{++},
    \end{equation}
    for some $a>0$, then the sequence of function values converge linearly.
\end{prop}

\begin{rem}
    The expression~\eqref{eqn:rama-updates-b} is derived without requiring any assumptions on $\mathcal{F}$. Only the second inequality in~\eqref{eqn:rama-mono-a} is required to establish~\eqref{eqn:rama-updates-a} and the form of the objective function being solved.
\end{rem}

\begin{rem}
    Although we assume $\mathcal{F}:\mathcal{D}(\mathcal{H})_{++}\rightarrow\mathcal{B}(\mathcal{H})$ is only defined on the interior of the set of density matrices, it is often possible to extend inner products with $\mathcal{F}$, e.g., $\inp{x}{\mathcal{F}(x)}$, to the boundary of the set by taking the appropriate limit. For the functions we study in this paper, this usually manifests as the convention $0\log(0)=0$ discussed in Section~\ref{subsec:entropy}. This allows us to extend most of the results in this section to the entire set of density matrices.
\end{rem}

\subsection{Equivalence with Mirror Descent}\label{subsec:equiv}

The astute reader may already recognize many similarities between the Blahut-Arimoto algorithm and mirror descent. However, it is not immediately clear if the two algorithms are identical, or if one algorithm is a generalization of the other. In the following theorem, we show that the Blahut-Arimoto framework is in fact a special instance of mirror descent.

\begin{thm}\label{thm:ba-md}
    Consider the data of a Blahut-Arimoto algorithm, i.e., a continuous function $\mathcal{F}:\mathcal{D}(\mathcal{H})_{++}\rightarrow\mathcal{B}(\mathcal{H})$ satisfying~\eqref{eqn:rama-mono-a} for some $\gamma>0$, i.e.,
    \begin{equation*}
        0 \leq \inp{x}{\mathcal{F}(x) - \mathcal{F}(y)} \leq \gamma S\divx{x}{y},
    \end{equation*}
    for all $x,y\in\mathcal{D}(\mathcal{H})_{++}$. Let us define the function
    \begin{equation}\label{eqn:full-f}
        f:\mathcal{B}(\mathcal{H})_{++}\rightarrow\mathbb{R}, \quad f(x) \coloneqq \biggl\langle x,\,\mathcal{F}\biggl(\frac{x}{\tr[x]}\biggr) \biggr\rangle.
    \end{equation}
    Then
    \begin{enumerate}[label=(\alph*)]
        \item $f$ is a convex function which is differentiable on $\mathcal{B}(\mathcal{H})_{++}$, and is $\gamma$-smooth relative to $-S$ on $\mathcal{D}(\mathcal{H})$. If, additionally, $\mathcal{F}$ satisfies~\eqref{eqn:rama-mono-b} for some $a>0$, i.e.,
        \begin{equation*}
            \inp{x}{\mathcal{F}(x) - \mathcal{F}(y)} \geq a S\divx{x}{y},
        \end{equation*}
        for all $x,y\in\mathcal{D}(\mathcal{H})_{++}$, then $f$ is also $a$-strongly convex relative to $-S$ on $\mathcal{D}(\mathcal{H})$.
        \item When initialized at the same point $x^0\in\mathcal{D}(\mathcal{H})_{++}$, the Blahut-Arimoto iterates~\eqref{eqn:rama-updates}, parameterized by $\mathcal{F}$ and $\gamma$, are identical to entropic mirror descent iterates~\eqref{eqn:mtx-exp-grad} with step size $t_k=1/\gamma$ for all $k$, applied to minimizing the convex function $f$ over $\mathcal{D}(\mathcal{H})$.
    \end{enumerate}
\end{thm}

We will first make a few comments about this theorem and its implications, before presenting its proof at the end of this section.

\begin{rem}
    From Proposition~\ref{prop:ba-conv}, the sublinear and linear convergence rates of Blahut-Arimoto algorithms follow from the conditions~\eqref{eqn:rama-mono-a} and~\eqref{eqn:rama-mono-b}, respectively. Similarly, from Proposition~\ref{prop:conv-rate}, the sublinear and linear convergence rates of mirror descent follow from relative smoothness and strong convexity of $f$, respectively. Theorem~\ref{thm:ba-md} tells us that these convergence results are, in fact, equivalent.
\end{rem}

The function~\eqref{eqn:full-f} can be interpreted as a well-defined extension of~\eqref{eqn:low-dim-f} from the affine space $\mathcal{D}(\mathcal{H})$ onto the full-dimensional vector space $\mathcal{B}(\mathcal{H})$, which is required for the objective function to have a well-defined gradient, and, as a result, for mirror descent to be applicable. Later in Proposition~\ref{prop:ba-md-a}, we will show that the gradient of~\eqref{eqn:full-f} is equal to $\mathcal{F}(x)$ for all $x\in\mathcal{D}(\mathcal{H})_{++}$. This is the key property in establishing equivalence between Blahut-Arimoto and mirror descent iterates, which one can see from a direct comparison of~\eqref{eqn:mtx-exp-grad} and~\eqref{eqn:rama-updates}.

\begin{rem}\label{rem:alt-f}
    We note that this extension~\eqref{eqn:full-f} of~\eqref{eqn:low-dim-f} is not unique, and in fact that entropic mirror descent applied to any function of the form
    \begin{equation*}
        \hat{f}:\mathcal{B}(\mathcal{H})_{++}\rightarrow\mathbb{R}, \quad \hat{f}(x) \coloneqq \biggl\langle x,\,\mathcal{F}\biggl(\frac{x}{\tr[x]}\biggr) \biggr\rangle + g(\tr[x]),
    \end{equation*}
    where $g:\mathbb{R}_{++}\rightarrow\mathbb{R}$ is any convex, differentiable function, yields the same sequence of iterates, and satisfies the same relative smoothness and strong convexity properties on $\mathcal{D}(\mathcal{H})$. These functions satisfy $\nabla \hat{f}(x) = \mathcal{F}(x) + g'(1)\mathbb{I}$ for all $x\in\mathcal{D}(\mathcal{H})_{++}$. Notably, this identity component does not affect entropic mirror descent iterates nor its associated convergence analysis.
\end{rem}

Overall, Theorem~\ref{thm:ba-md} tells us that Blahut-Arimoto algorithms are a special case of mirror descent in two regards. First, they are a particular instance of mirror descent using Shannon or von Neumann entropy as the kernel function $\varphi$, and optimize over the set of probability distributions or density matrices. Second, the objective functions of problems solved by Blahut-Arimoto algorithms are of the form~\eqref{eqn:full-f}, or are closely related to this form (see Remark~\ref{rem:alt-f}).

A converse implication to Theorem~\ref{thm:ba-md} is that, by generalizing $-S$ and $\mathcal{D}(\mathcal{H})$ to any arbitrary Legendre function and convex set, respectively, mirror descent algorithms on relatively smooth objective functions of the form~\eqref{eqn:full-f} admit an interpretation as a Blahut-Arimoto-like alternating optimization algorithm. However, it is unclear whether this provides any additional insight in solving certain problems over the more powerful mirror descent generalization.

We now introduce the main ideas we use for the proof of Theorem~\ref{thm:ba-md}, before presenting its proof. We begin by recalling the subdifferential, followed by two results which are used to characterize the gradient of our desired objective function~\eqref{eqn:full-f}.
\begin{defn}[Subdifferential]
    For a convex function $f:\mathbb{V}\rightarrow\bar{\mathbb{R}}$, the \emph{subdifferential} of $f$ at $x\in\domain f$ is
    \begin{equation*}
        \partial f(x)\coloneqq\{ g\in\mathbb{V} : f(y) \geq f(x) + \inp{g}{y-x}, \forall y\in\domain f \}.
    \end{equation*}
    An element of the subdifferential $g\in\partial f(x)$ is called a \emph{subgradient} of $f$ at $x$.
\end{defn}
\begin{lem}\label{lem:ba-md-b}
    Consider a convex function $f:\mathbb{V}\rightarrow\bar{\mathbb{R}}$. If there exists a single-valued continuous operator $\mathcal{G}:\interior\domain f\rightarrow\mathbb{V}$ such that $\mathcal{G}(x)\in\partial f(x)$ for all $x\in\interior\domain f$, then $f$ is differentiable on $\interior\domain f$ with $\nabla f(x)=\mathcal{G}(x)$ for all $x\in\interior\domain f$.
\end{lem}
\begin{proof}
    We will prove the result by contradiction. Assume that there exists $x\in\interior\domain f$ such that $\partial f(x)$ is set-valued. Then there exists a $g\in\partial f(x)$ such that $\norm{g-\mathcal{G}(x)}\coloneqq\varepsilon>0$. As $\mathcal{G}$ is continuous, there exists a $t>0$ and $z=x+t(g-\mathcal{G}(x))$ such that $z\in\interior\domain f$ and $\norm{\mathcal{G}(z) - \mathcal{G}(x)} < \varepsilon$. We can show that
    \begin{align*}
        \norm{\mathcal{G}(z)-\mathcal{G}(x)} &\geq \frac{1}{\norm{z-x}}\inp{z-x}{\mathcal{G}(z)-\mathcal{G}(x)}\\
        &=\frac{1}{\norm{z-x}} (\inp{z-x}{\mathcal{G}(z) - g} + \inp{z-x}{g - \mathcal{G}(x)})\\
        &\geq \frac{1}{\norm{z-x}} \inp{z-x}{g-\mathcal{G}(x)}\\
        &= \norm{g-\mathcal{G}(x)}\\
        &=\varepsilon.
    \end{align*}
    The first inequality uses the Cauchy–Schwarz inequality, the second inequality uses monotonicity of subdifferentials (see, e.g., \cite{ryu2016primer}), and the second equality uses the fact that the two vectors are positive scalar multiples of each other. 
    
    However, this contradicts our assumption that $z$ was chosen to satisfy $\norm{\mathcal{G}(z) - \mathcal{G}(x)} < \varepsilon$. Therefore, the subdifferential $\partial f$ must be single-valued for all $x\in\interior\domain f$, from which the desired result follows from~\cite[Theorem 25.1]{rockafellar1970convex}.
\end{proof}

\begin{prop}\label{prop:ba-md-a}
    Let $\mathcal{F}:\mathcal{D}(\mathcal{H})_{++}\rightarrow\mathcal{B}(\mathcal{H})$ be a single-valued continuous operator which satisfies
    \begin{equation}\label{eqn:rama-f-cal}
        \inp{y}{\mathcal{F}(y)-\mathcal{F}(x)} \geq 0, \quad \forall x,y\in\mathcal{D}(\mathcal{H})_{++}.
    \end{equation}
    Then the function $f:\mathcal{B}(\mathcal{H})_{++}\rightarrow\mathbb{R}, f(x) = \inp{x}{\mathcal{F}(x/\tr[x])} $ is convex, differentiable on $\mathcal{B}(\mathcal{H})_{++}$, and has gradient $\nabla f(x) = \mathcal{F}(x/\tr[x])$ for all $x\in\mathcal{B}(\mathcal{H})_{++}$.
\end{prop}
\begin{proof}
    Using~\eqref{eqn:rama-f-cal},  for all $x,y\in\mathcal{B}(\mathcal{H})_{++}$ we have that
    \begin{align}
        f(y) = \biggl\langle y,\,\mathcal{F}\biggl(\frac{y}{\tr[y]}\biggr) \biggr\rangle & \geq \biggl\langle y,\,\mathcal{F}\biggl(\frac{x}{\tr[x]}\biggr) \biggr\rangle \nonumber\\
        &= \biggl\langle x,\,\mathcal{F}\biggl(\frac{x}{\tr[x]}\biggr) \biggr\rangle + \biggl\langle y-x,\,\mathcal{F}\biggl(\frac{x}{\tr[x]}\biggr) \biggr\rangle \nonumber\\
        &= f(x) + \biggl\langle y-x,\,\mathcal{F}\biggl(\frac{x}{\tr[x]}\biggr) \biggr\rangle. \label{eqn:rama-proof-c}
    \end{align}
    This implies, by definition, that $\mathcal{F}(x/\tr[x])\in\partial f(x)$ for all $x\in\mathcal{B}(\mathcal{H})_{++}$. Convexity of $f$ follows from the existence of a subgradient for all $x\in\mathcal{B}(\mathcal{H})_{++}$. To show this, let $\bar{z}=\lambda \bar{x} + (1-\lambda) \bar{y}$ for some $\bar{x},\bar{y}\in\mathcal{B}(\mathcal{H})_{++}$ and $\lambda\in[0,1]$. Then using~\eqref{eqn:rama-proof-c} we can show that
    \begin{align*}
        & f(\bar{x}) \geq f(\bar{z}) + \biggl\langle \bar{x}-\bar{z},\,\mathcal{F}\biggl(\frac{\bar{z}}{\tr[\bar{z}]}\biggr) \biggr\rangle = f(\bar{z}) + (1 - \lambda) \biggl\langle \bar{x} - \bar{y},\,\mathcal{F}\biggl(\frac{\bar{z}}{\tr[\bar{z}]}\biggr) \biggr\rangle \\
        \textrm{and} \qquad &f(\bar{y}) \geq f(\bar{z}) + \biggl\langle \bar{y}-\bar{z},\,\mathcal{F}\biggl(\frac{\bar{z}}{\tr[\bar{z}]}\biggr) \biggr\rangle = f(\bar{z}) - \lambda \biggl\langle \bar{x} - \bar{y},\,\mathcal{F}\biggl(\frac{\bar{z}}{\tr[\bar{z}]}\biggr) \biggr\rangle,
    \end{align*}
    where we have used $\bar{x}-\bar{z}=(1-\lambda)(\bar{x}-\bar{y})$ and $\bar{y}-\bar{z}=\lambda(\bar{y}-\bar{x})$ for the equalities. Combining $\lambda$ times the first inequality and $(1-\lambda)$ times the second inequality gives $\lambda f(\bar{x}) + (1-\lambda)f(\bar{y}) \geq f(\bar{z})$, which proves convexity of $f$. From continuity of $\mathcal{F}$ and Lemma~\ref{lem:ba-md-b}, we obtain differentiability of $f$ with $\nabla f(x) = \mathcal{F}(x/\tr[x])$ for all $x\in\mathcal{B}(\mathcal{H})_{++}$, which concludes the proof.
\end{proof}

We now show that the assumptions~\eqref{eqn:rama-mono-a} and~\eqref{eqn:rama-mono-b} are identical to relative smoothness and strong convexity for functions satisfying a homogeneity-like property.

\begin{prop}\label{cor:mono-rel}
    Consider a function $f:\mathbb{V}\rightarrow\bar{\mathbb{R}}$ where $f$ is differentiable on $\interior\domain f$, a Legendre function $\varphi:\mathbb{V}\rightarrow\bar{\mathbb{R}}$, and a convex set $\mathcal{C}\subseteq\closure\domain f \cap \closure\domain\varphi$. Assume that $f$ satisfies the identity
    \begin{equation}\label{eqn:homo}
        f(x) = \inp{x}{\nabla f(x)}, \quad \forall x\in\relinterior\mathcal{C}.
    \end{equation}
    Then $f$ is $L$-smooth relative to $\varphi$ on $\mathcal{C}$ if and only if
    \begin{equation} \label{eqn:one-sided-l-smooth}
        \inp{x}{\nabla f(x) - \nabla f(y)} \leq L D_\varphi\divx{x}{y}, \quad \forall x,y\in\relinterior\mathcal{C}.
    \end{equation}
    Similarly, $f$ is $\mu$-strongly convex relative to $\varphi$ on $\mathcal{C}$ if and only if
    \begin{equation} \label{eqn:one-sided-l-strong}
        \inp{x}{\nabla f(x) - \nabla f(y)} \geq \mu D_\varphi\divx{x}{y}, \quad \forall x,y\in\relinterior\mathcal{C}.
    \end{equation}
\end{prop}
\begin{proof}
    First, assume $f$ is $L$-smooth relative to $\varphi$.  Then for all $x,y\in\relinterior\mathcal{C}$
    \begin{align*}
        \inp{x}{\nabla f(x) - \nabla f(y)} &= f(x) - f(y) - \inp{\nabla f(y)}{x - y}\\
        &\leq LD_\varphi\divx{x}{y},
    \end{align*}
    where the equality uses~\eqref{eqn:homo}, and the inequality uses Proposition~\ref{prop:relative-alt}(a-iii). This recovers~\eqref{eqn:one-sided-l-smooth} as desired. To show the converse, we can simply add together a copy of~\eqref{eqn:one-sided-l-smooth} with another copy of itself with the roles of $x$ and $y$ exchanged to obtain 
    \begin{equation*}
        \inp{x - y}{\nabla f(x) - \nabla f(y)} \leq L (D_\varphi\divx{x}{y} + D_\varphi\divx{x}{y}),
    \end{equation*}
    for any $x,y\in\relinterior\mathcal{C}$. This is precisely the same expression as Proposition~\ref{prop:relative-alt}(a-ii), and therefore implies relative smoothness of $f$, as desired. The proof for relative strong convexity follows from a similar argument. 
\end{proof}
\begin{rem}\label{rem:mono}
    The proof that~\eqref{eqn:one-sided-l-smooth} and~\eqref{eqn:one-sided-l-strong} imply relative smoothness and strong convexity, respectively, do not require the function to satisfy~\eqref{eqn:homo}, and is true for any function with a suitable domain. Therefore, \eqref{eqn:one-sided-l-smooth} and~\eqref{eqn:one-sided-l-strong} are in general stronger assumptions than relative smoothness and strong convexity. For example, \eqref{eqn:one-sided-l-strong} does not hold when $f(x)=\varphi(x)=\norm{x}_2^2/2$ and $L=1$, even though $f$ is $L$-smooth relative to $\varphi$.
\end{rem}

We now provide the proof for Theorem~\ref{thm:ba-md}.

\begin{proof}[Proof of Theorem~\ref{thm:ba-md}]
    For part (a), convexity and differentiability of $f$ follow from Proposition~\ref{prop:ba-md-a}. The same proposition also gives us the identity $\nabla f(x) = \mathcal{F}(x)$ for all $x\in\mathcal{D}(\mathcal{H})_{++}$. Combining this with Proposition~\ref{cor:mono-rel} gives us the desired relative smoothness and strong convexity properties. Part (b) follows from using the same identity to comparing between the Blahut-Arimoto updates~\eqref{eqn:rama-updates} and entropic mirror descent~\eqref{eqn:mtx-exp-grad} with step size $t_k=1/\gamma$ for all $k$.
\end{proof}

\section{Primal-Dual Hybrid Gradient} \label{sec:pdhg}

When computing energy-constrained channel capacities or related quantities such as rate-distortion functions, we are required to optimize over the intersection of the probability simplex (density matrices) together with additional linear constraints. These problems can be expressed as the following variation of the constrained convex optimization problem~\eqref{eqn:constr-min}
\begin{subequations}\label{eqn:constr-constr-min}
    \begin{align}
        \minimize_{x\in\mathcal{C}} \quad & f(x) \\
        \subjto \quad & b_1 - \mathcal{A}_1(x) \in \mathcal{K} \label{eqn:constr-constr-min-b}\\
                \quad &  b_2 - \mathcal{A}_2(x) = 0, \label{eqn:constr-constr-min-c}
    \end{align}
\end{subequations}
where $\mathcal{K}\subseteq\mathbb{V}_1$ is a proper cone (see, e.g., \cite[Section 2.4]{boyd2004convex}), $b_1\in\mathbb{V}_1$, $b_2\in\mathbb{V}_2$, and $\mathcal{A}_1:\mathbb{V}\rightarrow\mathbb{V}_1$ and $\mathcal{A}_2:\mathbb{V}\rightarrow\mathbb{V}_2$ are linear operators. Blahut-Arimoto algorithms have traditionally been unable to effectively handle these additional constraints, and typically dualize the constraints to avoid the issue. One possible method of solving~\eqref{eqn:constr-constr-min} is to use mirror descent. However, recall that mirror descent is only practical when the updates~\eqref{eqn:mirror-descent} can be efficiently computed. For example, for the kernel function $\varphi=-H$, the mirror descent step no longer has the analytic expression~\eqref{eqn:exp-grad-desc} for general linear constraints, and must be computed numerically. 

Alternatively, instead of a primal first-order method, we can consider using a primal-dual first-order method instead. These methods find the primal-dual pair $(x^*, z^*)$ which solves the saddle-point problem
\begin{equation}\label{eqn:saddle-point}
    \newinf_{x\in\mathcal{C}} \sup_{z\in\mathcal{Z}} \quad \mathcal{L}(x, z) \coloneqq f(x) + \inp{z}{\mathcal{A}(x) - b},
\end{equation}
where $\mathcal{L}$ is the Lagrangian and $z$ is the dual variable of~\eqref{eqn:constr-constr-min}, $\mathcal{Z}=\{(\lambda, \nu)\in\mathcal{K}^*\times\mathbb{V}_2\}$ where $\mathcal{K}^*$ is the dual cone of $\mathcal{K}$, and we have defined the concatenated operator $\mathcal{A}=(\mathcal{A}_1, \mathcal{A}_2):\mathbb{V} \rightarrow \mathbb{V}_1 \times \mathbb{V}_2$ and vector $b=(b_1, b_2)\in\mathbb{V}_1 \times \mathbb{V}_2$ to simplify notation. Solving the saddle-point problem~\eqref{eqn:saddle-point} is equivalent to solving the original problem~\eqref{eqn:constr-constr-min} in the sense that $\mathcal{L}(x^*, z^*)=f(x^*)=f^*$, where $f^*$ is the optimal value of~\eqref{eqn:constr-constr-min}. 

An extension of Bregman proximal methods to solve these types of saddle-point problems is PDHG~\cite{pock2009algorithm,chambolle2011first,chambolle2016ergodic,jiang2022bregman}. There are several variations of PDHG. In Algorithm~\ref{alg:pdhg}, we introduce a slight modification of the main algorithms presented in~\cite{chambolle2016ergodic,jiang2022bregman}.

\begin{algorithm*}
    \caption{Primal-dual hybrid gradient}
    \begin{algorithmic}
        \Input{Objective function $f$, Legendre kernel function $\varphi$, primal domain $\mathcal{C}\subseteq\domain\varphi$,}
        \Indent{ linear constraint parameters $\mathcal{A}, b$, step sizes $\theta_k,\tau_k,\gamma_k>0$}
        \State \textbf{Initialize:} Initial points $x^0\in\relinterior\mathcal{C}$, $z^0,z^{-1}\in\mathcal{Z}$.
        \For{$k=0,1,\ldots$}{}
        \begin{subequations}
            \begin{align}
                &\Bar{z}^{k+1}= z^k + \theta_k(z^k - z^{k-1})\label{eqn:pdhg-a}\\
                &\displaystyle x^{k+1} = \argmin_{x \in \mathcal{C}} \mleft\{ \inp{\nabla f(x) + \mathcal{A}^\dag (\Bar{z}^{k+1})}{x} + \frac{1}{\tau_k} D_\varphi\divx{x}{x^k} \mright\}\\
                &\displaystyle z^{k+1} = \argmin_{z \in \mathcal{Z}} \biggl\{ -\inp{z}{\mathcal{A}(x^{k+1}) - b} + \frac{1}{2\gamma_k} \norm{z - z^k}_2^2 \biggr\}
            \end{align}
        \end{subequations}
        \EndFor
        \Output{Approximate primal-dual solution $(x^{k+1}, z^{k+1})$}
    \end{algorithmic}
    \label{alg:pdhg}
\end{algorithm*}

We make a few remarks about the PDHG algorithm. First, note that the primal update is precisely the mirror descent update~\eqref{eqn:mirror-descent} performed on the Lagrangian, and the dual update is a mirror ascent update on the Lagrangian using the energy function as the kernel function (see Example~\ref{exmp:energy}), which recovers projected gradient descent. This dual step is efficient whenever Euclidean projection onto the dual cone $\mathcal{K}^*$ can be done efficiently. This is the case for the non-negative orthant $\mathcal{K}=\mathcal{K}^*\coloneqq\mathbb{R}_+^n$ which yields
\begin{align}
    \proj_{\mathbb{R}_+^n}(x) = x_+ \coloneqq \max(x, 0),
\end{align}
or the positive semidefinite cone $\mathcal{K}=\mathcal{K}^*\coloneqq\mathbb{H}_+^n$
\begin{align}
    \proj_{\mathbb{H}_+^n}(X) = X_+ \coloneqq U\max(\Lambda, 0)U^\dag,
\end{align}
where $X\in\mathbb{H}^n$ has diagonalization $X=U\Lambda U^\dag$, and we have defined $\max(\wc, \wc)$ to be taken elementwise. Second, if we choose $\varphi(\wc)=\norm{\wc}^2_2/2$ and $\theta_k=0$ for all $k$, then we recover the Arrow-Hurwicz method~\cite{arrow1958studies}. Although this algorithm may seem more natural, as we no longer require the seemingly unintuitive step~\eqref{eqn:pdhg-a}, the Arrow-Hurwicz method is only known to converge under fairly restrictive step size assumptions (see, e.g.,~\cite{he2014convergence}), and it is not obvious how to extend these convergence results for the Arrow-Hurwicz method to the Bregman proximal setting. The step~\eqref{eqn:pdhg-a} is an appropriate modification to the algorithm to obtain more desirable convergence behaviours, which we will introduce in Proposition~\ref{prop:pdhg-conv}. Third, constraints are handled in two different ways. Constraints which can be efficiently minimized over in the mirror descent update are encoded in $\mathcal{C}$, whereas all other constraints are numerically handled through the dual updates. 

Like mirror descent, relative smoothness is a core assumption which we use to establish convergence of Algorithm~\ref{alg:pdhg}. Using this, we are able to establish the following ergodic sublinear convergence rate, which is the standard basic convergence result for PDHG algorithms without further assumptions~\cite{chambolle2016ergodic,jiang2022bregman}.

\begin{prop}\label{prop:pdhg-conv}
    Consider Algorithm~\ref{alg:pdhg} to solve the convex optimization problem~\eqref{eqn:constr-constr-min}. If $f$ is $L$-smooth relative to $\varphi$, $\theta_k=1$ for all $k$, and $\tau_k=\tau$ and $\gamma_k=\gamma$ are chosen such that
    \begin{equation}\label{eqn:pdhg-cond}
        \mleft( \frac{1}{\tau} - L \mright) D_\varphi\divx{x}{x'} + \frac{1}{2\gamma} \norm{z - z'}_2^2 \geq \inp{z - z'}{\mathcal{A}(x - x')},
    \end{equation}
    for all $x,x'\in\mathcal{C}$ and $z,z'\in\mathcal{Z}$, then the iterates $(x^k, z^k)$ satisfy
    \begin{align}\label{eqn:pdhg-gap}
        \mathcal{L}(x_{\mathrm{avg}}^k, z) - \mathcal{L}(x, z_{\mathrm{avg}}^k) \leq \frac{1}{k} \mleft( \frac{1}{\tau} D_\varphi\divx{x}{x^0} + \frac{1}{2\gamma}\norm{z - z^0}^2_2 \mright),
    \end{align}
    for all $x\in\mathcal{C}$, $z\in\mathcal{Z}$, and $k\in\mathbb{N}$, where
    \begin{equation}
        x_{\mathrm{avg}}^k = \frac{1}{k} \sum_{i=1}^{k} x^i \quad \textrm{and} \quad z_{\mathrm{avg}}^k = \frac{1}{k} \sum_{i=1}^{k} \Bar{z}^i.
    \end{equation}
\end{prop}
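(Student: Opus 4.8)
The plan is to follow the standard template for ergodic convergence of primal--dual hybrid gradient methods: derive a single per-iteration inequality that upper bounds the ``partial duality gap'' $\mathcal{L}(x^{k+1}, z) - \mathcal{L}(x, z^{k+1})$, for an arbitrary fixed pair $(x,z)\in\mathcal{C}\times\mathcal{Z}$, by a telescoping difference of Bregman and Euclidean proximity terms plus a coupling term generated by the dual extrapolation; then sum over iterations, use condition~\eqref{eqn:pdhg-cond} to discard the leftover coupling, and finally invoke convexity--concavity of $\mathcal{L}$ together with Jensen's inequality to pass to the ergodic averages.

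First I would write down the optimality (three-point) inequalities for the two prox steps. For the primal update, the first-order optimality condition over $\mathcal{C}$ combined with the three-point identity $D_\varphi\divx{x}{x^k} = D_\varphi\divx{x}{x^{k+1}} + D_\varphi\divx{x^{k+1}}{x^k} + \inp{\nabla\varphi(x^{k+1}) - \nabla\varphi(x^k)}{x - x^{k+1}}$ gives, for all $x\in\mathcal{C}$, a bound on $\inp{\nabla f(x^k) + \mathcal{A}^\dag\bar{z}^{k+1}}{x^{k+1} - x}$. Feeding in convexity of $f$ (the subgradient inequality at $x^k$) and the relative smoothness inequality $f(x^{k+1}) \le f(x^k) + \inp{\nabla f(x^k)}{x^{k+1}-x^k} + L\, D_\varphi\divx{x^{k+1}}{x^k}$ from Proposition~\ref{prop:relative-alt}(a-iii) converts this into an inequality controlling $f(x^{k+1}) - f(x) + \inp{\bar{z}^{k+1}}{\mathcal{A}(x^{k+1}-x)}$ in which the coefficient of $D_\varphi\divx{x^{k+1}}{x^k}$ is exactly $\tfrac1\tau - L$. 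For the dual update, the Euclidean kernel gives the analogous three-point inequality bounding $\inp{z - z^{k+1}}{\mathcal{A}x^{k+1} - b}$ by $\tfrac1{2\gamma}(\norm{z - z^k}_2^2 - \norm{z - z^{k+1}}_2^2 - \norm{z^{k+1} - z^k}_2^2)$.

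Next I would add the two inequalities and regroup the Lagrangian so that the dual variable appearing in the coupling is the extrapolated iterate: writing $\mathcal{L}(x^{k+1}, z) - \mathcal{L}(x, z^{k+1})$ in terms of the quantities above leaves precisely the cross term $\inp{z^{k+1} - \bar{z}^{k+1}}{\mathcal{A}(x^{k+1} - x)}$. With $\theta_k = 1$ and the initialization $z^{-1} = z^0$, the extrapolation gives $z^{k+1} - \bar{z}^{k+1} = (z^{k+1} - z^k) - (z^k - z^{k-1})$, so this cross term splits into a piece that telescopes across iterations and a residual $-\inp{z^k - z^{k-1}}{\mathcal{A}(x^{k+1} - x^k)}$. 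Summing over $i = 0, \ldots, k-1$, the proximity terms telescope to $\tfrac1\tau D_\varphi\divx{x}{x^0} + \tfrac1{2\gamma}\norm{z - z^0}_2^2$ minus the nonnegative terminal terms, and the coupling collapses to a single boundary term plus the accumulated residuals. This is where condition~\eqref{eqn:pdhg-cond} does all the work: applied with $(x,x',z,z') = (x^{i+1}, x^i, z^i, z^{i-1})$ it bounds each residual coupling by $(\tfrac1\tau - L)D_\varphi\divx{x^{i+1}}{x^i} + \tfrac1{2\gamma}\norm{z^i - z^{i-1}}_2^2$, which exactly cancels the negative $(\tfrac1\tau - L)D_\varphi\divx{x^{i+1}}{x^i}$ and $\tfrac1{2\gamma}\norm{z^{i+1}-z^i}_2^2$ terms after an index shift, and applied once more at the final index it absorbs the leftover boundary coupling against the terminal proximity terms.

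Finally, having established a bound of the form $\sum_{i=0}^{k-1}[\mathcal{L}(x^{i}, z) - \mathcal{L}(x, \bar{z}^{i})] \le \tfrac1\tau D_\varphi\divx{x}{x^0} + \tfrac1{2\gamma}\norm{z - z^0}_2^2$, I would divide by $k$ and use that $\mathcal{L}(\wc, z)$ is convex while $\mathcal{L}(x, \wc)$ is affine, hence concave, to apply Jensen's inequality, which replaces the averaged function values by $\mathcal{L}(x_{avg}^k, z)$ and $\mathcal{L}(x, z_{avg}^k)$ and yields~\eqref{eqn:pdhg-gap}. The main obstacle is the middle step: carefully bookkeeping the extrapolated coupling term so that it genuinely telescopes under $\theta_k = 1$, and checking that condition~\eqref{eqn:pdhg-cond} is precisely strong enough to dominate both the per-iteration residuals and the single leftover boundary term without leaving any uncompensated positive contribution. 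Matching the off-by-one indexing so that the surviving sum runs over exactly the iterates $\{x^i\}_{i=0}^{k-1}$ and $\{\bar{z}^i\}_{i=0}^{k-1}$ that define the ergodic averages $x_{avg}^k$ and $z_{avg}^k$ also requires care.
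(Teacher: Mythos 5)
Your proposal is correct and follows essentially the same route as the paper's proof (given in Appendix~A via Proposition~\ref{prop:appdx-pdhg-conv}): Bregman three-point inequalities for both prox steps, convexity of $f$ plus relative smoothness via Proposition~\ref{prop:relative-alt}(a-iii), a per-iteration bound on $\mathcal{L}(x^{k+1},z)-\mathcal{L}(x,\bar{z}^{k+1})$ with the extrapolated cross term absorbed by condition~\eqref{eqn:pdhg-cond}, then telescoping and Jensen's inequality — this is exactly the template the paper invokes from~\cite{jiang2022bregman}, with your write-up simply making explicit the telescoping bookkeeping that the paper defers to that reference.
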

\begin{proof}
    We refer to the proof for Proposition~\ref{prop:appdx-pdhg-conv} (in Appendix~\ref{appdx:bcktrack-pdhg}), which provides convergence rates for a generalized algorithm.
\end{proof}

\begin{rem}
    Consider choosing any optimal primal-dual solution $(x, z)=(x^*, z^*)$ for~\eqref{eqn:pdhg-gap}. As $(x^*, z^*)$ is a saddle-point, we know that the left hand side of~\eqref{eqn:pdhg-gap} is non-negative and therefore converges to zero sublinearly. Additionally, this expression equals zero if and only if $(x_{\mathrm{avg}}^k, z_{\mathrm{avg}}^k)$ is itself a saddle-point. We refer the reader to, e.g.,~\cite{chambolle2016ergodic,jiang2022bregman} for additional convergence properties of PDHG under suitable assumptions.
\end{rem}

\begin{rem}\label{ref:pdhg-alt}
    The condition~\eqref{eqn:pdhg-cond} is equivalent to
    \begin{equation}
        \mleft( \frac{1}{\tau} - L \mright) \frac{1}{\gamma} D_\varphi\divx{x}{x'} \geq \frac{1}{2} \norm{\mathcal{A}(x-x')}_2^2,
    \end{equation}
    for all $x,x'\in\mathcal{C}$. In general, it is not clear whether there exist step sizes that satisfy this condition. If however $\varphi$ is $1$-strongly convex with respect to some norm $\norm{\wc}$, then a sufficient condition to satisfy~\eqref{eqn:pdhg-cond} is for the step sizes $\tau$ and $\gamma$ to satisfy
    \begin{equation}\label{eqn:pdhg-cond-alt}
        \mleft( \frac{1}{\tau} - L \mright) \frac{1}{\gamma} \geq \norm{\mathcal{A}}^2,
    \end{equation}
    where
    \begin{equation}
        \norm{\mathcal{A}} = \sup\{ \inp{z}{\mathcal{A}(x)} : \norm{x}\leq1, \norm{z}_2\leq1 \}.
    \end{equation}
    This inequality can always be achieved by choosing $\tau$ and $\gamma$ to be sufficiently small. Therefore, this inequality implies the existence of a suitable step size and provides a way to choose a suitable step size, whenever $\varphi$ is strongly convex with respect to a norm.
\end{rem}

There are a few limitations with the PDHG algorithm. We do not address all of these limitations in this paper, and leave some of them for future work. First, as a primal-dual method, not all primal iterates generated by PDHG are guaranteed to be feasible. Depending on how difficult it is to perform projections into the feasible set, this may make computing an explicit optimality gap difficult. Second, the primal-dual step size ratio has a significant impact on the empirical convergence rates of PDHG. Some works~\cite{malitsky2018first} suggest backtracking methods to adaptively select this ratio, however it is not straightforward how to generalize these heuristics in a way which maintains convergence guarantees. Third, choosing step sizes which satisfy the assumption~\eqref{eqn:pdhg-cond} or~\eqref{eqn:pdhg-cond-alt} can be nontrivial. We detail a backtracking procedure which adaptively chooses these step sizes, and provide convergence results, in Appendix~\ref{appdx:bcktrack-pdhg}.


\section{Applications} \label{sec:app}

In this section, we will show how the mirror descent framework can be used to solve various problems in information theory. We use PDHG with entropic kernel functions to compute energy-constrained channel capacities and rate-distortion functions in Sections~\ref{subsec:cc-pd} and~\ref{sec:rd}, respectively. We use PDHG with a negative log-determinant kernel function to compute the relative entropy of a quantum  resource in Section~\ref{subsec:ree}. These are problems which Blahut-Arimoto algorithms have not been applied to, or are unable to solve efficiently. For each of these applications, we will prove that the objective functions are smooth relative to some suitable kernel function, which subsequently allows us to guarantee convergence rates when applying mirror descent or PDHG when suitable step sizes are used. 

All experiments were run on MATLAB using an Intel i5-11700 CPU with 32GB of RAM. We use the backtracking variation of PDHG introduced in Appendix~\ref{appdx:bcktrack-pdhg}, with backtracking parameters $\alpha=0.75$ and $\Bar{\theta}=1.01$. The ratio of primal and dual step sizes $\kappa\coloneqq\tau_{k}/\gamma_{k}$ was manually tuned for each problem by running PDHG on each problem class across a few values of $\kappa$, and seeing which value resulted in the fastest convergence. These values will be specified for each problem with their results. The PDHG algorithm is terminated when the progress made by the primal and dual iterates slows down below a chosen threshold, as measured by
\begin{equation}\label{eqn:stop-tol}
    \frac{D_\varphi\divx{x^k}{x^{k-1}}}{\tau_k \max\{1, \norm{x^k}_\infty\}} + \frac{\norm{z^k - z^{k-1}}_2^2}{2\gamma_k\max\{1, \norm{z^k}_\infty\}} \leq 10^{-7},
\end{equation}
where $\norm{\wc}_\infty$ represents the maximum absolute value of the elements of the vector or matrix. This heuristic is based on the form~\eqref{eqn:pdhg-gap}, combined with normalizing terms which measure the divergences relative to the magnitudes of the primal and dual variables.

We compare the computation times and absolute tolerances of PDHG against MOSEK~\cite{mosek}, which is a state-of-the-art, off-the-shelf commercial solver which implements an interior-point algorithm. To parse our problems into a form solvable by MOSEK, we use the standard modelling software, CVX~\cite{cvx,gb08}. For quantum problems, CVX uses CVXQUAD~\cite{fawzi2019semidefinite} to approximate matrix logarithms using linear matrix inequalities, which can then be solved as a semidefininte program. We report the total time taken for CVX to model the problem as well as the solve time of MOSEK. Reported optimality gaps are computed either using information from MOSEK, or by running PDHG for a sufficiently high number of iterations if MOSEK fails to solve.

We make a few remarks about how one should interpret the results we present in the following subsections. As PDHG is a first-order method, we are primarily interested in how well we can quickly obtain low- to medium-accuracy solutions for large-scale problems. Therefore, we present results for a range of problem dimensions, and choose a stopping criterion~\eqref{eqn:stop-tol} which achieves a suitable tradeoff between accuracy and computation time. We note that when comparing the computation times between PDHG and MOSEK, that MOSEK implements a second-order method which is known to rapidly converge to high accuracy solutions when sufficiently close to the optimum. Therefore, MOSEK takes roughly the same amount of time to reach medium-accuracy solutions as they do to reach the high-accuracy solutions we report in the results. We also note that we often report that MOSEK fails to solve medium- to large-scale problems due to insufficient computational memory. This is also due to MOSEK using a second-order method, which inherently needs to compute and store much larger matrices compared to first-order methods. Additionally, the semidefinite approximations used by CVXQUAD can have matrix dimensions much larger than the dimensions of the matrix logarithms they approximate. We also remark in advance that some oscillations seen in the convergence plots in Figures~\cref{fig:channel-capacities,fig:rate-distortions,fig:ree} arise from the backtracking procedure in Algorithm~\ref{alg:pdhg-backtrack} slowly increasing the step sizes due to our choice of $\bar{\theta}>1$, until the algorithm detects that it is close to becoming unstable and performs a backtracking step to stabilize the step sizes. Other oscillations are due to the plots showing the absolute optimality gap, and the fact that PDHG does not monotonically converge to the optimal value.

Throughout this section, by using the MATLAB package QETLAB~\cite{qetlab}, random probability distributions are sampled uniformly on the probability simplex, and random density matrices are sampled uniformly according to the Hilbert-Schmidt measure. Random classical channels are generated by independently sampling a random probability distribution for each column of the matrix. Random quantum channels are sampled uniformly according to the Hilbert-Schmidt measure, using the method described by~\cite{kukulski2021generating}.

\subsection{Energy-Constrained Channel Capacities} \label{subsec:cc-pd}

We will consider three types of classical and quantum channel capacities with an arbitrary number of energy (or any other linear) constraints. First, consider the discrete input alphabet $\mathcal{X}=\{ 1, \ldots, m \}$ where each letter is sent according to a probability distribution $p\in\Delta_m$, a classical channel with $m$ inputs and $n$ outputs represented by $Q\in\mathcal{Q}_{n, m}$ such that $Q_{ij}$ is the conditional probability of receiving symbol $i$ given than symbol $j$ was sent, and $l$ energy constraints on the channel represented by $A\in\mathbb{R}^{l \times m}_+$ and $b\in\mathbb{R}^l_+$. The \emph{energy-constrained classical channel capacity} is given by
\begin{subequations}\label{eqn:cc}
    \begin{align}
        \mathmakebox[\widthof{$\subjto$}]{\maximize_{p\,\in \Delta_m}} \quad &I_\mathrm{c}(p) \\
        \subjto \quad & Ap \leq b,
    \end{align}
\end{subequations}
where $I_\mathrm{c}:\mathbb{R}^m_+\rightarrow\mathbb{R}$ is the \emph{classical mutual information}, defined earlier in~\eqref{eqn:cmi}, and has partial derivatives
\begin{equation}
    \frac{\partial I_\mathrm{c}}{\partial p_j} = H\divx{Q_j}{Qp} - 1.
\end{equation}
The classical mutual information can be shown to be concave by expressing it as
\begin{equation}\label{eqn:grad-clas-mut}
    I_\mathrm{c}(p) = H(Qp) - \sum_{j=1}^m p_j H(Q_j),
\end{equation}
then noting that Shannon entropy is concave, and the second term is linear in $p$. As $\Delta_m$ is a convex set and all energy constraints are linear, we conclude that~\eqref{eqn:cc} is a convex optimization problem.

Again consider the discrete input alphabet $\mathcal{X}$ with input distribution $p$ and energy constraints represented by $A$ and $b$, however now we are interested in communicating the input signal through a quantum channel $\mathcal{N} \in \Phi(\mathcal{H}_A, \mathcal{H}_B)$. To do this, each letter is represented by a quantum state $i \mapsto \ket{i}\!\bra{i}$ for an orthonormal basis $\{\ket{i}\}$ defined on $\mathcal{H}_X$. An encoder $\mathcal{E} \in \Phi(\mathcal{H}_X, \mathcal{H}_A)$ is used to map these classical states to a fixed quantum state alphabet $\ket{i}\!\bra{i}\mapsto\rho_i$. A decoder, which can perform joint output measurements, is then used to recover the original message. The maximum information that can be transmitted through this system is known as the \emph{energy-constrained classical-quantum (cq) channel capacity}~\cite{schumacher1997sending, holevo1998capacity}, and is given by
\begin{subequations}\label{eqn:cq}
    \begin{align}
        \mathmakebox[\widthof{$\subjto$}]{\maximize_{p\,\in \Delta_m}} \quad &\chi(p) \\
        \subjto \quad & Ap \leq b,
    \end{align}
\end{subequations}
where $\chi:\mathbb{R}^m_+\rightarrow\mathbb{R}$ is the \emph{Holevo information} defined as
\begin{equation}
    \chi(p) \coloneqq S\biggl(\mathcal{N}\biggl(\sum_{j=1}^m p_j\rho_j\biggr)\biggr) - \sum_{j=1}^m p_jS(\mathcal{N}(\rho_j)).
\end{equation}
To find the partial derivative of $\chi$ with respect to $p_j$, we note that from linearity of the quantum channel $\mathcal{N}$, the derivative of the first term can be found from a straightforward application of Lemma~\ref{lem:dir-deriv}(b) (in Appendix~\ref{appdx:grad}) to find $\mathsf{D}S(\mathcal{N}(\rho))[\mathcal{N}(\rho_j)]$, where $\rho=\sum_{j=1}^mp_j\rho_j$. The second term is linear in $p_j$. Therefore, 
\begin{align}
    \frac{\partial \chi}{\partial p_j} &= -\tr[\mathcal{N}(\rho_j)(\log(\mathcal{N}(\rho)) + \mathbb{I})] - S(\mathcal{N}(\rho_j)) \nonumber\\
    &= \tr[\mathcal{N}(\rho_j)(\log(\mathcal{N}(\rho_j)-\log(\mathcal{N}(\rho)) - \mathbb{I})] \nonumber\\
    &= S\divx{\mathcal{N}(\rho_j)}{\mathcal{N}(\rho)} - 1, \label{eqn:holevo-grad}
\end{align}
where the last equality uses the fact that $\mathcal{N}$ is trace preserving and $\rho_j$ is a density matrix with unit trace. Similar to classical mutual information, Holevo information can be shown to be concave by noting that von Neumann entropy is concave and the second term is linear. As $\Delta_m$ is a convex set and all energy constraints are linear, \eqref{eqn:cq} is a convex optimization problem.

Now consider the case where the quantum state alphabet is not fixed, input states can be entangled, and the sender and receiver share an unlimited number of entangled states prior to sending messages through the channel. Energy constraints are now represented by $l$ positive observables $A_i\in\mathcal{B}(\mathcal{H}_A)_+$ and upper bounds $b_i\geq0$. The maximum information that can be transmitted through this system is the \emph{energy-constrained entanglement-assisted (ea) channel capacity}~\cite{bennett2002entanglement}, and is given by
\begin{subequations}\label{eqn:ea}
    \begin{align}
        \maximize_{\rho\,\in \mathcal{D}(\mathcal{H}_A)} \quad &I_\mathrm{q}(\rho) \\
        \mathmakebox[\widthof{$\maximize_{\rho\,\in \mathcal{D}(\mathcal{H}_A)}$}]{\subjto} & \tr[A_i \rho] \leq b_i, \qquad \textrm{ for } i=1,\ldots,l,
    \end{align}
\end{subequations}
where $I_\mathrm{q}:\mathcal{B}(\mathcal{H}_A)_+\rightarrow\mathbb{R}$ is the \emph{quantum mutual information} defined as
\begin{equation}
    I_\mathrm{q}(\rho) \coloneqq S(\rho) + S(\mathcal{N}(\rho)) - S(\mathcal{N}_{\mathrm{comp}}(\rho)),
\end{equation}
where $\mathcal{N}_{\mathrm{comp}}$ is the complementary channel of $\mathcal{N}$ defined in Section~\ref{secsub:notation}. To find the gradient of quantum mutual information, we again recognize that $\mathcal{N}$ and $\mathcal{N}_{\mathrm{comp}}$ are linear operators, and apply Corollaries~\ref{cor:grad-tr-f} and~\ref{cor:grad-tr-f-lin} (in Appendix~\ref{appdx:grad}) to find
\begin{align}
    \nabla I_\mathrm{q}(\rho) &= -\log(\rho) -\mathcal{N}^\dag(\log(\mathcal{N}(\rho)) + \mathbb{I}) + \mathcal{N}_{\mathrm{comp}}^\dag(\log(\mathcal{N}_{\mathrm{comp}}(\rho)) + \mathbb{I}) - \mathbb{I} \nonumber\\
    &= -\log(\rho) - \mathcal{N}^\dag(\log(\mathcal{N}(\rho))) + \mathcal{N}_{\mathrm{comp}}^\dag(\log(\mathcal{N}_{\mathrm{comp}}(\rho))) - \mathbb{I}, \label{eqn:qmi-grad}
\end{align}  
where the second equality uses the fact that the adjoint of trace preserving linear operators are unital~\cite[Proposition 4.4.1]{wilde2017quantum}. It can be shown that quantum mutual information is concave by expressing it as 
\begin{equation}
    I_\mathrm{q}(\rho) = S\divy{B}{E}_{U\rho U^\dag} + S(\mathcal{N}(\rho)),
\end{equation}
where $U$ is the isometry corresponding to the Stinespring representation $\mathcal{N}$, and noting that both quantum conditional entropy and von Neumann entropy are concave in $\rho$. As $\mathcal{D}(\mathcal{H}_A)$ is a convex set and all energy constraints are linear, it follows that~\eqref{eqn:ea} is also a convex optimization problem.

We will now show that negative classical and quantum mutual information and Holevo information are all relatively smooth with respect to Shannon and von Neumann entropy. We first recall the following channel parameters as introduced in~\cite{matz2004information,ramakrishnan2020computing}.
\begin{defn}[Channel coefficients]\label{def:coeff}
    For a classical channel $Q \in \mathcal{Q}_{n,m}$, the contraction coefficient $\zeta_Q$ is defined as
    \begin{equation}
        \zeta_Q = \sup \mleft\{ \frac{H\divx{Qp}{Qq}}{H\divx{p}{q}} : p,q \in \Delta_m,  p \neq q \mright\},
    \end{equation}
    and the expansion coefficient $\eta_Q$ is defined as
    \begin{equation}   
        \eta_Q = \inf \mleft\{ \frac{H\divx{Qp}{Qq}}{H\divx{p}{q}} : p,q \in \Delta_m,  p \neq q \mright\}.
    \end{equation}
    Similarly, for a quantum channel $\mathcal{N} \in \Phi(\mathcal{H}_A, \mathcal{H}_B)$, the contraction coefficient $\zeta_\mathcal{N}$ is defined as
    \begin{equation}
        \hspace{-0.15em}\zeta_\mathcal{N} = \sup \mleft\{ \frac{S\divx{\mathcal{N}(\rho)}{\mathcal{N}(\sigma)}}{S\divx{\rho}{\sigma}} : \rho, \sigma \in \mathcal{D}(\mathcal{H}_A),  \rho \neq \sigma \mright\},
        \label{eqn:contract-coeff}
    \end{equation}
    and the expansion coefficient $\eta_\mathcal{N}$ is defined as
    \begin{equation}   
        \eta_\mathcal{N} = \inf \mleft\{ \frac{S\divx{\mathcal{N}(\rho)}{\mathcal{N}(\sigma)}}{S\divx{\rho}{\sigma}} : \rho, \sigma \in \mathcal{D}(\mathcal{H}_A),  \rho \neq \sigma \mright\}.
    \end{equation}
\end{defn}

We now show how these channel coefficients are related to relative smoothness and strong convexity. 
\begin{thm}\label{thm:cq-ea-rel}
    Consider a classical channel $Q \in \mathcal{Q}_{m,n}$, a quantum channel $\mathcal{N} \in \Phi(\mathcal{H}_A, \mathcal{H}_B)$, and an encoding channel $\mathcal{E} \in \Phi(\mathcal{H}_X, \mathcal{H}_A)$.
    \begin{enumerate}[label=(\alph*)]
        \item Negative classical mutual information $-I_\mathrm{c}$ is $\zeta_Q$-smooth and $\eta_Q$-strongly convex relative to negative Shannon entropy $-H(\wc)$ on $\Delta_m$.
        \item Negative Holevo information $-\chi$ is $\zeta_{(\mathcal{N}\circ\mathcal{E})}$-smooth and $\eta_{(\mathcal{N}\circ\mathcal{E})}$-strongly convex relative to negative Shannon entropy $-H(\wc)$ on $\Delta_m$.
        \item Negative quantum mutual information $-I_\mathrm{q}$ is $(1+\zeta_\mathcal{N}-\eta_{\mathcal{N}_{\mathrm{comp}}})$-smooth and $(1+\eta_\mathcal{N}-\zeta_{\mathcal{N}_{\mathrm{comp}}})$-strongly convex relative to negative von Neumann entropy $-S(\wc)$ on $\mathcal{D}(\mathcal{H}_A)$.
    \end{enumerate}
\end{thm}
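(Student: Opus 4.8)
The plan is to dispatch all three parts with one device: the one-sided characterization of relative convexity from Corollary~\ref{cor:mono-rel}. Its converse direction, by Remark~\ref{rem:mono}, needs no homogeneity assumption, so it suffices to sandwich the one-sided quantity $\inp{x}{\nabla f(x)-\nabla f(y)}$ between $\mu\,D_\varphi\divx{x}{y}$ and $L\,D_\varphi\divx{x}{y}$ for each negated objective $f$. Two observations make this tractable. First, on $\Delta_m$ the Bregman divergence of $-H$ equals $H\divx{x}{y}$ and on $\mathcal{D}(\mathcal{H}_A)$ the Bregman divergence of $-S$ equals $S\divx{\rho}{\sigma}$, since the correction terms $\sum_i(x_i-y_i)$ and $\tr[\rho-\sigma]$ vanish there. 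Second, the additive constant in each gradient (a $-\bm{1}$ for mutual and Holevo information, a $-\mathbb{I}$ for quantum mutual information) cancels in the difference $\nabla f(x)-\nabla f(y)$, so these constants never intervene; this is also why $f$ need only satisfy \eqref{eqn:homo} up to an additive constant.

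For part (a) I would substitute $\partial I_c/\partial p_j = H\divx{Q_j}{Qp}-1$ and swap the order of summation: the gradient difference has $j$-th entry $\sum_i Q_{ij}\log((Qx)_i/(Qy)_i)$, so that $\inp{x}{\nabla(-I_c)(x)-\nabla(-I_c)(y)} = \sum_i\bigl(\sum_j x_j Q_{ij}\bigr)\log((Qx)_i/(Qy)_i) = H\divx{Qx}{Qy}$. The definitions of $\zeta_Q$ and $\eta_Q$ in Definition~\ref{def:coeff} then give $\eta_Q\,H\divx{x}{y}\le H\divx{Qx}{Qy}\le\zeta_Q\,H\divx{x}{y}$, and since $H\divx{x}{y}=D_{-H}\divx{x}{y}$ on $\Delta_m$ this yields both the $\zeta_Q$-smoothness and $\eta_Q$-strong convexity.

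Part (b) is structurally identical once classical inputs are embedded as diagonal states. Using $\partial\chi/\partial p_j = S\divx{\mathcal{N}(\rho_j)}{\mathcal{N}(\rho)}-1$ and linearity of $\mathcal{N}$, the same manipulation collapses $\inp{x}{\nabla(-\chi)(x)-\nabla(-\chi)(y)}$ to $S\divx{\mathcal{N}(\rho_x)}{\mathcal{N}(\rho_y)}$ with $\rho_x=\sum_j x_j\rho_j$. Writing $\sigma_x=\sum_j x_j\ket{j}\bra{j}$, one has $(\mathcal{N}\circ\mathcal{E})(\sigma_x)=\mathcal{N}(\rho_x)$ and, since diagonal states commute, $S\divx{\sigma_x}{\sigma_y}=H\divx{x}{y}$; applying Definition~\ref{def:coeff} to $\mathcal{N}\circ\mathcal{E}$ delivers the $\zeta_{\mathcal{N}\circ\mathcal{E}}$ and $\eta_{\mathcal{N}\circ\mathcal{E}}$ bounds, noting that the sup (resp.\ inf) over all states bounds the particular diagonal pair in the direction needed for smoothness (resp.\ strong convexity). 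For part (c) I would use the gradient \eqref{eqn:qmi-grad} together with the adjoint identity $\tr[\rho\,\mathcal{N}^\dag(X)]=\tr[\mathcal{N}(\rho)X]$ to reduce $\inp{\rho}{\nabla(-I_q)(\rho)-\nabla(-I_q)(\sigma)}$ to $S\divx{\rho}{\sigma}+S\divx{\mathcal{N}(\rho)}{\mathcal{N}(\sigma)}-S\divx{\mathcal{N}_c(\rho)}{\mathcal{N}_c(\sigma)}$, then bound the first term by $1\cdot S\divx{\rho}{\sigma}$, the second between $\eta_\mathcal{N}$ and $\zeta_\mathcal{N}$, and the third — because of its minus sign — between $\eta_{\mathcal{N}_c}$ and $\zeta_{\mathcal{N}_c}$ with the roles reversed, assembling $1+\zeta_\mathcal{N}-\eta_{\mathcal{N}_c}$ and $1+\eta_\mathcal{N}-\zeta_{\mathcal{N}_c}$.

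The main obstacle is the sign bookkeeping in part (c): the complementary-channel relative entropy enters negatively, so the contraction and expansion coefficients $\zeta_{\mathcal{N}_c},\eta_{\mathcal{N}_c}$ swap between the smoothness and strong-convexity estimates, and the adjoint and unitality identities must be applied carefully so the three relative-entropy terms emerge cleanly. The only genuinely non-routine step is recognizing the one-sided inner product as a relative entropy between channel outputs; once that reduction is in hand the channel-coefficient definitions do all the work, and the conclusion follows from Corollary~\ref{cor:mono-rel}.
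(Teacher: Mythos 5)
Your proposal is correct and follows essentially the same route as the paper's proof: compute the one-sided quantity $\inp{x}{\nabla f(x)-\nabla f(y)}$, identify it with (combinations of) relative entropies between channel outputs, sandwich it using the channel coefficients of Definition~\ref{def:coeff}, and conclude via Remark~\ref{rem:mono}. The paper writes out only part (a) and asserts that (b) and (c) follow by the same argument; your detailed treatment of (b) (the diagonal-state embedding through $\mathcal{E}$) and (c) (the three-term decomposition $S\divx{\rho}{\sigma}+S\divx{\mathcal{N}(\rho)}{\mathcal{N}(\sigma)}-S\divx{\mathcal{N}_c(\rho)}{\mathcal{N}_c(\sigma)}$ with the sign-reversed complementary-channel bounds) fills in exactly what the paper leaves implicit.
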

\begin{proof}
    To show part (a), using the expression for the gradient of classical mutual information~\eqref{eqn:grad-clas-mut}, it is possible to show that for any probability distributions $p,q\in\Delta_m$ that
    \begin{align*}
        \inp{p}{\nabla I_\mathrm{c}(q) - \nabla I_\mathrm{c}(p)} &= H\divx{Qp}{Qq},
    \end{align*}
    and therefore using the definition of channel coefficients
    \begin{align*}
        \eta_Q H\divx{p}{q} \leq \inp{p}{\nabla I_\mathrm{c}(q) - \nabla I_\mathrm{c}(p)} \leq \zeta_Q H\divx{p}{q}.
    \end{align*}
    The desired result then follows from adding the above inequalities with a copy of themselves with the roles of $p$ and $q$ reversed, then comparing to Proposition~\ref{prop:relative-alt}(a-ii) and~\ref{prop:relative-alt}(b-ii). Parts (b) and (c) can be established using a similar proof. Alternatively, we can use the results from~\cite{ramakrishnan2020computing} combined with Proposition~\ref{cor:mono-rel}.
\end{proof}
\begin{rem}\label{rem:coeff-bound}
    Due to monotonicity~\cite{ruskai2002inequalities} and non-negativity of classical and quantum relative entropy, the channel coefficients always satisfy the bounds $0\leq\eta_{(\wc)}\leq\zeta_{(\wc)}\leq1$. Therefore, negative classical mutual information and Holevo information are always at least $1$-smooth relative to negative Shannon entropy, and negative quantum mutual information is always at least $2$-smooth relative to negative von Neumann entropy. In general, it is not straightforward to efficiently estimate or obtain tighter bounds on these channel coefficients, as they are given by non-convex optimization problems. In~\cite{faust2022strong}, it was shown how upper bounds on the contraction coefficient of classical channels could be found by using approximation and sum-of-squares techniques. However, tighter bounds on these channel coefficients are not needed for certain variations on PDHG, such as the backtracking PDHG algorithm presented in Appendix~\ref{appdx:bcktrack-pdhg}, to take advantage of well-conditioned problem instances while still enjoying convergence guarantees.
\end{rem}

Using these relative smoothness properties combined with strong convexity of Shannon entropy and von Neumann entropy from Proposition~\ref{prop:entr-strong} to ensure the existence of suitable step sizes, we can therefore apply PDHG (with or without backtracking) to solve for each energy-constrained channel capacity while achieving the ergodic sublinear convergence guarantees provided by Proposition~\ref{prop:pdhg-conv}. To implement PDHG for the energy-constrained classical and cq channel capacity, we use the kernel function $\varphi(\wc)=-H(\wc)$, the primal domain $\mathcal{C}=\Delta_m$, and all energy-constraints are dualized. This results in the following primal-dual iterates $(p^k, \lambda^k)$
\begin{subequations}
    \begin{align}
        & \bar{\lambda}^{k+1} = \lambda^k + \theta_k (\lambda^k - \lambda^{k-1})\\
        & p^{k+1}_j = \frac{\bar{p}_j^{k+1}}{\sum_{j=1}^m \bar{p}_j^{k+1}},\quad \textrm{ for } j=1,\ldots,m\label{eqn:pdhg-class-cc-primal}\\
        & \lambda^{k+1} = (\lambda^k + \gamma_k (Ap^{k+1} - b))_+, \label{eqn:pdhg-class-cc-dual}
    \end{align}
\end{subequations}
where 
\begin{equation}
    \bar{p}_j^{k+1} = p^k_j \exp(\tau_k (\partial_j f(p^k) - \inp{A_j}{\bar{\lambda}^{k+1}}),
\end{equation}
$\lambda\in\mathbb{R}^l_+$ is the dual variable corresponding to the energy constraints, $\partial_j f$ is the partial derivative of $f$ with respect to the $j$-th coordinate, $A_j$ is the $j$-th column of $A$, and we define either $f\coloneqq I_\mathrm{c}$ for the classical channel capacity or $f\coloneqq\chi$ for the cq channel capacity. 

Similarly, for the energy-constrained ea channel capacity, we use the kernel function $\varphi=-S$, the primal domain $\mathcal{C}=\mathcal{D}(\mathcal{H}_A)$, and all energy constraints are dualized. The resulting PDHG iterates $(\rho^k, \lambda^k)$ are of the form
\begin{subequations}
    \begin{align}
        & \bar{\lambda}^{k+1} = \lambda^k + \theta_k (\lambda^k - \lambda^{k-1})\\
        & \rho^{k+1} = \frac{\bar{\rho}^{k+1}}{\tr[\bar{\rho}^{k+1}]} \label{eqn:pdhg-class-ea-primal}\\
        & \lambda^{k+1}_i = (\lambda^k_i + \gamma_k (\tr[A_i\rho] - b_i))_+, \quad \textrm{ for } i=1\ldots,l,
    \end{align}
\end{subequations}
where 
\begin{equation}
    \bar{\rho}^{k+1} = \exp\biggl(\log(\rho^k) + \tau_k \biggl(\nabla I_\mathrm{q}(\rho^k) - \sum_{i=1}^l \bar{\lambda}_i^{k+1}A_i \biggr) \biggr) ,
\end{equation}
$\lambda\in\mathbb{R}^l_+$ is the dual variable corresponding to the energy constraints. We remark that the primal iterates~\eqref{eqn:pdhg-class-cc-primal} are identical to the Blahut-Arimoto methods for unconstrained channel capacities~\cite{blahut1972computation,li2019blahut}, and PDHG essentially augments these existing Blahut-Arimoto methods for constrained channel capacities with a single additional dual ascent step which adaptively solves for the Lagrange multipliers for specific energy bounds $b$.

\begin{figure*}[]
\centering
\pgfplotstableread[col sep=comma]{figures/data/conv.csv}\data
\begin{tikzpicture}
    \definecolor{clr1}{RGB}{0,0,0}
    \definecolor{clr2}{RGB}{33,33,33}
    \definecolor{clr3}{RGB}{66,66,66}
    \definecolor{clr4}{RGB}{100,100,100}
    \definecolor{clr5}{RGB}{133,133,133}
    
    \begin{semilogyaxis}[
        name=mainplot,
        width = 0.375\textwidth,
        height = 0.375\textwidth,
        xmin = 0.0, xmax = 456,
        ymin = 1e-7, ymax=1,
        xlabel=Iteration $k$,
        ylabel=Optimality gap (nats),
        label style = {font=\footnotesize},
        ticklabel style = {font=\footnotesize},
        title style = {font=\footnotesize},
        legend pos  = north east,
        legend cell align={left},
        legend style={fill=white, fill opacity=0.6, draw opacity=1,text opacity=1, font=\footnotesize},
        no markers,
        every axis plot/.append style={very thick},
        grid=major,
        major grid style={line width=.1pt,draw=gray!20},
        yminorticks=false,
        title = {(a) Classical channel capacity}
        ]
        
        \addplot[clr5, dashdotted] table[x=i , y=cc4] {\data}; 
        \addplot[clr3, densely dashed] table[x=i , y=cc128] {\data};
        \addplot[clr1] table[x=i , y=cc8192] {\data};

        \legend{
            {$n=4, l=1$},
            {$n=128, l=4$},
            {$n=8192, l=8$}
        }

    \end{semilogyaxis}
\end{tikzpicture}
\begin{tikzpicture}
    \definecolor{clr1}{RGB}{0,0,0}
    \definecolor{clr2}{RGB}{33,33,33}
    \definecolor{clr3}{RGB}{66,66,66}
    \definecolor{clr4}{RGB}{100,100,100}
    \definecolor{clr5}{RGB}{133,133,133}

    \begin{semilogyaxis}[
        name=secondplot,
        at={(mainplot.north east)},
        xshift=0.05cm,
        anchor=north west,   
        width = 0.375\textwidth,
        height = 0.375\textwidth,
        xmin = 0.0, xmax = 261,
        ymin = 1e-7, ymax=1,
        xlabel=Iteration $k$,
        label style = {font=\footnotesize},
        ticklabel style = {font=\footnotesize},
        title style = {font=\footnotesize},
        legend pos  = north east,
        legend cell align={left},
        legend style={fill=white, fill opacity=0.6, draw opacity=1,text opacity=1, font=\footnotesize},
        no markers,
        every axis plot/.append style={very thick},
        grid=major,
        major grid style={line width=.1pt,draw=gray!20},
        yminorticks=false,
        yticklabels={,,},
        title = {(b) cq channel capacity}
        ]
        
        \addplot[clr5, dashdotted] table[x=i , y=cq4] {\data}; 
        \addplot[clr3, densely dashed] table[x=i , y=cq32] {\data};
        \addplot[clr1] table[x=i , y=cq128] {\data};

        \legend{
            {$n=4, l=1$},
            {$n=32, l=4$},
            {$n=128, l=6$}
        }
        
    \end{semilogyaxis}
\end{tikzpicture}
\begin{tikzpicture}
    \definecolor{clr1}{RGB}{0,0,0}
    \definecolor{clr2}{RGB}{33,33,33}
    \definecolor{clr3}{RGB}{66,66,66}
    \definecolor{clr4}{RGB}{100,100,100}
    \definecolor{clr5}{RGB}{133,133,133}

    \begin{semilogyaxis}[
        at={(secondplot.north east)},
        xshift=0.28cm,
        anchor=north west,  
        width = 0.375\textwidth,
        height = 0.375\textwidth,
        xmin = 0.0, xmax = 199,
        ymin = 1e-7, ymax=1,
        xlabel=Iteration $k$,
        label style = {font=\footnotesize},
        ticklabel style = {font=\footnotesize},
        title style = {font=\footnotesize},
        legend pos  = north east,
        legend cell align={left},
        legend style={fill=white, fill opacity=0.6, draw opacity=1,text opacity=1, font=\footnotesize},
        no markers,
        every axis plot/.append style={very thick},
        grid=major,
        major grid style={line width=.1pt,draw=gray!20},
        yminorticks=false,
        yticklabels={,,},
        title = {(c) ea channel capacity}
        ]
        
        \addplot[clr5, dashdotted] table[x=i , y=ea4] {\data}; 
        \addplot[clr3, densely dashed] table[x=i , y=ea16] {\data};
        \addplot[clr1] table[x=i , y=ea64] {\data};

        \legend{
            {$n=4, l=1$},
            {$n=16, l=3$},
            {$n=64, l=5$}
        }
        
    \end{semilogyaxis}
\end{tikzpicture}
\caption{Convergence of PDHG to solve (a) the classical channel capacity with $n$ inputs and outputs and $l$ energy constraints, (b) the classical-quantum channel capacity with an alphabet size and channel input/output dimension of $n$ and $l$ energy constraints, and (c) the entanglement-assisted channel capacity with channel input/output dimension of $n$ and $l$ energy constraints.}
\label{fig:channel-capacities}
\end{figure*}
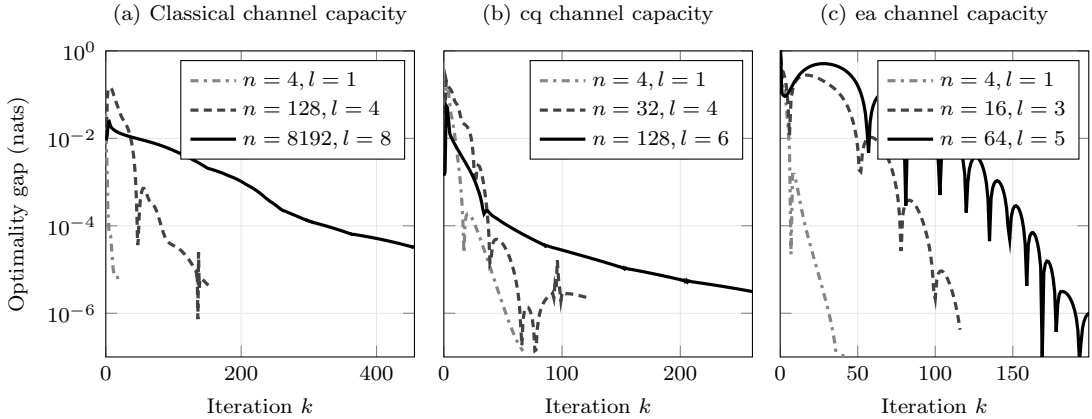

\begin{table}[t]
\caption{Computation of classical channel capacity with $n$ inputs and outputs and $l$ energy constraints.}\label{tab:cc}
\vspace{-0.5em}
\footnotesize
\centering
\begin{tabular*}{\columnwidth}{@{\extracolsep{\fill}}ccccccc@{\extracolsep{\fill}}}
\toprule
\multicolumn{1}{c}{} &  & \multicolumn{3}{c}{PDHG ($\kappa=1$)} & \multicolumn{2}{c}{MOSEK w/ CVX} \\ \cmidrule{3-5}\cmidrule{6-7}
$n$ & $l$ & Time (s) & Iter. & Opt. gap (nats) & Time (s) & Opt. gap (nats) \\ \midrule
$4$ & $1$ & $0.009$ & $19$ & $4.9\times10^{-6}$ & $0.990$ & $1.5\times10^{-8}$ \\
$128$ & $4$ & $0.044$ & $153$ & $4.2\times10^{-6}$ & $1.039$ & $1.5\times10^{-8}$ \\
$8192$ & $8$ & $48.12$ & $456$ & $3.2\times10^{-5}$ & $213.6$ & $1.5\times10^{-8}$ \\ \bottomrule
\end{tabular*}

\bigskip
\caption{Computation of classical-quantum channel capacity with an alphabet size and channel input/output dimension of $n$ and $l$ energy constraints.}\label{tab:cq}
\vspace{-0.5em}
\footnotesize
\centering
\begin{tabular*}{\columnwidth}{@{\extracolsep{\fill}}ccccccc@{\extracolsep{\fill}}}
\toprule
\multicolumn{1}{c}{} &  & \multicolumn{3}{c}{PDHG ($\kappa=1$)} & \multicolumn{2}{c}{MOSEK w/ CVXQUAD} \\ \cmidrule{3-5}\cmidrule{6-7}
$n$ & $l$ & Time (s) & Iter. & Opt. gap (nats) & Time (s) & Opt. gap (nats) \\ \midrule
$4$ & $1$ & $0.016$ & $67$ & $1.4\times10^{-7}$ & $1.462$ & $1.5\times10^{-8}$ \\
$32$ & $4$ & $0.176$ & $122$ & $2.3\times10^{-6}$ & $4.338$ & $7.0\times10^{-10}$ \\
$128$ & $6$ & $3.194$ & $261$ & $3.1\times10^{-6}$ & $1931.$ & $1.5\times10^{-8}$ \\ \bottomrule
\end{tabular*}

\bigskip
\caption{Computation of entanglement-assisted channel capacity with channel input/output dimension of $n$ and $l$ energy constraints.}\label{tab:ea}
\vspace{-0.5em}
\footnotesize
\centering
\begin{tabular*}{\columnwidth}{@{\extracolsep{\fill}}ccccccc@{\extracolsep{\fill}}}
\toprule
\multicolumn{1}{c}{} &  & \multicolumn{3}{c}{PDHG ($\kappa=10$)} & \multicolumn{2}{c}{MOSEK w/ CVXQUAD} \\ \cmidrule{3-5}\cmidrule{6-7}
$n$ & $l$ & Time (s) & Iter. & Opt. gap (nats) & Time (s) & Opt. gap (nats) \\ \midrule
$4$ & $1$ & $0.113$ & $46$ & $1.4\times10^{-7}$ & $27.28$ & $2.5\times10^{-9}$ \\
$16$ & $3$ & $1.193$ & $116$ & $4.2\times10^{-7}$ & \multicolumn{2}{c}{Not enough memory} \\
$64$ & $5$ & $41.09$ & $200$ & $1.0\times10^{-6}$ & \multicolumn{2}{c}{Not enough memory} \\ \bottomrule
\end{tabular*}
\end{table}

We show experimental results for the computation of energy-constrained variants of the classical channel capacity in Table~\ref{tab:cc} and Figure~\ref{fig:channel-capacities}(a), the cq channel capacity in Table~\ref{tab:cq} and Figure~\ref{fig:channel-capacities}(b), and the ea channel capacity in Table~\ref{tab:ea} and Figure~\ref{fig:channel-capacities}(c). All classical and quantum channels were randomly generated. For classical and cq channel capacities, the energy constraint matrix $A$ was randomly generated using a uniform random distribution between $0$ and $1$. For the ea channel capacity, observables $A_i$ are generated by scaling random density matrices so that their trace is equal to the dimension of the channel. Energy bounds $b$ are uniformly random generated between $0$ and $1$. If an infeasible set of energy constraints was sampled or if none of the constraints were active at the optimal solution, we rejected it and resampled the energy constraints. Results show that PDHG is able to solve to medium-accuracy solutions up to $500$ times faster than MOSEK. Additionally, PDHG is able to successfully solve large-scale problem instances that the off-the-shelf solvers were unable to solve due to lack of memory. In some of the plots in Figure~\ref{fig:channel-capacities}, it appears that PDHG converges linearly to the solution.

\subsection{Rate-Distortion Functions} \label{sec:rd}

Consider the classical channel setup that was defined in Section~\ref{subsec:cc-pd} for the classical channel capacity problem~\eqref{eqn:cc}. Let us define a distortion matrix $\delta\in\mathbb{R}^{n\times m}_+$ where $\delta_{ij}$ represents the distortion of producing output $i$ from input $j$, and $P\in\Delta_{n\times m}$ be the joint distribution such that $P_{ij}$ is the probability of having output $i$ and input $j$. The \emph{classical rate-distortion function} $R_c(D)$ for an input probability distribution $p\in\Delta_m$ quantifies how much a signal can be compressed by a lossy channel while remaining under a maximum allowable distortion $D\geq0$ of the signal, and is given by~\cite{shannon1948mathematical}
\begin{subequations}\label{eqn:crd}
    \begin{align}
        \minimize_{P\in\Delta_{n\times m}} \quad & I_\mathrm{c}(P) \\
        \mathmakebox[\widthof{$\minimize_{P\in\Delta_{n\times m}}$}]{\subjto}\, & \sum_{i=1}^m P_{ij} = p_j, \qquad \textrm{ for } j=1,\ldots,m,\\
                \quad & \sum_{j=1}^m \sum_{i=1}^n P_{ij}\delta_{ij} \leq D, \label{eqn:crd-c}
    \end{align}
\end{subequations}
where, with slight abuse of notation, $I_\mathrm{c}:\mathbb{R}^{n\times m}_+\rightarrow\mathbb{R}$ is the same classical mutual information function introduced in~\eqref{eqn:cc}, but we now treat the input distribution $p$ as fixed problem data, and the joint distribution $P$ as the variable, i.e.,
\begin{equation}
    I_\mathrm{c}(P) \coloneqq \sum_{j=1}^m \sum_{i=1}^n P_{ij} \log\mleft(\frac{P_{ij}}{p_j \sum_{k=1}^m P_{ik}}\mright).
\end{equation}
Note that~\eqref{eqn:crd} is expressed differently but is equivalent to the original problem defined in~\cite{blahut1972computation}. The partial derivatives of classical mutual information with respect to the joint distribution is
\begin{equation}
    \frac{\partial I_\mathrm{c}}{\partial P_{ij}} = \log(P_{ij}) - \log\biggl( p_j \sum_{k=1}^m P_{ik} \biggr).
\end{equation}

Now consider the ea quantum channel setup that was defined in Section~\ref{subsec:cc-pd} for the ea channel capacity problem. Given a rank-$n$ input state $\rho_A=\sum_{i=1}^n \lambda_i \ket{a_i}\!\bra{a_i}$ where $\lambda_i > 0$ for all $i=1,\ldots,n$, define the purified state $\ket{\psi} \in \mathcal{H}_A \otimes \mathcal{H}_R$ as
\begin{equation}
    \ket{\psi} = \sum_{i=1}^n \sqrt{\lambda_i} \ket{a_i} \otimes \ket{r_i}.
\end{equation}
Here, $\mathcal{H}_R$ is a reference system, $\dim\mathcal{H}_R\geq n$, and $\{ \ket{r_i} \}$ is some orthonormal basis on $\mathcal{H}_R$. Let $\Delta \in \mathcal{B}(\mathcal{H}_B\otimes\mathcal{H}_R)_+$ be a positive semidefinite distortion observable. The \emph{ea rate distortion function} $R_q(D)$ for maximum allowable distortion $D\geq0$ and input state $\rho_A\in\mathcal{D}(\mathcal{H}_A)$ is~\cite{datta2012quantum,wilde2013quantum}
\begin{subequations}\label{eqn:qrd}
    \begin{align}
        \minimize_{\rho_{BR} \in \mathcal{D}(\mathcal{H}_B\otimes\mathcal{H}_R)} \quad & I_\mathrm{q}(\rho_{BR}) \\
        \mathmakebox[\widthof{$\minimize_{\rho_{BR} \in \mathcal{D}(\mathcal{H}_B\otimes\mathcal{H}_R)}$}][c]{\subjto}\, & \tr_B(\rho_{BR}) = \rho_R, \label{eqn:qrd-b}\\
                            \quad & \inp{\Delta}{\rho_{BR}} \leq D, \label{eqn:qrd-c}
    \end{align}
\end{subequations}
where $\rho_R\coloneqq\tr_A(\ket{\psi}\!\bra{\psi})$. Again, with a slight abuse of notation, $I_\mathrm{q}:\mathcal{B}(\mathcal{H}_B\otimes\mathcal{H}_R)_+\rightarrow\mathbb{R}$ is the quantum mutual information function introduced in~\eqref{eqn:ea}, but we now treat the input state $\rho_A$ as fixed problem data, and the bipartite output state $\rho_{BR}$ as a variable, i.e.,
\begin{equation}
    I_\mathrm{q}(\rho_{BR}) \coloneqq S(\rho_A) + S(\tr_R(\rho_{BR})) - S(\rho_{BR}).
\end{equation}
Using Corollaries~\ref{cor:grad-tr-f} and~\ref{cor:grad-tr-f-lin} (in Appendix~\ref{appdx:grad}), and recognizing that the adjoint of the partial trace operator is $(\tr_1)^\dag(X_2) = \mathbb{I}_1 \otimes X_2$, the gradient of quantum mutual information with respect to $\rho_{BR}$ is
\begin{equation}\label{eqn:qmi-qrd}
    \nabla I_\mathrm{q}(\rho_{BR}) = \log(\rho_{BR}) - \log(\tr_R(\rho_{BR})) \otimes \mathbb{I}_R.
\end{equation}

Note that classical and quantum mutual information, respectively, can be expressed as
\begin{gather}
    I_\mathrm{c}(P) = H(p) - H\divy{X}{Y}_{P},\\
    I_\mathrm{q}(\rho_{BR}) = S(\rho_R) - S\divy{R}{B}_{\rho_{BR}}.
\end{gather}
As classical and quantum conditional entropy are concave in $P$ and $\rho_{BR}$, respectively, both classical and quantum mutual information are convex functions in $P$ and $\rho_{BR}$, respectively. Given that all constraints are linear and that $\Delta_{n,m}$ and $\mathcal{D}(\mathcal{H}_B\otimes\mathcal{H}_R)$ are convex sets, both~\eqref{eqn:crd} and~\eqref{eqn:qrd} are convex optimization problems.

We now show that the rate-distortion problems share similar relative smoothness properties to their channel capacity counterparts.
\begin{thm}\label{thm:qrd-smooth}
    Consider any input distribution $p\in\Delta_m$ and input state $\rho_A\in\mathcal{D}(\mathcal{H}_A)$. The following relative smoothness properties hold:
    \begin{enumerate}[label=(\alph*)]
        \item Classical mutual information $I_\mathrm{c}$ is $1$-smooth relative to negative Shannon entropy $-H(\wc)$ on $\Delta_{n\times m}$.
        \item Quantum mutual information $I_\mathrm{q}$ is $1$-smooth relative to negative von Neumann entropy $-S(\wc)$ on $\mathcal{D}(\mathcal{H}_B\otimes\mathcal{H}_R)$.
    \end{enumerate}    
\end{thm}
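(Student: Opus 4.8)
The plan is to verify relative smoothness directly from its definition: a function $f$ is $1$-smooth relative to a Legendre kernel $\varphi$ on $\mathcal{C}$ exactly when $\varphi - f$ is convex on $\relinterior\mathcal{C}$. The structural feature I would exploit is that each mutual information decomposes as a fixed constant, plus a marginal entropy, minus the joint entropy; adding the negative-joint-entropy kernel then cancels the joint-entropy term \emph{exactly}, leaving a manifestly convex remainder. I note that Corollary~\ref{cor:mono-rel} is not directly applicable, since neither $I_c$ nor $I_q$ satisfies the homogeneity identity~\eqref{eqn:homo}: they carry the nonzero constants $H(p)$ and $S(\rho_A)$. Hence the direct convexity route is the natural one.

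For part (a), I would first simplify the denominator inside $I_c$ using $\sum_{k=1}^m p_j P_{ik} = p_j\, q_i$, where $q_i \coloneqq \sum_{j=1}^m P_{ij}$ is the variable output marginal, to obtain the expansion $I_c(P) = H(p) + H(q) - H(P)$, with $H(p)$ constant because the input distribution $p$ is fixed problem data. Then $(-H(P)) - I_c(P) = -H(p) - H(q)$. The first term is a constant, and the second is the composition of the convex function $-H$ with the linear marginalization map $P \mapsto q$, hence convex. Therefore $(-H) - I_c$ is convex on $\relinterior\Delta_{n\times m}$ and $I_c$ is $1$-smooth relative to $-H$.

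For part (b), I would run the identical argument at the operator level. Writing $\rho_B \coloneqq \tr_R(\rho_{BR})$, the objective reads $I_q(\rho_{BR}) = S(\rho_A) + S(\rho_B) - S(\rho_{BR})$, where $S(\rho_A)$ is constant since $\rho_A$ is fixed. Subtracting this from the kernel gives $(-S(\rho_{BR})) - I_q(\rho_{BR}) = -S(\rho_A) - S(\rho_B)$: a constant plus $-S$ composed with the linear partial-trace map $\rho_{BR} \mapsto \tr_R(\rho_{BR})$, which is convex. Hence $(-S) - I_q$ is convex on $\relinterior\mathcal{D}(\mathcal{H}_B\otimes\mathcal{H}_R)$ and $I_q$ is $1$-smooth relative to $-S$.

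The one point requiring care — which I would flag as the main (mild) obstacle — is confirming that the joint-entropy terms cancel \emph{exactly}, as this is precisely what pins the relative smoothness constant to $L=1$ rather than something larger. This rests on the mutual-information decompositions $I_c = H(p) - H(X \mid Y)_P$ and $I_q = S(\rho_R) - S(R \mid B)_{\rho_{BR}}$ recorded just above the theorem, together with the identity $S(\rho_A) = S(\rho_R)$ coming from the fact that $\ket{\psi}$ purifies $\rho_A$. Everything else reduces to the standard facts that $-H$ and $-S$ are convex and that marginalization and partial trace are linear, so that convexity is preserved under composition.
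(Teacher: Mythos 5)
Your proof is correct, and it takes a genuinely different route from the paper's. The paper argues via gradients: using \eqref{eqn:qmi-qrd} and the fact that partial trace and tensor product are adjoints, it computes
\begin{equation*}
    \inp{\nabla I_q(\rho_{BR}) - \nabla I_q(\sigma_{BR})}{\rho_{BR}} = S\divx{\rho_{BR}}{\sigma_{BR}} - S\divx{\tr_R(\rho_{BR})}{\tr_R(\sigma_{BR})} \leq S\divx{\rho_{BR}}{\sigma_{BR}},
\end{equation*}
by non-negativity of relative entropy, and then invokes Remark~\ref{rem:mono}. Note that your stated reason for abandoning this route is not quite right: while Corollary~\ref{cor:mono-rel} does require \eqref{eqn:homo}, the direction actually used (one-sided gradient inequality implies relative smoothness, i.e.\ Remark~\ref{rem:mono}) holds for \emph{any} function, which is exactly how the paper proceeds despite the additive constants. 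Your direct verification that $(-S) - I_q$ is convex is nevertheless valid and arguably more elementary: it needs no gradient formulas, and it exposes the mechanism plainly --- indeed, the term $S\divx{\tr_R(\rho_{BR})}{\tr_R(\sigma_{BR})}$ that the paper discards is precisely the Bregman divergence (restricted to states) of your convex remainder $-S\circ\tr_R$, so both proofs rest on the same cancellation. What the paper's formulation buys is uniformity: it is the same template used for Theorem~\ref{thm:cq-ea-rel}, where the inequality is two-sided and also yields relative strong-convexity constants.

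One slip in part (a) should be repaired. On all of $\Delta_{n\times m}$ the identity $I_c(P) = H(p) + H(q) - H(P)$ is false, because the denominator in \eqref{eqn:crd} hard-codes the fixed distribution $p$ rather than the actual input marginal of $P$: writing $q_i = \sum_{j=1}^m P_{ij}$, one has
\begin{equation*}
    I_c(P) = -\sum_{i=1}^n\sum_{j=1}^m P_{ij}\log(p_j) + H(q) - H(P),
\end{equation*}
and the first term equals the constant $H(p)$ only on the affine slice $\sum_{i=1}^n P_{ij} = p_j$, whereas the theorem asserts smoothness on the whole simplex $\Delta_{n\times m}$. This does not break your argument, since that cross term is \emph{linear} in $P$: one gets $(-H(P)) - I_c(P) = \sum_{i,j} P_{ij}\log(p_j) - H(q)$, a linear function plus a convex function composed with the linear marginalization map, hence convex on $\relinterior\Delta_{n\times m}$, and $1$-smoothness follows. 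Part (b) needs no such repair: $S(\rho_A)$ in \eqref{eqn:qrd} is genuinely a constant independent of the variable $\rho_{BR}$, and in fact the purification identity $S(\rho_A)=S(\rho_R)$ you invoke at the end is not needed at all.
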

\begin{proof}
    We will only show the proof for (b), as (a) can be established using essentially the same method. Using the fact that partial trace and tensor product operators are adjoint to each other, we can show
    \begin{align*}
        \inp{\nabla I_\mathrm{q}(\rho_{BR}) - \nabla I_\mathrm{q}(\sigma_{BR})}{\rho_{BR}} &= S\divx{\rho_{BR}}{\sigma_{BR}} - S\divx{\tr_R(\rho_{BR})}{\tr_R(\sigma_{BR})}\\
        &\leq S\divx{\rho_{BR}}{\sigma_{BR}}.
    \end{align*}  
    Using Remark~\ref{rem:mono} gives the desired result.
\end{proof}

We can therefore use PDHG to solve for the rate-distortion functions while guaranteeing ergodic sublinear convergence from Proposition~\ref{prop:pdhg-conv}, where strong convexity of Shannon and von Neumann entropy guarantees the existence of suitable step sizes to achieve this rate. To implement PDHG for the classical rate distortion function, we use the kernel function $\varphi(\wc)=-H(\wc)$, the primal domain $\mathcal{C}=\{ P\in\Delta_{m,n} : \sum_{i=1}^m P_{ij} = p_j \}$, and dualize the distortion constraint~\eqref{eqn:crd-c}. This results in the primal-dual iterates $(P^k, \lambda^k)$ generated by
\begin{subequations}
    \begin{align}
        & \bar{\lambda}^{k+1} = \lambda^k + \theta_k (\lambda^k - \lambda^{k-1})\\
        & P^{k+1}_{ij} = \frac{p_j\bar{P}_{ij}^{k+1}}{\sum_{i=1}^m \bar{P}_{ij}^{k+1}}, \qquad \textrm{ for } \begin{array}{l}
        i=1,\ldots,n,\\
        j=1,\ldots,m
        \end{array} \label{eqn:pdhg-class-rd-primal}\\
        & \lambda^{k+1} = \biggl(\lambda^k + \gamma_k \biggl(\sum_{j=1}^m \sum_{i=1}^n P_{ij}\delta_{ij} - D \biggr)\biggr)_+,
    \end{align}
\end{subequations}
where 
\begin{equation}
    \bar{P}_{ij}^{k+1} = P^k_{ij} \exp(-\tau_k (\partial_{ij} I_\mathrm{c}(P^k) + \lambda\delta_{ij}),
\end{equation}
$\lambda\in\mathbb{R}_+$ is the dual variable corresponding to the distortion constraint and $\partial_{ij} I_\mathrm{c}$ is the partial derivative of $I_\mathrm{c}$ with respect to the $(i,j)$-th coordinate. If $\tau_k=1$, then the primal iterate~\eqref{eqn:pdhg-class-rd-primal} is identical to the original step proposed by Blahut~\cite{blahut1972computation} up to a change of variables. This implies that mirror descent applied to the Lagrangian for a fixed Lagrange multiplier $\lambda$ recovers the  original Blahut-Arimoto algorithm for classical rate-distortion functions. Moreover, like for the energy-constrained channel capacities, PDHG has a natural interpretation of including an additional dual ascent step to adaptively find the Lagrange multiplier which solves for a given distortion $D$. 

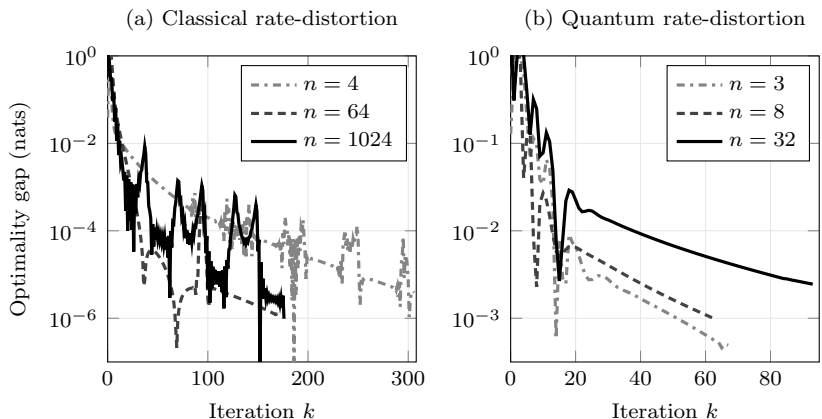
\begin{figure*}[]
\centering
\pgfplotstableread[col sep=comma]{figures/data/conv.csv}\data
\begin{tikzpicture}
    \definecolor{clr1}{RGB}{0,0,0}
    \definecolor{clr2}{RGB}{33,33,33}
    \definecolor{clr3}{RGB}{66,66,66}
    \definecolor{clr4}{RGB}{100,100,100}
    \definecolor{clr5}{RGB}{133,133,133}
    
    \begin{semilogyaxis}[
        name=mainplot,
        width = 0.375\textwidth,
        height = 0.375\textwidth,
        xmin = 0.0, xmax = 308,
        ymin = 1e-7, ymax=1,
        xlabel=Iteration $k$,
        ylabel=Optimality gap (nats),
        label style = {font=\footnotesize},
        ticklabel style = {font=\footnotesize},
        title style = {font=\footnotesize},
        legend pos  = north east,
        legend cell align={left},
        legend style={fill=white, fill opacity=0.6, draw opacity=1,text opacity=1, font=\footnotesize},
        no markers,
        every axis plot/.append style={very thick},
        grid=major,
        major grid style={line width=.1pt,draw=gray!20},
        yminorticks=false,
        title = {(a) Classical rate-distortion}
        ]
        
        \addplot[clr5, dashdotted] table[x=i , y=crd4] {\data}; 
        \addplot[clr3, densely dashed] table[x=i , y=crd64] {\data};
        \addplot[clr1] table[x=i , y=crd1024] {\data};

        \legend{
            {$n=4$},
            {$n=64$},
            {$n=1024$}
        }

    \end{semilogyaxis}
\end{tikzpicture}
\begin{tikzpicture}
    \definecolor{clr1}{RGB}{0,0,0}
    \definecolor{clr2}{RGB}{33,33,33}
    \definecolor{clr3}{RGB}{66,66,66}
    \definecolor{clr4}{RGB}{100,100,100}
    \definecolor{clr5}{RGB}{133,133,133}

    \begin{semilogyaxis}[
        name=secondplot,
        at={(mainplot.north east)},
        xshift=0.05cm,
        anchor=north west,   
        width = 0.375\textwidth,
        height = 0.375\textwidth,
        xmin = 0.0, xmax = 95,
        ymin = 3.16e-4, ymax=1,
        xlabel=Iteration $k$,
        label style = {font=\footnotesize},
        ticklabel style = {font=\footnotesize},
        title style = {font=\footnotesize},
        legend pos  = north east,
        legend cell align={left},
        legend style={fill=white, fill opacity=0.6, draw opacity=1,text opacity=1, font=\footnotesize},
        no markers,
        every axis plot/.append style={very thick},
        grid=major,
        major grid style={line width=.1pt,draw=gray!20},
        yminorticks=false,
        title = {(b) Quantum rate-distortion}
        ]
        
        \addplot[clr5, dashdotted] table[x=i , y=qrd3] {\data}; 
        \addplot[clr3, densely dashed] table[x=i , y=qrd8] {\data};
        \addplot[clr1] table[x=i , y=qrd32] {\data};

        \legend{
            {$n=3$},
            {$n=8$},
            {$n=32$}
        }
        
    \end{semilogyaxis}
\end{tikzpicture}
\caption{Convergence of PDHG to solve (a) the classical rate-distortion for a Hamming distortion measure with $D=0.5$ and a channel with $n$ inputs and outputs, and (b) the quantum rate-distortion for an entanglement fidelity distortion measure with $D=0.5$ and a channel with input/output dimension of $n$.}
\label{fig:rate-distortions}
\end{figure*}

\begin{table}[t]

\caption{Computation of the classical rate-distortion for a Hamming distortion measure with $D=0.5$ and a channel with $n$ inputs and outputs.}\label{tab:crd}
\vspace{-0.5em}
\footnotesize
\centering
\begin{tabular*}{\columnwidth}{@{\extracolsep{\fill}}cccccc@{\extracolsep{\fill}}}
\toprule
\multicolumn{1}{c}{} & \multicolumn{3}{c}{PDHG ($\kappa=10$)} & \multicolumn{2}{c}{MOSEK w/ CVX} \\ \cmidrule{2-4}\cmidrule{5-6}
$n$ & Time (s) & Iter. & Opt. gap (nats) & Time (s) & Opt. gap (nats) \\ \midrule
$4$ & $0.018$ & $306$ & $3.9\times10^{-6}$ & $0.951$ & $1.5\times10^{-8}$ \\
$64$ & $0.094$ & $170$ & $1.4\times10^{-6}$ & $1.900$ & $1.5\times10^{-8}$ \\
$1024$ & $35.57$ & $176$ & $2.6\times10^{-6}$ & \multicolumn{2}{c}{Not enough memory} \\ \bottomrule
\end{tabular*}

\bigskip
\caption{Computation of the quantum rate-distortion for an entanglement fidelity distortion measure with $D=0.5$ and a channel with input/output dimension of $n$.}\label{tab:qrd}
\vspace{-0.5em}
\footnotesize
\centering
\begin{tabular*}{\columnwidth}{@{\extracolsep{\fill}}cccccc@{\extracolsep{\fill}}}
\toprule
\multicolumn{1}{c}{} & \multicolumn{3}{c}{PDHG ($\kappa=10$)} & \multicolumn{2}{c}{MOSEK w/ CVXQUAD} \\ \cmidrule{2-4}\cmidrule{5-6}
$n$ & Time (s) & Iter. & Opt. gap (nats) & Time (s) & Opt. gap (nats) \\ \midrule
$3$ & $0.094$ & $68$ & $4.9\times10^{-4}$ & $83.40$ & $5.2\times10^{-6}$ \\
$8$ & $0.261$ & $62$ & $1.0\times10^{-3}$ & \multicolumn{2}{c}{Not enough memory} \\
$32$ & $37.95$ & $93$ & $2.5\times10^{-3}$ & \multicolumn{2}{c}{Not enough memory} \\ \bottomrule
\end{tabular*}
\end{table}

To implement PDHG for the quantum rate-distortion function, we use the kernel function $\varphi(\wc)=-S(\wc)$, the primal domain $\mathcal{C}=\mathcal{D}(\mathcal{H}_B\otimes\mathcal{H}_R)$, and dualize both the partial trace~\eqref{eqn:qrd-b} and distortion constraint~\eqref{eqn:qrd-c}. This gives us the primal-dual iterates $(\rho_{BR}^k, \nu^k, \lambda^k)$ generated by
\begin{subequations}
    \begin{align}
        & \bar{\nu}^{k+1} = \nu^k + \theta_k (\nu^k - \nu^{k-1})\\
        & \bar{\lambda}^{k+1} = \lambda^k + \theta_k (\lambda^k - \lambda^{k-1})\\
        & \rho^{k+1}_{BR} = \frac{\bar{\rho}_{BR}^{k+1}}{\tr[\bar{\rho}_{BR}^{k+1}]} \\
        & \nu^{k+1} = \nu^k + \gamma_k (\tr_B(\rho_{BR}^{k+1}) - \rho_R)\\
        & \lambda^{k+1} = (\lambda^k + \gamma_k (\inp{\Delta}{\rho_{BR}^{k+1}} - D))_+,
    \end{align}
\end{subequations}
where 
\begin{equation}
    \bar{\rho}_{BR}^{k+1} = \exp(\log(\rho^k_{BR}) - \tau_k(\nabla I_\mathrm{q}(\rho^k_{BR}) + \mathbb{I}\otimes\bar{\nu}^{k+1} + \bar{\lambda}^{k+1}\Delta)),
\end{equation}
$\nu\in\mathcal{B}(\mathcal{H}_R)$ is the dual variable corresponding to the partial trace constraint, and $\lambda\in\mathbb{R}_+$ is the dual variable corresponding to the distortion constraint. 

We show experimental results for the computation of the classical rate-distortion with Hamming distortion $\delta=\bm{1}\bm{1}^\top-\mathbb{I}$ in Table~\ref{tab:crd} and Figure~\ref{fig:rate-distortions}(a), and the quantum rate-distortion with entanglement fidelity distortion $\Delta=\mathbb{I}-\ket{\psi}\!\bra{\psi}$ in Table~\ref{tab:qrd} and Figure~\ref{fig:rate-distortions}(b). All experiments use $D=0.5$, and all input states are randomly generated. Similar to the channel capacity experiments, the results show that PDHG solves up to $1000$ times faster than MOSEK to low- to medium-accuracy solutions, and are able to effectively scale to large scale problem dimensions. We also observe a combination of sublinear and linear convergence rates of PDHG in Figure~\ref{fig:rate-distortions}.

\subsection{Relative Entropy of a Quantum Resource}\label{subsec:ree}

Quantum resource theories~\cite{brandao2015reversible,chitambar2019quantum} aim to categorize quantum resources that can be generated using a set of permissible physical operations. The set of states which can be generated are called the free states $\mathcal{S}(\mathcal{H})$ associated with a Hilbert space. Often, we are interested in quantifying the relative entropy of a quantum resource possessed by a state $\rho\in\mathcal{D}(\mathcal{H})$, which is defined as
\begin{equation}\label{eqn:rel-entr-res}
    \min_{\sigma\in\mathcal{S}(\mathcal{H})} S\divx{\rho}{\sigma}.
\end{equation}
Note that as $\rho$ is fixed problem data, this is equivalent to
\begin{equation}
    S(\rho) + \min_{\sigma\in\mathcal{S}(\mathcal{H})} g(\sigma),
\end{equation}
where $g(\sigma) \coloneqq -\tr[\rho\log(\sigma)]$. To find the gradient of quantum relative entropy with respect to the second argument, first note that due to linearity of the trace operator, the directional derivative along $V\in\mathbb{H}^n$ can be found using Lemma~\ref{lem:dir-deriv}(a) (in Appendix~\ref{appdx:grad}) as
\begin{align*}
    \mathsf{D}g(\sigma)[V] &= -\inp{\rho}{U [f^{[1]}(\Lambda) \odot (U^\dag V U)] U^\dag}\\
    &= -\inp{V}{U [f^{[1]}(\Lambda) \odot (U^\dag \rho U)] U^\dag},
\end{align*}
where $\sigma$ has diagonalization $\sigma=U\Lambda U^\dag$ and $f^{[1]}(\Lambda)$ is the first divided difference matrix (see~\eqref{eqn:fdd} in Appendix~\ref{appdx:grad}) associated with $f(x)=\log(x)$. For the second equality, we recognize that $\inp{X}{Y\odot Z} =\inp{Y\odot X}{Z}$ for $X,Z\in\mathbb{H}^n$ and $Y\in\mathbb{R}^{n\times n}$. It follows that the gradient of quantum relative entropy with respect to $\sigma$ is
\begin{align}
    \grad_\sigma S\divx{\rho}{\sigma} = \nabla g(\sigma) = -U [f^{[1]}(\Lambda) \odot (U^\dag \rho U)] U^\dag.
\end{align}
As quantum relative entropy is jointly convex, if the set of free states $\mathcal{S}(\mathcal{H})$ is convex, then the relative entropy of a quantum resource is computed by solving a convex optimization problem.

Before deriving a suitable algorithm to compute the relative entropy of a quantum resource, we will first show that the von Neumann entropy is not a suitable kernel to compute this quantity.
\begin{prop}
    Quantum relative entropy $S\divx{\rho}{\wc}$ with fixed first argument $\rho\in\mathcal{D}(\mathcal{H})$ is not smooth relative to von Neumann entropy.
\end{prop}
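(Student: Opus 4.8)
The plan is to prove that \emph{no} constant $L>0$ can work, that is, that $L\varphi - g$ fails to be convex on $\relinterior\mathcal{D}(\mathcal{H})$ for every $L$, where I abbreviate $g(\sigma)\coloneqq -\tr[\rho\log(\sigma)]$ and $\varphi\coloneqq -S$. Since $S\divx{\rho}{\sigma}$ and $g$ differ only by the $\sigma$-independent constant $\tr[\rho\log(\rho)]$, relative smoothness of $S\divx{\rho}{\wc}$ with respect to $-S$ is equivalent to convexity of $L\varphi - g$. My strategy is to restrict attention to a single one-parameter family $\sigma(t)\in\relinterior\mathcal{D}(\mathcal{H})$: convexity of $L\varphi - g$ would force $t\mapsto (L\varphi - g)(\sigma(t))$ to be convex, hence to have nonnegative second derivative, and I will produce a path along which this second derivative diverges to $-\infty$ as $\sigma(t)$ approaches the boundary, no matter how large $L$ is chosen.

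The crucial device is to choose $\sigma(t)$ so that it commutes with $\rho$, which avoids any need to differentiate the matrix logarithm in noncommuting directions and reduces everything to a separable scalar computation. Concretely, let $\rho=\sum_i \rho_i\,\ket{i}\!\bra{i}$ be a spectral decomposition and fix an index with $\rho_1>0$ (which exists because $\tr[\rho]=1$). Taking $\sigma(t)=t\,\ket{1}\!\bra{1}+\tfrac{1-t}{n-1}\sum_{i\ge 2}\ket{i}\!\bra{i}$ for $t\in(0,1)$, both $g$ and $\varphi$ become explicit separable functions of the eigenvalues, namely $g(\sigma(t)) = -\rho_1\log(t)-(1-\rho_1)\log\!\big(\tfrac{1-t}{n-1}\big)$ and $\varphi(\sigma(t))=t\log(t)+(1-t)\log\!\big(\tfrac{1-t}{n-1}\big)$. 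Differentiating twice in $t$ then gives $\tfrac{d^2}{dt^2}g(\sigma(t)) = \rho_1/t^2 + (1-\rho_1)/(1-t)^2$ and $\tfrac{d^2}{dt^2}\varphi(\sigma(t)) = 1/t + 1/(1-t)$, so that $\tfrac{d^2}{dt^2}(L\varphi - g)(\sigma(t)) = L\big(\tfrac1t+\tfrac1{1-t}\big) - \tfrac{\rho_1}{t^2} - \tfrac{1-\rho_1}{(1-t)^2}$. The leading behaviour as $t\to 0^+$ is $L/t - \rho_1/t^2 = (Lt-\rho_1)/t^2$, which tends to $-\infty$ since $\rho_1>0$; hence for every fixed $L$ the second derivative is strictly negative for all sufficiently small $t$, so $L\varphi-g$ is not convex and the claimed relative smoothness cannot hold.

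The only genuine difficulty is the noncommutative Hessian of $-\tr[\rho\log(\sigma)]$, whose directional second derivative would otherwise involve divided-difference (Daleckii--Krein) expressions; restricting to a $\rho$-commuting path is exactly what removes this obstacle and lets the elementary one-variable blow-up of $1/t^2$ against $1/t$ drive the whole argument. The remaining checks are routine: $\sigma(t)$ stays full rank, hence in $\relinterior\mathcal{D}(\mathcal{H})$, for all $t\in(0,1)$, and a single eigenvalue approaching the boundary already suffices to defeat every candidate $L$. This also makes transparent \emph{why} von Neumann entropy is the wrong kernel here: unlike the objectives of the channel-capacity and rate-distortion problems, where the Bregman divergence of $-S$ exactly matches the growth of the gradient, the second argument of the relative entropy sees a $1/t^2$ singularity that the $1/t$ curvature of $-S$ cannot dominate.
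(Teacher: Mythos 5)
Your proof is correct and takes essentially the same approach as the paper: the paper likewise reduces to the univariate scenario $f(x)=-a\log(x)$, $\varphi(x)=x\log(x)$, where the $a/x^2$ singularity of $f''$ cannot be dominated by $L/x$ near $x=0$, deferring that scalar computation to a cited reference. Your version simply makes the reduction explicit via an affine path commuting with $\rho$, which renders the argument self-contained.
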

\begin{proof}
    It suffices to study the univariate scenario where $f(x)=-a\log(x)$, $\varphi(x)=x\log(x)$, and $a,x\geq0$. The rest of the proof follows from~\cite[Appendix A]{li2019convergence}.
\end{proof}

Therefore, to apply mirror descent or PDHG, we need to find a different suitable kernel function. To do this, we will present a generalized result which can be applied to problems with similar objectives.
\begin{thm}\label{thm:spec-mono}
    Consider a function $f:\mathbb{R}_{++}\rightarrow\mathbb{R}$ which is operator convex, i.e., for all $n\in\mathbb{N}$, $X,Y\in\mathbb{H}^n_{++}$ and $\lambda\in[0,1]$,
    \begin{equation*}
        f(\lambda X + (1-\lambda)Y) \preceq \lambda f(X) + (1 - \lambda) f(Y).
    \end{equation*}
    Then for any $A\succeq0$, the function $g(X) = \tr[Af(X)]$ is $\lambda_{\mathrm{max}}(A)$-smooth and $\lambda_{\mathrm{min}}(A)$-strongly convex relative to $\tr[f(\wc)]$ on $\mathbb{H}^n_{++}$.
\end{thm}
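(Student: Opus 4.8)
The plan is to reduce both claims to a single convexity lemma, namely: if $f$ is operator convex and $B\succeq 0$, then the scalar map $X\mapsto\tr[Bf(X)]$ is convex on $\mathbb{H}^n_{++}$. Writing $\varphi(X)=\tr[f(X)]$ for the kernel and $g(X)=\tr[Af(X)]$ for the objective, the relative smoothness claim asks that
\begin{equation*}
    \lambda_{max}(A)\varphi(X) - g(X) = \tr\mleft[(\lambda_{max}(A)\mathbb{I} - A)f(X)\mright]
\end{equation*}
be convex, while the relative strong convexity claim asks that
\begin{equation*}
    g(X) - \lambda_{min}(A)\varphi(X) = \tr\mleft[(A - \lambda_{min}(A)\mathbb{I})f(X)\mright]
\end{equation*}
be convex. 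By the very definition of the extreme eigenvalues we have $\lambda_{max}(A)\mathbb{I} - A\succeq 0$ and $A - \lambda_{min}(A)\mathbb{I}\succeq 0$, so both quantities are instances of $\tr[Bf(X)]$ with a positive semidefinite weight $B$. Thus everything follows once the lemma is established.

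To prove the lemma I would fix $X,Y\in\mathbb{H}^n_{++}$ and $\lambda\in[0,1]$ and start from the hypothesis of operator convexity, which gives the matrix inequality $f(\lambda X+(1-\lambda)Y)\preceq\lambda f(X)+(1-\lambda)f(Y)$. The key observation is that testing against a positive semidefinite $B$ via the trace is a positive linear functional that is monotone with respect to $\preceq$: whenever $M\succeq 0$ one has $\tr[BM]=\tr[B^{1/2}MB^{1/2}]\geq 0$ since $B^{1/2}MB^{1/2}\succeq 0$. Applying this functional to the operator convexity inequality yields
\begin{equation*}
    \tr\mleft[Bf(\lambda X+(1-\lambda)Y)\mright]\leq\lambda\,\tr[Bf(X)]+(1-\lambda)\,\tr[Bf(Y)],
\end{equation*}
which is exactly convexity of $X\mapsto\tr[Bf(X)]$, completing the lemma and hence the theorem.

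I do not anticipate a genuine obstacle here; the only point requiring care is that monotonicity of the trace functional against $\preceq$ is available precisely because $B\succeq 0$, which is why it is essential that $A$ is sandwiched between $\lambda_{min}(A)\mathbb{I}$ and $\lambda_{max}(A)\mathbb{I}$ rather than arbitrary. Conceptually the whole content is recognizing that both relative convexity conditions collapse to a single question about $\tr[Bf(X)]$ with positive $B$, and that operator convexity is preserved under integration against a positive weight. (For the statement to literally match the relative smoothness and strong convexity definitions one takes $\tr[f(\wc)]$ as the Legendre kernel; the substantive mathematics is the convexity argument above.)
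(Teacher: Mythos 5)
Your proposal is correct and follows essentially the same route as the paper's proof: both reduce the two claims to convexity of $X\mapsto\tr[Bf(X)]$ for $B\succeq 0$ (applied with $B=\lambda_{max}(A)\mathbb{I}-A$ and $B=A-\lambda_{min}(A)\mathbb{I}$), using the fact that tracing against a positive semidefinite matrix is monotone with respect to the semidefinite order. Your write-up merely makes the monotonicity step more explicit via $\tr[BM]=\tr[B^{1/2}MB^{1/2}]\geq 0$, which the paper states more tersely.
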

\begin{proof}
    We first recognize that as the trace of the product of two positive semidefinite matrices is always non-negative, it follows from operator convexity of $f$ that $g$ is convex for all $A\succeq0$. Therefore, as $\lambda_{\mathrm{max}}(A)\mathbb{I}-A\succeq0$, it follows that
    \begin{equation*}
        \tr[(\lambda_{\mathrm{max}}(A)\mathbb{I}-A)f(X)] = \lambda_{\mathrm{max}}(A)\tr[f(X)] - \tr[Af(X)],
    \end{equation*}
    must also be convex, which by definition recovers the result for relative smoothness. Similarly, noting that $A-\lambda_{\mathrm{min}}(A)\mathbb{I}\succeq0$ and applying a similar argument recovers the relative strong convexity result.
\end{proof}

\begin{rem}\label{rem:trace-func-kernel}
    Trace functions are convenient to use as kernel functions, as finding an expression for the mirror descent iterates~\eqref{eqn:mirror-descent} reduces to solving a set of univariate equations. Specifically, assuming $\varphi(X)=\tr[h(X)]$ for some function $h:\mathbb{R}\rightarrow\mathbb{R}$ which we want to act on the eigenvalues of a matrix, then mirror descent iterates are of the form
    \begin{equation}\label{eqn:tr-f-md}
        X^{k+1} = (h')^{-1}[h'(X^k) - t_k\nabla f(X^k)],
    \end{equation}
    where $h':\mathbb{R}\rightarrow\mathbb{R}$ is the first derivative of $h$. This is similar to the case when $\varphi$ is separable, i.e., of the form $\varphi(x)=\sum_{i=1}^n\varphi_i(x_i)$, as discussed in~\cite{bauschke2017descent}. Here, we apply essentially the same principle but to the eigenvalues of the variable $X$.
\end{rem}

We now show how these results can be applied specifically to compute the relative entropy of a quantum resource.
\begin{cor}\label{cor:qre-relative}
    Quantum relative entropy $S\divx{\rho}{\wc}$ with fixed first argument $\rho\in\mathcal{D}(\mathcal{H})$ is $\lambda_{\mathrm{max}}(\rho)$-smooth and $\lambda_{\mathrm{min}}(\rho)$-strongly convex relative to the negative log determinant $-\log(\det(\wc))$ on $\mathcal{D}(\mathcal{H})$.
\end{cor}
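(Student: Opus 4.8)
The plan is to apply Theorem~\ref{thm:spec-mono} directly, after identifying the correct operator convex function $f$ and positive semidefinite matrix $A$. Recall from the discussion preceding~\eqref{eqn:rel-entr-res} that, since $\rho$ is fixed problem data, the quantum relative entropy splits as $S\divx{\rho}{\sigma} = S(\rho) + g(\sigma)$, where $g(\sigma) = -\tr[\rho\log(\sigma)]$ and $S(\rho)$ is a constant independent of $\sigma$. The key observation is that $g$ already has exactly the form $\tr[Af(\sigma)]$ appearing in Theorem~\ref{thm:spec-mono}: choosing $f(x) = -\log(x)$ and $A = \rho$, we have $g(\sigma) = \tr[\rho\,(-\log(\sigma))] = \tr[Af(\sigma)]$.

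To invoke the theorem I would verify its two hypotheses. First, $f(x) = -\log(x)$ is operator convex on $\mathbb{R}_{++}$, which is the standard fact that the logarithm is operator concave, and $A = \rho \succeq 0$. Second, the kernel function associated with this $f$ is $\tr[f(\sigma)] = -\tr[\log(\sigma)] = -\log(\det(\sigma))$, using the identity $\tr[\log(\sigma)] = \log(\det(\sigma))$; this is precisely the negative log determinant appearing in the statement. Theorem~\ref{thm:spec-mono} then yields immediately that $g$ is $\lambda_{max}(\rho)$-smooth and $\lambda_{min}(\rho)$-strongly convex relative to $-\log(\det(\wc))$ on $\mathbb{H}^n_{++}$.

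It remains only to transfer these properties from $g$ to $S\divx{\rho}{\wc}$, and from $\mathbb{H}^n_{++}$ to $\mathcal{D}(\mathcal{H})$. Since $S\divx{\rho}{\sigma}$ and $g(\sigma)$ differ by the additive constant $S(\rho)$, the functions $L\varphi - S\divx{\rho}{\wc}$ and $S\divx{\rho}{\wc} - \mu\varphi$ differ from $L\varphi - g$ and $g - \mu\varphi$ by that same constant, hence have identical convexity; the relative smoothness and strong convexity constants are therefore preserved. Finally, the relative convexity conditions (Proposition~\ref{prop:relative-alt}) need only hold on $\relinterior\mathcal{D}(\mathcal{H})$, which consists of the full-rank density matrices and is contained in $\mathbb{H}^n_{++}$; the properties established on $\mathbb{H}^n_{++}$ thus restrict directly to $\mathcal{D}(\mathcal{H})$, giving the claim.

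I expect no serious obstacle, as the corollary is essentially a clean specialization of Theorem~\ref{thm:spec-mono}. The only points requiring care are confirming that $-\log$ is operator convex (not merely scalar convex), recording the identity $\tr[\log(\sigma)] = \log(\det(\sigma))$ so that the induced kernel is exactly the negative log determinant, and the bookkeeping around the additive constant $S(\rho)$ and the restriction to the relative interior.
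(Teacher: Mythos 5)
Your proposal is correct and follows essentially the same route as the paper: it applies Theorem~\ref{thm:spec-mono} with $f(x)=-\log(x)$ (operator convex, equivalently operator concavity of $\log$) and $A=\rho$, together with the identity $\log(\det(\sigma))=\tr[\log(\sigma)]$. The additional bookkeeping you provide — the additive constant $S(\rho)$ and the restriction from $\mathbb{H}^n_{++}$ to $\relinterior\mathcal{D}(\mathcal{H})$ — is left implicit in the paper's one-line proof but is handled correctly in your write-up.
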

\begin{proof}
    This follows from recognizing $f(x)=\log(x)$ is operator concave~\cite[Theorem 2.6]{carlen2010trace}, the identity $\log(\det(\sigma))=\tr[\log(\sigma)]$, and Theorem~\ref{thm:spec-mono}.
\end{proof}


We expect that if $\mathcal{S}(\mathcal{H})$ can be characterized using constraints of the form~\eqref{eqn:constr-constr-min-b} and~\eqref{eqn:constr-constr-min-c}, we can use Algorithm~\ref{alg:pdhg} to compute the relative entropy of a quantum resource. One important example is when we consider $\mathcal{S}(\mathcal{H})$ to be the set of all separable states, in which case~\eqref{eqn:rel-entr-res} recovers the \emph{relative entropy of entanglement}. Although the set of separable states is convex, it is well-known that it is NP-hard to determine if a general quantum state belongs to the set~\cite{gurvits2003classical}. Instead, a relaxation that is commonly used is the positive partial transpose (PPT) criterion~\cite{horodecki1996separability}
\begin{equation}
    \mathsf{PPT}(\mathcal{H}_A\otimes\mathcal{H}_B) = \{ \rho \in \mathcal{D}(\mathcal{H}_A\otimes\mathcal{H}_B) : \rho^{T_B} \succeq 0 \},
\end{equation}
where $\rho^{T_B}$ denotes the partial transpose with respect to system $\mathcal{H}_B$. Membership of $\mathsf{PPT}$ is a necessary condition to be a separable state. For $2\times2$ and $2\times3$ Hilbert systems, this is also a sufficient condition~\cite{horodecki1996separability}. Importantly, the partial transpose is a linear operation, and therefore Algorithm~\ref{alg:pdhg} can be used to solve for the relative entropy of entanglement. To implement this, we use the kernel function $\varphi(\wc)=-\log(\det(\wc))$, the primal domain $\mathcal{C}=\mathcal{D}(\mathcal{H}_A\otimes\mathcal{H}_B)$, and the PPT criterion is dualized. 

Before we introduce the full set of PDHG iterates, we make a few comments on computing the primal step, which can be expressed as
\begin{align}
    \sigma^{k+1} = \argmin_{\sigma \in \mathcal{D}(\mathcal{H}_A\otimes\mathcal{H}_B)} \mleft\{ \inp{\nabla g(\sigma^k) - (\bar{Z}^{k+1})^{T_B}}{\sigma} - \frac{1}{\tau_k}D_\varphi\divx{\sigma}{\sigma^k} \mright\},
\end{align}
where $\varphi(\wc)=-\log(\det(\wc))$. By solving for the KKT conditions and using~\eqref{eqn:tr-f-md} for $h(x)=-\log(x)$, we obtain the following expression
\begin{equation}
    \sigma^{k+1} = [ [\sigma^k]^{-1} + \tau_k( \nabla g(\sigma^k) - (\bar{Z}^{k+1})^{T_B} ) + \nu^{k+1}\mathbb{I} ]^{-1},
\end{equation}
where $\nu^{k+1} \in \mathbb{R}$ is a Lagrange multiplier chosen such that $\sigma^{k+1} \in \mathcal{D}(\mathcal{H}_A\otimes\mathcal{H}_B)$ is satisfied. Note that unlike the case when $\varphi(\wc)=-S(\wc)$, here we need to numerically solve for this $\nu^{k+1}$. To satisfy the unit trace constraint, $\nu^{k+1}$ must satisfy
\begin{equation}\label{eqn:pdhg-ree-d}
    \sum_{i=1}^{nm} \frac{1}{\lambda_i^{k+1} + \nu^{k+1}} = 1,
\end{equation}
where $\lambda_i^{k+1}$ are the eigenvalues of 
\begin{equation}\label{eqn:sigma-bar}
    \bar{\sigma}^{k+1} = [\sigma^{k}]^{-1} + \tau_k(\nabla g(\sigma^k) - (\bar{Z}^{k+1})^{T_B}).
\end{equation}
To satisfy the positivity constraint, $\nu^{k+1}$ must also satisfy $\nu^{k+1} > -\min_i\{\lambda_i^{k+1}\}$. Over this interval, the function on the left-hand side of~\eqref{eqn:pdhg-ree-d} is monotonically decreasing, and takes all values in the open interval $(0, +\infty)$. Therefore, there will always be a unique solution to the equation, which we can efficiently solve for using algorithms such as Newton's method or the bisection method. More details about solving this univariate equation using Newton's method can be found in~\cite[Section 2.5]{jiang2022bregman}. Notably, as~\eqref{eqn:pdhg-ree-d} arises from the optimality conditions of minimizing a self-concordant function, Newton's method with backtracking is guaranteed to converge.

Therefore, we obtain the PDHG iterates $(\sigma^k, Z^k)$ of the form
\begin{subequations}
    \begin{align}
        & \bar{Z}^{k+1} = Z^k + \theta_k (Z^k - Z^{k-1})\\
        & \sigma^{k+1} = [\bar{\sigma}^{k+1} + \nu^{k+1}\mathbb{I} ]^{-1} \label{eqn:pdhg-ree-b}\\
        & Z^{k+1} = (Z^k - \gamma_k (Z^{k+1})^{T_B})_+,
    \end{align}
\end{subequations}
where $\bar{\sigma}^{k+1}$ is given by~\eqref{eqn:sigma-bar}, $\nu^{k+1}\in\mathbb{R}$ is the largest root of~\eqref{eqn:pdhg-ree-d}, and $Z\in\mathcal{B}(\mathcal{H}_A\otimes\mathcal{H}_B)_+$ is the dual variable corresponding to the PPT constraint. 

\begin{rem}
    It is also possible to implement PDHG by splitting the constraints so that we have a primal domain $\mathcal{C}=\mathcal{B}(\mathcal{H}_A\otimes \mathcal{H}_B)_+$, and dualizing both the PPT and unit trace constraint. This would allow us to directly compute the primal step without having to numerically compute for the Lagrange multiplier $\nu^{k+1}$. However, the negative log determinant would no longer be strongly-convex on $\mathcal{C}$, and therefore it is not clear whether there exists a suitable step size which satisfies~\eqref{eqn:pdhg-cond}, which would otherwise allow us to guarantee convergence rates.
\end{rem}

We show experimental results for the computation of the approximate relative entropy of entanglement over PPT states in Table~\ref{tab:ree} and Figure~\ref{fig:ree}. The equation~\eqref{eqn:pdhg-ree-d} is solved using the Newton's method. All states $\rho$ are randomly generated. Results show PDHG solves up to $70$ times faster to medium-accuracy solutions than MOSEK. In Figure~\ref{fig:ree}, we also show convergence results of PDHG for a range of input states $\rho$ with various condition numbers. These states are obtained by linearly interpolating between a randomly generated density matrix with either the maximally mixed state or a rank-one state to acheive a desired condition number. We see that the convergence rate is of PDHG appears to be related to how well conditioned the state $\rho$ is. Notably, we generally see a sublinear convergence rate for instances corresponding to high condition numbers, and linear convergence rates for instances corresponding to small condition numbers. Propositions~\ref{prop:conv-rate} and Corollary~\ref{cor:qre-relative} tell us mirror descent applied to compute the relative entropy of a quantum resource~\eqref{eqn:rel-entr-res} will achieve faster linear convergence the more well-conditioned $\rho$ is, and empirically it seems that PDHG shares similar convergence properties.

\begin{figure*}[]
\centering
\pgfplotstableread[col sep=comma]{figures/data/conv.csv}\data
\begin{tikzpicture}
    \definecolor{clr1}{RGB}{0,0,0}
    \definecolor{clr2}{RGB}{33,33,33}
    \definecolor{clr3}{RGB}{66,66,66}
    \definecolor{clr4}{RGB}{100,100,100}
    \definecolor{clr5}{RGB}{133,133,133}
    
    \begin{semilogyaxis}[
        name=mainplot,
        width = 0.375\textwidth,
        height = 0.375\textwidth,
        xmin = 0.0, xmax = 334,
        ymin = 1e-5, ymax=1,
        xlabel=Iteration $k$,
        ylabel=Optimality gap (nats),
        label style = {font=\footnotesize},
        ticklabel style = {font=\footnotesize},
        title style = {font=\footnotesize},
        legend pos  = north east,
        legend cell align={left},
        legend style={fill=white, fill opacity=0.6, draw opacity=1,text opacity=1, font=\footnotesize},
        no markers,
        every axis plot/.append style={very thick},
        grid=major,
        major grid style={line width=.1pt,draw=gray!20},
        yminorticks=false,
        title = {(a) $n=m=2$}
        ]
        
        \addplot[clr5, dashdotted] table[x=i , y=ree2_10] {\data}; 
        \addplot[clr3, densely dashed] table[x=i , y=ree2_100] {\data};
        \addplot[clr1] table[x=i , y=ree2_1000] {\data};

        \legend{
            {$\bm{\kappa=9\times10^0}$},
            {$\kappa=1\times10^2$},
            {$\kappa=1\times10^3$}
        }

    \end{semilogyaxis}
\end{tikzpicture}
\begin{tikzpicture}
    \definecolor{clr1}{RGB}{0,0,0}
    \definecolor{clr2}{RGB}{33,33,33}
    \definecolor{clr3}{RGB}{66,66,66}
    \definecolor{clr4}{RGB}{100,100,100}
    \definecolor{clr5}{RGB}{133,133,133}

    \begin{semilogyaxis}[
        name=secondplot,
        at={(mainplot.north east)},
        xshift=0.05cm,
        anchor=north west,   
        width = 0.375\textwidth,
        height = 0.375\textwidth,
        xmin = 0.0, xmax = 582,
        ymin = 1e-5, ymax=1,
        xlabel=Iteration $k$,
        label style = {font=\footnotesize},
        ticklabel style = {font=\footnotesize},
        title style = {font=\footnotesize},
        legend pos  = north east,
        legend cell align={left},
        legend style={fill=white, fill opacity=0.6, draw opacity=1,text opacity=1, font=\footnotesize},
        no markers,
        every axis plot/.append style={very thick},
        grid=major,
        major grid style={line width=.1pt,draw=gray!20},
        yminorticks=false,
        yticklabels={,,},
        title = {(b) $n=m=5$}
        ]
        
        \addplot[clr5, dashdotted] table[x=i , y=ree5_10] {\data}; 
        \addplot[clr3, densely dashed] table[x=i , y=ree5_300] {\data};
        \addplot[clr1] table[x=i , y=ree5_1000] {\data};

        \legend{
            {$\kappa=2\times10^1$},
            {$\kappa=3\times10^2$},
            {$\bm{\kappa=2\times10^4}$}
        }
        
    \end{semilogyaxis}
\end{tikzpicture}
\begin{tikzpicture}
    \definecolor{clr1}{RGB}{0,0,0}
    \definecolor{clr2}{RGB}{33,33,33}
    \definecolor{clr3}{RGB}{66,66,66}
    \definecolor{clr4}{RGB}{100,100,100}
    \definecolor{clr5}{RGB}{133,133,133}

    \begin{semilogyaxis}[
        at={(secondplot.north east)},
        xshift=0.28cm,
        anchor=north west,  
        width = 0.375\textwidth,
        height = 0.375\textwidth,
        xmin = 0.0, xmax = 615,
        ymin = 1e-5, ymax=1,
        xlabel=Iteration $k$,
        label style = {font=\footnotesize},
        ticklabel style = {font=\footnotesize},
        title style = {font=\footnotesize},
        legend pos  = north east,
        legend cell align={left},
        legend style={fill=white, fill opacity=0.6, draw opacity=1,text opacity=1, font=\footnotesize},
        no markers,
        every axis plot/.append style={very thick},
        grid=major,
        major grid style={line width=.1pt,draw=gray!20},
        yminorticks=false,
        yticklabels={,,},
        title = {(c) $n=m=9$}
        ]
        
        \addplot[clr5, dashdotted] table[x=i , y=ree9_10] {\data}; 
        \addplot[clr3, densely dashed] table[x=i , y=ree9_1000] {\data};
        \addplot[clr1] table[x=i , y=ree9_100000] {\data};

        \legend{
            {$\kappa=1\times10^1$},
            {$\kappa=7\times10^2$},
            {$\bm{\kappa=7\times10^4}$}
        }
        
    \end{semilogyaxis}
\end{tikzpicture}
\caption{Convergence of PDHG to compute the relative entropy of entanglement of density matrices with condition numbers $\kappa$ over PPT states for states defined over the tensor product of two Hilbert spaces with dimensions $n$ and $m$. Plots corresponding to bolded legend items correspond to problem instances in Table~\ref{tab:ree}.}
\label{fig:ree}
\end{figure*}
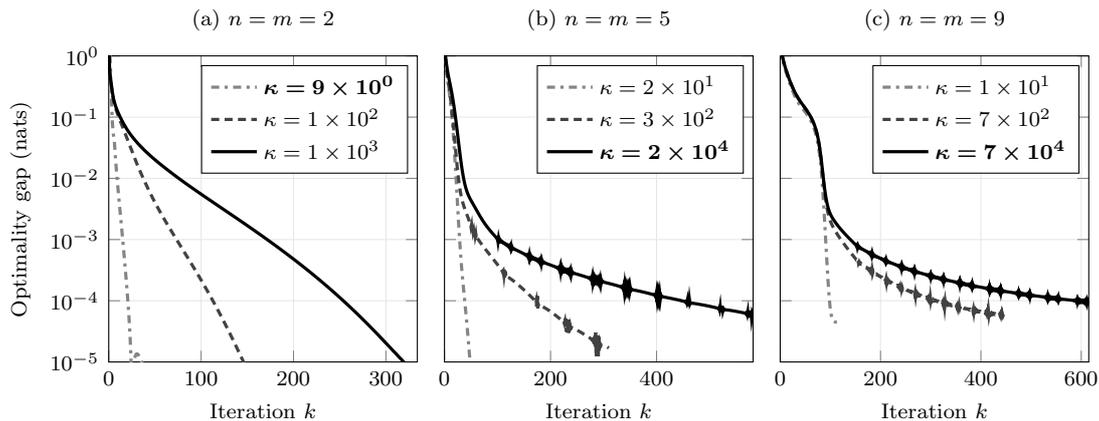

\begin{table}
\caption{Computation of the approximate relative entropy of entanglement over PPT states for states defined over the tensor product of two Hilbert spaces with dimensions $n$ and $m$.}\label{tab:ree}
\vspace{-0.5em}
\footnotesize
\centering
\begin{tabular*}{\columnwidth}{@{\extracolsep{\fill}}ccccccc@{\extracolsep{\fill}}}
\toprule
\multicolumn{1}{c}{} & & \multicolumn{3}{c}{PDHG ($\kappa=1$)} & \multicolumn{2}{c}{MOSEK w/ CVXQUAD} \\ \cmidrule{3-5}\cmidrule{6-7}
$n$ & $m$ & Time (s) & Iter. & Opt. gap (nats) & Time (s) & Opt. gap (nats) \\ \midrule
$2$ & $2$ & $0.020$ & $43$ & $4.0\times10^{-6}$ & $1.400$ & $1.5\times10^{-8}$ \\
$5$ & $5$ & $0.183$ & $582$ & $6.0\times10^{-5}$ & $1.966$ & $1.5\times10^{-8}$ \\
$9$ & $9$ & $2.159$ & $615$ & $9.6\times10^{-5}$ & $91.05$ & $1.5\times10^{-8}$ \\ \bottomrule
\end{tabular*}
\end{table}

\section{Concluding Remarks}

In this work, we have shown that classical and quantum Blahut-Arimoto algorithms can be interpreted as a special application of the mirror descent algorithm, and that existing convergence results are recovered under relative smoothness and strong convexity analysis. This interpretation allows us to extend these algorithms to other applications in information theory, either by using different kernel functions or algorithmic variations such as PDHG which allow us to solve problems with arbitrary linear constraints. 

As the Bregman proximal framework is very general, we believe that it can be applied to many other convex optimization problems in information theory. We also believe that compared to the alternating optimization interpretation traditionally used to derive Blahut-Arimoto algorithms~\cite{arimoto1972algorithm,blahut1972computation,nagaoka1998algorithms,osawa2001numerical,li2019blahut,li2019computing, ramakrishnan2020computing}, the mirror descent interpretation allows for a more straightforward implementation and generalization to other problems, as all it requires is the computation of the objective function's gradient, rather than trying to find a suitable bivariate extension function of which we are not aware of any straightforward way of doing for general problems. 

The main difficulty in implementing Bregman proximal methods on new problems is that identifying problems which are relatively smooth with respect to a suitable kernel function is a non-trivial task in general. Conversely, it can also be difficult to determine a suitable kernel function which a problem is relatively smooth with respect to. Similar issues hold for Blahut-Arimoto-like algorithms following a framework similar to~\cite{ramakrishnan2020computing}. Some works~\cite{li2019convergence,you2022minimizing} have established weak convergence results of mirror descent without requiring Lipschitz gradient nor relative smoothness properties of the objective function. However, if guarantees on convergence rates to the global optimum are desired then these additional assumptions on the objective may be required. Every convex function is clearly $1$-smooth and $1$-strongly convex relative to itself. However, this does not lead to a practical algorithm as solving the mirror descent iterates becomes identical to solving the original problem. Recently, it was shown in~\cite{tsai2023online} how the objective function for the maximum-likelihood quantum state tomography problem is $1$-smooth relative to the negative log-determinant, and in~\cite{tsai2022faster} how an online mirror descent algorithm could be used to solve this problem. Our PDHG framework would allow us to extend this algorithm to account for maximum-likelihood state tomography problems with affine constraints.

As another possible avenue to find problems for which we can implement the Bregman proximal framework, we recall that Theorem~\ref{thm:spec-mono} introduced a generalized method of establishing relative smoothness and strong convexity of a class of functions $X\mapsto \tr[Af(X)]$ for operator convex functions $f$. In Remark~\ref{rem:trace-func-kernel} we also show how mirror descent iterates can be efficiently computed for these functions. We are therefore interested in whether there exists other applications which can utilize these tools. For example, the standard divergences introduced in~\cite{bach2022sum} are defined using operator functions, and can all be analyzed using Theorem~\ref{thm:spec-mono} by noting that $f(x)=x^\alpha$ is operator convex for $\alpha\in[-1,0]\cup[1,2]$ and operator concave for $\alpha\in[0,1]$ (see, e.g.,~\cite{carlen2010trace}).

All of the problems studied in this paper were convex optimization problems. However, there are several important related problems which are posed as non-convex problems. For example, there is interest in computing the cq channel capacity over both the input probability distribution and quantum state alphabet. However, Holevo information is convex in the quantum state alphabet, making the objective function non-convex. Other similar problems include the classical~\cite{tishby2000information} and quantum~\cite{datta2019convexity} information bottleneck functions, which involve non-convex inequality constraints. One common way to find local optima of general non-convex problems is through a convex-concave decomposition, of which there exist variations which utilize proximal gradient iterations~\cite{souza2016global,wen2018proximal}. Therefore, it will be interesting to see if existing algorithms which solve for these quantities~\cite{nagaoka1998algorithms,tishby2000information,salek2018quantum,hayashi2023efficient} share the same interpretation or can be improved by a Bregman proximal convex-concave decomposition or similar method.

{\appendix

\section{Backtracking Primal-Dual Hybrid Gradient} \label{appdx:bcktrack-pdhg}

An implementation issue with PDHG is that it can be difficult to determine suitable step sizes $\tau$ and $\gamma$ which satisfy~\eqref{eqn:pdhg-cond}. If the kernel function $\varphi$ is strongly convex, then the simplified condition~\eqref{eqn:pdhg-cond-alt} provides an easier way to choose these step sizes. However, obtaining a tight bound on the relative smoothness parameter $L$ or computing $\norm{A}$ may be non-trivial tasks. Even when these constants can be easily computed, step sizes obtained from~\eqref{eqn:pdhg-cond-alt} may be too conservative. 

To resolve this issue, we introduce a backtracking method summarized in Algrotihm~\ref{alg:pdhg-backtrack} which adaptively chooses step sizes to satisfy conditions required for convergence. Notably, the condition does not require us to know $L$ or $\norm{\mathcal{A}}$, and is therefore easily computable. This algorithm is based on the backtracking method introduced in~\cite{jiang2022bregman} which accounts for unknown $\norm{\mathcal{A}}$. Our algorithm also accounts for unknown $L$ by using a similar approach as~\cite[Section 5]{malitsky2018first}. Note that the backtracking exit criterion~\eqref{eqn:pdhg-exit} can be interpreted as a combination of condition~\eqref{eqn:pdhg-cond} and relative smoothness as characterized by Proposition~\ref{prop:relative-alt}(a-iii). 

\begin{algorithm*}
    \caption{Backtracking primal-dual hybrid gradient}
    \begin{algorithmic}
        \Input{Objective function $f$, Legendre reference function $\varphi$, primal domain $\mathcal{C}\subseteq\domain\varphi$,}
        \Indent{linear constraint parameters $\mathcal{A}, b$, backtracking parameters $\alpha<1$, $\bar{\theta}\geq1$.}
        \State \textbf{Initialize:} Initial points $x^{-1}, x^0\in\mathcal{C}$, $z^0,z^{-1}\in\mathcal{Z}$, step sizes $\tau_{-1}, \gamma_{-1}>0$.
        \For{$k=0,1,\ldots$}{}
            \State \textbf{Step 0}: Initialize $\theta_k = \Bar{\theta}$.
            \State \textbf{Step 1}: Compute step sizes $\tau_k = \theta_k \tau_{k-1}$ and $\gamma_k = \theta_k \gamma_{k-1}$.
            \State \textbf{Step 2}: Perform primal-dual hybrid gradient steps
            \begin{subequations}
                \begin{align}
                    &\Bar{z}^{k+1}= z^k + \theta_k(z^k - z^{k-1})\\
                    &\displaystyle x^{k+1} = \argmin_{x \in \mathcal{C}} \mleft\{ \inp{\nabla f(x) + \mathcal{A}^\dag (\Bar{z}^{k+1})}{x} + \frac{1}{\tau_k} D_\varphi\divx{x}{x^k} \mright\}\\
                    &\displaystyle z^{k+1} = \argmin_{z \in \mathcal{Z}} \biggl\{ -\inp{z}{\mathcal{A}(x^{k+1}) - b} + \frac{1}{2\gamma_k} \norm{z - z^k}_2^2 \biggr\}
                \end{align}
            \end{subequations}
            \State \textbf{Step 3}: Check if backtracking exit criterion is satisfied
            \begin{gather}
                \begin{aligned}
                    \MoveEqLeft[8] f(x^{k+1}) - f(x^k) - \inp{\nabla f(x^k)}{x^{k+1} - x^k} \\
                    &\leq \frac{1}{\tau_k} D_\varphi\divx{x^{k+1}}{x^k} + \frac{1}{2\gamma_k}\norm{z^{k+1} - \Bar{z}^{k+1}}_2^2 - \inp{z^{k+1} - \Bar{z}^{k+1}}{\mathcal{A}(x^{k+1} - x^k)}. \label{eqn:pdhg-exit}
                \end{aligned}                
            \end{gather}
            \State \hspace{3em} If not, then update $\theta_k \coloneqq \alpha \theta_k$ and go to Step 1.
        \EndFor
        \Output{Approximate primal-dual solution $(x^{k+1}, z^{k+1})$.}
    \end{algorithmic}
    \label{alg:pdhg-backtrack}
\end{algorithm*}

We now present the following convergence result for the backtracking algorithm.

\begin{prop}\label{prop:pdhg-terminate}
    Consider Algorithm~\ref{alg:pdhg-backtrack} to solve the convex optimization problem~\eqref{eqn:constr-constr-min}. If $f$ is $L$-smooth relative to $\varphi$ and $\varphi$ is strongly convex with respect to some norm, then the step sizes $(\tau_k, \gamma_k)$ are bounded below by
    \begin{align*}
        &\tau_k \geq \tau_{\mathrm{min}} \coloneqq \min \biggl\{ \tau_{-1}, \alpha\biggl( \sqrt{ \frac{L^2\kappa^2}{4\norm{\mathcal{A}}^4} + \frac{\kappa}{\norm{\mathcal{A}}^2} } - \frac{L\kappa}{2\norm{\mathcal{A}}^2} \biggr) \biggr\}\\
        &\gamma_k \geq \gamma_{\mathrm{min}} \coloneqq \tau_{\mathrm{min}} / \kappa,
    \end{align*}
    where $\kappa \coloneqq \tau_{-1} / \gamma_{-1}$. It also follows that the backtracking procedure will always terminate.
\end{prop}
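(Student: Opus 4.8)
The plan is to show that the exit criterion \eqref{eqn:pdhg-exit} is automatically met once the step sizes are small enough, and then to argue that the multiplicative backtracking can never drive the step sizes below a fixed positive threshold. I would first record the invariant that the primal--dual ratio is preserved: each backtracking reduction rescales both $\tau_{k-1}$ and $\gamma_{k-1}$ by the common factor $\theta_k$, so $\tau_k/\gamma_k = \tau_{-1}/\gamma_{-1} = \kappa$ for every $k$, giving $\gamma_k = \tau_k/\kappa$. Any lower bound on $\tau_k$ therefore yields the stated bound on $\gamma_k$ for free.

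The heart of the argument is to find a sufficient condition for \eqref{eqn:pdhg-exit}. Applying relative smoothness in the form of Proposition~\ref{prop:relative-alt}(a-iii) at $x = x^{k+1}$, $y = x^k$ bounds the left-hand side of \eqref{eqn:pdhg-exit} above by $L\,D_\varphi\divx{x^{k+1}}{x^k}$. After this replacement, the exit criterion is implied by
\[
    \mleft(\frac{1}{\tau_k} - L\mright) D_\varphi\divx{x^{k+1}}{x^k} + \frac{1}{2\gamma_k}\norm{z^{k+1} - \Bar{z}^{k+1}}_2^2 \geq \inp{z^{k+1} - \Bar{z}^{k+1}}{\mathcal{A}(x^{k+1} - x^k)},
\]
which is precisely condition \eqref{eqn:pdhg-cond} at $(x,x',z,z') = (x^{k+1}, x^k, z^{k+1}, \Bar{z}^{k+1})$. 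Since $\varphi$ is strongly convex with respect to a norm, Remark~\ref{ref:pdhg-alt} guarantees that \eqref{eqn:pdhg-cond} holds for all such points whenever the simplified condition \eqref{eqn:pdhg-cond-alt} is satisfied. Substituting $\gamma_k = \tau_k/\kappa$ into \eqref{eqn:pdhg-cond-alt} and clearing denominators turns it into the quadratic inequality $\norm{\mathcal{A}}^2\tau_k^2 + L\kappa\tau_k - \kappa \leq 0$, whose unique positive root is exactly
\[
    \tau^\star \coloneqq \sqrt{\frac{L^2\kappa^2}{4\norm{\mathcal{A}}^4} + \frac{\kappa}{\norm{\mathcal{A}}^2}} - \frac{L\kappa}{2\norm{\mathcal{A}}^2}.
\]
Hence the exit criterion is certain to hold as soon as $\tau_k \leq \tau^\star$.

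Termination is then immediate: each failed backtracking step multiplies $\tau_k$ by $\alpha < 1$, so after finitely many reductions $\tau_k$ falls below $\tau^\star$ and the loop exits. For the lower bound I would induct on $k$ to show $\tau_k \geq \tau_{min} = \min\{\tau_{-1}, \alpha\tau^\star\}$. If iteration $k$ needs no backtracking, then $\tau_k = \Bar{\theta}\tau_{k-1} \geq \tau_{k-1} \geq \tau_{min}$ by $\Bar{\theta} \geq 1$ and the inductive hypothesis. If backtracking does occur, the last \emph{rejected} step size equals $\tau_k/\alpha$, and being rejected it must exceed $\tau^\star$ (otherwise the sufficiency just established would have forced acceptance), so $\tau_k > \alpha\tau^\star \geq \tau_{min}$. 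Together with $\gamma_k = \tau_k/\kappa$ this delivers both claimed bounds.

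The main obstacle is the inductive lower bound, which rests on the contrapositive that every rejected step size lies strictly above $\tau^\star$; this is exactly where the sufficiency of \eqref{eqn:pdhg-cond-alt} and its reformulation as a quadratic in $\tau_k$ are used. The identification of the exit criterion with \eqref{eqn:pdhg-cond} at the running iterates is the conceptual crux, while the remaining manipulations are routine.
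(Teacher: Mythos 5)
Your proof is correct and follows essentially the same route as the paper: both reduce the exit criterion~\eqref{eqn:pdhg-exit} to the sum of the relative smoothness inequality from Proposition~\ref{prop:relative-alt}(a-iii) and condition~\eqref{eqn:pdhg-cond}, then use strong convexity of $\varphi$ so that the simplified condition~\eqref{eqn:pdhg-cond-alt} suffices. The only difference is that the paper delegates the remaining quantitative work to~\cite[Section 3.3.1]{jiang2022bregman}, whereas you carry it out explicitly---the preserved ratio $\tau_k/\gamma_k=\kappa$, the quadratic inequality $\norm{\mathcal{A}}^2\tau_k^2+L\kappa\tau_k-\kappa\leq0$ with positive root $\tau^\star$, and the induction showing every rejected candidate exceeds $\tau^\star$---which is precisely the delegated content.
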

\begin{proof}
    As the backtracking exit condition~\eqref{eqn:pdhg-exit} is the sum of~\eqref{eqn:pdhg-cond} and the relative smoothness condition in Proposition~\ref{prop:relative-alt}(a-iii), a sufficient condition for the exit condition to hold is for these two conditions to hold independently. As $f$ is $L$-smooth relative to $\varphi$, a sufficient condition is therefore for~\eqref{eqn:pdhg-cond-alt} to hold. The remaining proof follows from a similar argument as~\cite[Section 3.3.1]{jiang2022bregman}.
\end{proof}

\begin{prop}\label{prop:appdx-pdhg-conv}
    Consider Algorithm~\ref{alg:pdhg-backtrack} to solve the convex optimization problem~\eqref{eqn:constr-constr-min}. Let $f^*$ represent the optimal value of this problem and $(x^*, z^*)$ be any corresponding optimal primal-dual solution. If $f$ is $L$-smooth relative to $\varphi$, then the iterates $(x^k, z^k)$ satisfy
    \begin{align}
        \mathcal{L}(x_{\mathrm{avg}}^k, z) - \mathcal{L}(x, z_{\mathrm{avg}}^k) \leq \frac{1}{k\tau_{\mathrm{min}}} \mleft( D_\varphi\divx{x}{x^0} + \frac{1}{2\kappa}\norm{z - z^0}^2_2 \mright),
    \end{align}
    for all $x\in\mathcal{C}$, $z\in\mathcal{Z}$, and $k\in\mathbb{N}$, where $\tau_{\mathrm{min}}=\min_i\{\tau_i\}$,
    \begin{equation*}
        x_{\mathrm{avg}}^k = \frac{1}{\sum_{i=1}^{k} \tau_{i-1}} \sum_{i=1}^{k} \tau_{i-1} x^i, 
    \end{equation*}
    and
    \begin{equation*}
        z_{\mathrm{avg}}^k = \frac{1}{\sum_{i=1}^{k} \tau_{i-1}} \sum_{i=1}^{k} \tau_{i-1} \Bar{z}^i.
    \end{equation*}
\end{prop}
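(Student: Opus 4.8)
The plan is to follow the standard ergodic analysis of primal--dual hybrid gradient methods (as in \cite{chambolle2016ergodic,jiang2022bregman}), adapted so that relative smoothness and the backtracking exit test \eqref{eqn:pdhg-exit} play the role that Lipschitz continuity and the fixed step-size condition \eqref{eqn:pdhg-cond} play in Proposition~\ref{prop:pdhg-conv}. First I would record the variational inequalities coming from the two subproblems. The primal update minimizes $\inp{\nabla f(x^k)+\mathcal{A}^\dag(\Bar{z}^{k+1})}{x}+\tfrac{1}{\tau_k}D_\varphi\divx{x}{x^k}$ over $\mathcal{C}$, so its optimality condition together with the three-point identity $\inp{\nabla\varphi(x^{k+1})-\nabla\varphi(x^k)}{x^{k+1}-x}=D_\varphi\divx{x}{x^{k+1}}+D_\varphi\divx{x^{k+1}}{x^k}-D_\varphi\divx{x}{x^k}$ gives, for every $x\in\mathcal{C}$,
\begin{equation*}
\inp{\nabla f(x^k)+\mathcal{A}^\dag(\Bar{z}^{k+1})}{x^{k+1}-x}\leq\tfrac{1}{\tau_k}\mleft(D_\varphi\divx{x}{x^k}-D_\varphi\divx{x}{x^{k+1}}-D_\varphi\divx{x^{k+1}}{x^k}\mright).
\end{equation*}
The Euclidean three-point identity applied to the dual update yields the analogue $\inp{\mathcal{A}(x^{k+1})-b}{z-z^{k+1}}\leq\tfrac{1}{2\gamma_k}(\norm{z-z^k}_2^2-\norm{z-z^{k+1}}_2^2-\norm{z^{k+1}-z^k}_2^2)$ for every $z\in\mathcal{Z}$.

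Next I would bound the primal--dual gap $\mathcal{L}(x^{k+1},z)-\mathcal{L}(x,z^{k+1})$ at each iteration. Splitting $f(x^{k+1})-f(x)=[f(x^{k+1})-f(x^k)]+[f(x^k)-f(x)]$, I use convexity of $f$ on the second bracket and the exit test \eqref{eqn:pdhg-exit} on the first; by the discussion preceding Proposition~\ref{prop:pdhg-terminate}, \eqref{eqn:pdhg-exit} is precisely relative smoothness (Proposition~\ref{prop:relative-alt}(a-iii)) combined with the coupling condition \eqref{eqn:pdhg-cond}, and it holds whenever the backtracking loop exits. Substituting the primal prox inequality to eliminate $\inp{\nabla f(x^k)}{x^{k+1}-x}$, cancelling the $\tfrac{1}{\tau_k}D_\varphi\divx{x^{k+1}}{x^k}$ terms, and then inserting the dual prox inequality, I expect a per-iteration estimate of the form
\begin{align*}
\mathcal{L}(x^{k+1},z)-\mathcal{L}(x,z^{k+1})&\leq\tfrac{1}{\tau_k}\mleft(D_\varphi\divx{x}{x^k}-D_\varphi\divx{x}{x^{k+1}}\mright)+\tfrac{1}{2\gamma_k}\mleft(\norm{z-z^k}_2^2-\norm{z-z^{k+1}}_2^2\mright)+E_k,
\end{align*}
where $E_k$ gathers the bilinear coupling term $\inp{z^{k+1}-\Bar{z}^{k+1}}{\mathcal{A}(x^k-x)}$ together with the leftover squared-norm slack $-\tfrac{1}{2\gamma_k}\norm{z^{k+1}-z^k}_2^2$ and the residual cross term inherited from \eqref{eqn:pdhg-exit}.

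The heart of the argument is to weight each per-iteration inequality by the corresponding primal step size, reproducing the $\tau_{i-1}$-weights of the averages $x_{avg}^k$ and $z_{avg}^k$, so that the Bregman and squared-norm differences telescope with constant coefficients (this is where the factors $D_\varphi\divx{x}{x^0}$ and $\tfrac{1}{2\kappa}\norm{z-z^0}_2^2$ in the stated bound originate, the constant being fixed by the invariant ratio $\kappa=\tau_k/\gamma_k$). Writing $\Bar{z}^{k+1}-z^k=\theta_k(z^k-z^{k-1})$ and using the backtracking relations $\theta_k=\tau_k/\tau_{k-1}=\gamma_k/\gamma_{k-1}$, the extrapolation term in $E_k$ splits into a difference of consecutive quantities $\inp{z^{k+1}-z^k}{\mathcal{A}(x^{k+1}-x)}$ that telescopes, plus a residual $\inp{z^{k+1}-z^k}{\mathcal{A}(x^{k+1}-x^k)}$. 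This residual, together with the $-\tfrac{1}{2\gamma_k}\norm{z^{k+1}-z^k}_2^2$ slack and the $D_\varphi\divx{x^{k+1}}{x^k}$ slack, is exactly the quantity the exit test \eqref{eqn:pdhg-exit} (equivalently \eqref{eqn:pdhg-cond}) forces to be nonpositive, so it may be discarded; the initial bilinear term vanishes since $z^0=z^{-1}$, and the terminal nonnegative contributions are dropped.

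Finally I would invoke convexity of $\mathcal{L}(\wc,z)$ and affineness of $\mathcal{L}(x,\wc)$ to pass, via Jensen's inequality, from the weighted sum of gaps to a single gap at $x_{avg}^k$ and $z_{avg}^k$, and bound the normalizing factor $\sum_{i=0}^{k-1}\tau_{i-1}\geq k\,\tau_{min}$ to obtain the claimed rate. The main obstacle I anticipate is the bookkeeping in this telescoping step: because both the step sizes and the extrapolation parameter $\theta_k$ vary with $k$, the bilinear coupling only telescopes once the weighting is matched to $\theta_k\tau_{k-1}=\tau_k$ and $\theta_k\gamma_{k-1}=\gamma_k$, and one must verify that the weighted residual cross term remains dominated by the weighted slack terms so that the exit test can legitimately absorb it. Controlling the boundary contribution (the terminal bilinear term, bounded using strong convexity of $\varphi$ against the dual squared norm) is the remaining technical point; all of this parallels the argument in \cite[Section 3.3.1]{jiang2022bregman}.
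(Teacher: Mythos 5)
Your proposal follows essentially the same route as the paper's proof: both derive the primal and dual Bregman proximal (three-point) inequalities, combine them with convexity of $f$ and the backtracking exit condition \eqref{eqn:pdhg-exit} to obtain a per-iteration bound on the Lagrangian gap with telescoping Bregman and squared-norm differences, and then conclude by step-size-weighted averaging, Jensen's inequality, and the bound $\sum_i \tau_i \geq k\tau_{min}$, deferring the remaining bookkeeping to \cite{jiang2022bregman}. The only organizational difference is that you first state the per-iteration gap at $z^{k+1}$ and collect the cross term $\inp{z^{k+1}-\Bar{z}^{k+1}}{\mathcal{A}(\wc)}$ into an error term $E_k$, whereas the paper works directly with the gap evaluated at the extrapolated point $\Bar{z}^{k+1}$ (matching the definition of $z_{avg}^k$); the quantities being controlled are identical, so this is not a genuinely different argument.
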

\begin{proof}
    Using the Bregman proximal inequality~\cite[Lemma 3.2]{chen1993convergence}, we first establish that
    \begin{align}
        \inp{\nabla f(x^k) + \mathcal{A}^\dag(\Bar{z}^{k+1})}{x^{k+1} - x} \leq \frac{1}{\tau_k} (D_\varphi\divx{x}{x^k} - D_\varphi\divx{x^{k+1}}{x^k} - D_\varphi\divx{x}{x^{k+1}}), \label{eqn:pdhd-p-bpi}
    \end{align}
    for all $x\in\mathcal{C}$, and
    \begin{align}
        \inp{z - z^{k+1}}{\mathcal{A}(x^{k+1}) - b} \leq \frac{1}{\gamma_k} \inp{z^{k+1} - z^k}{z - z^{k+1}} ,\label{eqn:pdhg-proof-b}
    \end{align}
    for all $z\in\mathcal{Z}$. Using this second result, we can use a similar argument as~\cite{jiang2022bregman} to show that
    \begin{align}
        \inp{z^{k+1} - \Bar{z}^{k+1}}{\mathcal{A}(x^k) - b} \leq \frac{1}{2\gamma_k} (\norm{z^{k+1} - z^k}^2_2 - \norm{z^{k+1} - \Bar{z}^{k+1}}^2_2 - \norm{\Bar{z}^{k+1} - z^{k}}^2_2). \label{eqn:pdhg-proof-c}
    \end{align}
    Now, subtracting $f(x)$ for an arbitrary $x\in\mathcal{C}$ from both sides of the exit condition~\eqref{eqn:pdhg-exit}, using convexity of $f$, and substituting in~\eqref{eqn:pdhd-p-bpi} gives
    \begin{align}
        f(x^{k+1}) - f(x) + \inp{\Bar{z}^{k+1}}{\mathcal{A}(x^{k+1} - x)} &\leq \frac{1}{\tau_k} (D_\varphi\divx{x}{x^k} - D_\varphi\divx{x}{x^{k+1}}) + \frac{1}{2\gamma_k}\norm{z^{k+1} - \Bar{z}^{k+1}}_2^2 \nonumber\\ 
        & \hphantom{{}\leq{}} - \inp{z^{k+1} - \Bar{z}^{k+1}}{\mathcal{A}(x^{k+1} - x^k)}. \label{eqn:pdhg-proof-d}
    \end{align}
    By combining~\eqref{eqn:pdhg-proof-b}--\eqref{eqn:pdhg-proof-d} and using some algebraic manipulation, we can obtain
    \begin{align}
        &\mathcal{L}(x^{k+1}, z) - \mathcal{L}(x, \Bar{z}^{k+1}) \leq \frac{1}{\tau_k} (D_\varphi\divx{x}{x^k} - D_\varphi\divx{x}{x^{k+1}}) + \frac{1}{2\gamma_k} (\norm{z - z^k}^2_2 - \norm{z - z^{k+1}}^2_2). \label{eqn:pdhg-proof-z}
    \end{align}
    The rest of the proof follows from~\cite[Section 3.3.3]{jiang2022bregman}, which we include here to make the proof self-contained. Using convexity of $\mathcal{L}(\wc, z)$ and concavity of $\mathcal{L}(x, \wc)$, we get
    \begin{align*}
        \mathcal{L}(x^k_\mathrm{avg}, z) - \mathcal{L}(x, z^k_\mathrm{avg}) &\leq \frac{1}{\sum_{i=1}^{k} \tau_{i-1}} \sum_{i=1}^{k} \tau_{i-1}(\mathcal{L}(x^i, z) - \mathcal{L}(x, \Bar{z}^i))\\
        & \leq \frac{1}{k\tau_{\mathrm{min}}} \sum_{i=1}^{k} \biggl(D_\varphi\divx{x}{x^{i-1}} - D_\varphi\divx{x}{x^i}  + \frac{1}{2\kappa} (\norm{z - z^{i-1}}^2_2 - \norm{z - z^i}^2_2)\biggr).
    \end{align*}
    Taking a telescoping sum and using non-negativity of Bregman divergences produces the desired result.
\end{proof}

\section{Matrix-valued Gradients}\label{appdx:grad}

To use mirror descent to solve problems in quantum information theory, we require expressions for the gradients of functions with matrix-valued inputs. We introduce the tools from matrix analysis which allow us to compute these. 
\begin{defn}[Operator function]
Let $f: \mathbb{R}\rightarrow\bar{\mathbb{R}}$ be a continuous extended-real-valued function, and $X\in\mathbb{H}^{n}$ be a Hermitian matrix with spectral decomposition $X=\sum_i\lambda_iv_iv_i^\dag$. The function $f$ can be extended to matrices $X$ as follows
\begin{equation*}
    f(X) \coloneqq \sum_{i=1}^nf(\lambda_i)v_iv_i^\dag.
\end{equation*}
\end{defn}

\begin{rem}
    The matrix logarithms used to define von Neumann entropy and quantum relative entropy use precisely this definition of primary matrix functions. In particular, von Neumann entropy can be defined as $S(X)=\tr[f(X)]$ where $f(x)=-x\log(x)$. 
\end{rem}

\begin{lem}[{\cite[Theorem 6.6.30]{horn1994topics} and \cite[Theorem 3.23]{hiai2014introduction}}]\label{lem:dir-deriv}
    Let $f: \mathbb{R}\rightarrow\bar{\mathbb{R}}$ be a continuously differentiable scalar-valued function with derivative $f'$ defined on an open real interval $(a, b)$. Consider $X, V \in \mathbb{H}^n$ where $X$ has diagonalization $X=U\Lambda U^\dag$ with $\Lambda=\diag(\lambda_1,\ldots,\lambda_n)$, and $\lambda_i\in(a, b)$ for all $i=1,\ldots,n$.
    \begin{enumerate}[label=(\alph*)]
        \item The directional derivative of $f(X)$ along $V$ is
            \begin{equation}
                \mathsf{D}f(X)[V] = U [ f^{[1]}(\Lambda) \odot (U^\dag V U) ] U^\dag,
            \end{equation}
            where $\odot$ represents the Hadamard or element-wise product and $f^{[1]}(\Lambda)$ is the first-divided difference matrix whose $(i, j)$-th entry is given by $f^{[1]}(\lambda_{i}, \lambda_{j})$ where
            \begin{subequations}\label{eqn:fdd}
                \begin{align}
                    f^{[1]}(\lambda, \mu) &= \frac{f(\lambda) - f(\mu)}{\lambda - \mu}, \quad \textrm{if }\lambda\neq\mu, \\
                    f^{[1]}(\lambda, \lambda) &= f'(\lambda).
                \end{align}
            \end{subequations}
        \item The directional derivative of the trace functional $g(X)= \tr [f(X)]$ along $V$ is
            \begin{align}
                \mathsf{D}g(X)[V] = \tr[f'(X) V].
            \end{align}
    \end{enumerate}
\end{lem}

The following corollaries are straightforward consequences of the gradient being defined by $\mathsf{D}g(X)[V] = \inp{\nabla g(X)}{V}$ for all $V\in\mathbb{H}^n$.
\begin{cor}\label{cor:grad-tr-f}
For $X\in\interior\domain f$, the gradient of the trace functional $g(X)= \tr [f(X)]$ is
\begin{equation*}
    \nabla g(X) = f'(X).
\end{equation*}
\end{cor}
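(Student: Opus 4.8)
The plan is to read the gradient directly off the directional-derivative formula already supplied by Lemma~\ref{lem:dir-deriv}(b). That lemma gives, for every direction $V\in\mathbb{H}^n$,
\[
    \mathsf{D}g(X)[V] = \tr[f'(X)V].
\]
First I would recall that the inner product on $\mathbb{H}^n$ is the trace inner product $\inp{A}{B}=\tr[A^\dag B]$, and observe that $f'(X)$ is itself Hermitian: since $f'$ is a real scalar-valued function and $X\in\mathbb{H}^n$, the operator function $f'(X)$ is Hermitian, so $f'(X)^\dag=f'(X)$. Consequently
\[
    \tr[f'(X)V] = \tr[f'(X)^\dag V] = \inp{f'(X)}{V}.
\]

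Next I would compare this with the defining relation $\mathsf{D}g(X)[V]=\inp{\nabla g(X)}{V}$, valid for all $V\in\mathbb{H}^n$. Combining the two displays yields $\inp{\nabla g(X)-f'(X)}{V}=0$ for every $V\in\mathbb{H}^n$. Since the trace inner product is nondegenerate, taking $V=\nabla g(X)-f'(X)$ forces $\nabla g(X)-f'(X)=0$, i.e.\ $\nabla g(X)=f'(X)$, as claimed. The hypothesis $X\in\relinterior\domain f$ is what guarantees the eigenvalues of $X$ lie in the open interval on which $f'$ exists, so that Lemma~\ref{lem:dir-deriv}(b) applies in the first place.

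I do not expect any genuine obstacle here: the corollary is essentially a restatement of Lemma~\ref{lem:dir-deriv}(b) once the trace is recognised as the inner product. The only point requiring a word of justification is the Hermiticity of $f'(X)$, which is needed so that the adjoint in the definition of the inner product is harmless and the identification $\tr[f'(X)V]=\inp{f'(X)}{V}$ is exact rather than merely formal.
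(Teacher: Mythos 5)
Your proof is correct and follows exactly the paper's route: the paper likewise derives this corollary as a direct consequence of Lemma~\ref{lem:dir-deriv}(b) combined with the defining relation $\mathsf{D}g(X)[V] = \inp{\nabla g(X)}{V}$ for all $V\in\mathbb{H}^n$. Your additional remarks on the Hermiticity of $f'(X)$ and the nondegeneracy of the trace inner product simply make explicit the details the paper leaves as ``straightforward.''
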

\begin{cor}\label{cor:grad-tr-f-lin}
Let $\mathcal{A}:\mathbb{H}^n\rightarrow\mathbb{H}^m$ be an affine operator such that $\mathcal{A}(X) = \mathcal{L}(X) + C$ where $\mathcal{L}:\mathbb{H}^n\rightarrow\mathbb{H}^m$ is a linear operator and $C\in\mathbb{H}^m$ is a constant matrix. For $X\in\interior\domain h$, the gradient of $h(X) = \tr[f(\mathcal{A}(X))]$ is
\begin{equation*}
    \nabla h(X) = \mathcal{L}^\dag(f'(\mathcal{A}(X))).
\end{equation*}
\end{cor}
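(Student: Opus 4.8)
The plan is to compute the directional derivative $\mathsf{D}h(X)[V]$ for an arbitrary direction $V\in\mathbb{H}^n$ via the chain rule, rewrite it as an inner product against $V$, and then read off the gradient using the defining relation $\mathsf{D}h(X)[V] = \inp{\nabla h(X)}{V}$, exactly as in the preceding corollary.

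First I would exploit the affine structure of $\mathcal{A}$. Since $\mathcal{A}(X+tV) = \mathcal{L}(X+tV)+C = \mathcal{A}(X) + t\mathcal{L}(V)$, the constant $C$ drops out of the difference quotient and the inner map has directional derivative exactly $\mathcal{L}(V)$. Composing $X\mapsto\mathcal{A}(X)$ with $g=\tr[f(\wc)]$ and applying the chain rule together with Lemma~\ref{lem:dir-deriv}(b) then gives
\begin{equation*}
    \mathsf{D}h(X)[V] = \mathsf{D}g(\mathcal{A}(X))[\mathcal{L}(V)] = \tr[f'(\mathcal{A}(X))\,\mathcal{L}(V)] = \inp{f'(\mathcal{A}(X))}{\mathcal{L}(V)}.
\end{equation*}
Here I use that $X\in\relinterior\domain h$ guarantees $\mathcal{A}(X)\in\relinterior\domain f$, so that Lemma~\ref{lem:dir-deriv}(b) is indeed applicable at the point $\mathcal{A}(X)$.

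Next I would transfer $\mathcal{L}$ to the other argument of the inner product using the defining adjoint property $\inp{y}{\mathcal{L}(x)} = \inp{\mathcal{L}^\dag(y)}{x}$, which yields
\begin{equation*}
    \mathsf{D}h(X)[V] = \inp{\mathcal{L}^\dag(f'(\mathcal{A}(X)))}{V}.
\end{equation*}
Since this identity holds for every $V\in\mathbb{H}^n$, comparing with $\mathsf{D}h(X)[V] = \inp{\nabla h(X)}{V}$ forces $\nabla h(X) = \mathcal{L}^\dag(f'(\mathcal{A}(X)))$, as claimed. This argument is essentially routine once Lemma~\ref{lem:dir-deriv} is in hand; there is no genuine obstacle, and the only point requiring a moment's care is the bookkeeping that the constant $C$ contributes nothing to the derivative and that the domain condition places $\mathcal{A}(X)$ in the relative interior where $f'$ is well defined.
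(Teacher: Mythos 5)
Your proof is correct and follows the same route the paper intends: the paper presents this corollary as a direct consequence of the definition $\mathsf{D}h(X)[V] = \inp{\nabla h(X)}{V}$, combined with Lemma~\ref{lem:dir-deriv}(b) and the adjoint property of $\mathcal{L}$, which is exactly your chain-rule computation with the constant $C$ dropping out. Nothing is missing; your remark about the domain condition placing $\mathcal{A}(X)$ where $f'$ is defined matches the paper's implicit level of care.
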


\section{Proof for Proposition~\ref{prop:log-strong}}\label{sec:log-strong}

We will only show the proof for the negative log determinant function $\varphi(X)=\tr[f(X)]$ for $f(x)=-\log(x)$, as it generalizes the Burg entropy. Using the second derivative of the log determinant~\cite[Appendix A.4.3]{boyd2004convex}, $\varphi$ is $\mu$-strongly convex with respect to the Frobenius norm only if
\begin{equation*}
    \mathsf{D}^2 \varphi(X)[V, V] = \tr[X^{-1}VX^{-1}V] \geq \mu,
\end{equation*}
for all $X\in\mathcal{D}(\mathcal{H})_{++}$ and $V \in \{ V\in\mathcal{B}(\mathcal{H}) : \norm{V}_2 = 1 \}$. Note that for all matrices $X\in\mathcal{D}(\mathcal{H})_{++}$, $0\prec X\preceq\mathbb{I}$, and therefore $X^{-1}\succeq\mathbb{I}$ and $VX^{-1}V\succeq0$ for all Hermitian $V$. It then follows from the fact that the trace of the product of two positive semidefinite matrices is always non-negative that 
\begin{equation*}
    \mathsf{D}^2 \varphi(X)[V, V] \geq \tr[VX^{-1}V] \geq \tr[VV] = \norm{V}_2^2 = 1.
\end{equation*}
Therefore, $\mu=1$ satisfies the inequality, which concludes the proof.
}

\bibliographystyle{IEEEtran}
\bibliography{bibliofile}

\end{document}